\title{On the Computational Complexity of Minimal Cumulative Cost Graph Pebbling}
\author{
Jeremiah Blocki\inst{1}
\and
Samson Zhou\inst{2}
}
\institute{Department of Computer Science, Purdue University, West Lafayette, IN. 
\newline Email: {\tt jblocki@purdue.edu}.
\and
Department of Computer Science, Purdue University, West Lafayette, IN. 
\newline Email: {\tt samsonzhou@gmail.com}. 
}
\newcommand{\ignore}[1]{}
\newcommand{\COMMENTED}[1]{{}}
\def \cubicVC {\mdef{\mathtt{CubicVC}}}
\def \VC {\mdef{\mathtt{VC}}}
\def \ThreePartition {\mdef{\mathtt{3\-PARTITION}}}
\def \BTwoLC {\mdef{\mathtt{B2LC}}}
\def \minCC {\mdef{\mathtt{minCC}}}
\def \minST {\mdef{\mathtt{minST}}}
\def \minSST {\mdef{\mathtt{minSST}}}
\def \REDUCIBLE {\mdef{\mathtt{REDUCIBLE}}}
\def \MINREDUCIBLE {\mdef{\mathtt{minREDUCIBLE}}}
\newcommand{\mdef}[1]{{\ensuremath{#1}}\xspace}  
\newcommand{\superscript}[1]{\ensuremath{^{\mbox{\tiny{\textit{#1}}}}}\xspace}
\def \th {\superscript{th}}     
\def \etal{{\it et~al.}}
\newcommand{\ceil}[1]{\mdef{\left\lceil#1\right\rceil}}               
\newcommand{\node}[2]{\mdef{\langle#1, #2\rangle}}   
\def \indeg    {\mdef{\mathsf{indeg}}}
\def \sinks    {\mdef{\mathsf{sinks}}}
\def \depth    {\mdef{\mathsf{depth}}}
\def \parents  {\mdef{\mathsf{parents}}}
\renewcommand{\-}{\mbox{-}}     
\def \cc       {\mdef{\mathsf{cc}}}       
\def \ST     {\mdef{\mathsf{st}}}
\def \scrypt   {\mdef{\tt scrypt}}
\def \NP  {\mdef{\tt NP}}
\def \NPhard {\mdef{\mathtt{NP}\-\mathtt{Hard}}}
\def \NPcomplete {\mdef{\mathtt{NP}\-\mathtt{Complete}}}
\def \d        {\mdef{\delta}}                       
\newcommand{\namedref}[2]{\hyperref[#2]{#1~\ref*{#2}}}
\newcommand{\thmlab}[1]{\label{thm:#1}}
\newcommand{\thmref}[1]{\namedref{Theorem}{thm:#1}}
\newcommand{\lemmlab}[1]{\label{lemm:#1}}
\newcommand{\lemmref}[1]{\namedref{Lemma}{lemm:#1}}
\newcommand{\claimlab}[1]{\label{claim:#1}}
\newcommand{\claimref}[1]{\namedref{Claim}{claim:#1}}
\newcommand{\seclab}[1]{\label{sec:#1}}
\newcommand{\secref}[1]{\namedref{Section}{sec:#1}}
\newcommand{\applab}[1]{\label{app:#1}}
\newcommand{\appref}[1]{\namedref{Appendix}{app:#1}}
\newcommand{\figref}[1]{\namedref{Figure}{fig:#1}}
\newtheorem{Claim}[theorem]{Claim}
\newtheorem{fact}[theorem]{Fact}
\newenvironment{proofof}[1]{\noindent {\em Proof of #1.  }}{$\hfill\square$}
\newenvironment{remindertheorem}[1]{\medskip \noindent {\bf Reminder of  #1.  }\em}{}
\newenvironment{reminderlemma}[1]{\medskip \noindent {\bf Reminder of  #1.  }\em}{}
\newcommand{\peb}{\Pi} 
\newcommand{\Peb}{{\cal P}} 
\newcommand{\ppeb}{\peb^{\parallel}} 
\newcommand{\pPeb}{\Peb^{\parallel}}
\newcommand{\Ccc}{{cc}}
\newcommand{\Cspacetime}{{st}}
\def \pcc {\ppeb_\Ccc} 
\def \sst {\peb_{\Cspacetime}} 
\begin{document}
\maketitle
\begin{abstract}
We consider the computational complexity of finding a legal black pebbling of a DAG $G=(V,E)$ with minimum cumulative cost. A black pebbling is a sequence $P_0,\ldots, P_t \subseteq V$ of sets of nodes which must satisfy the following properties: $P_0 = \emptyset$ (we start off with no pebbles on $G$), $\sinks(G) \subseteq \bigcup_{j \leq t} P_j$ (every sink node was pebbled at some point) and $\parents\big(P_{i+1}\backslash P_i\big) \subseteq P_i$ (we can only place a new pebble on a node $v$ if all of $v$'s parents had a pebble during the last round). The cumulative cost of a pebbling $P_0,P_1,\ldots, P_t \subseteq V$ is $\cc(P) = \left| P_1\right| + \ldots + \left| P_t\right|$. The cumulative pebbling cost is an especially important security metric for data-independent memory hard functions, an important primitive for password hashing. Thus, an efficient (approximation) algorithm would be an invaluable tool for the cryptanalysis of password hash functions as it would provide an automated tool to establish tight bounds on the amortized space-time cost of computing the function.  We show that such a tool is unlikely to exist in the most general case. In particular, we prove the following results.
\begin{itemize}
\item It is $\NPhard$ to find a pebbling minimizing cumulative cost. 
\item The natural linear program relaxation for the problem has integrality gap $\tilde{O}(n)$, where $n$ is the number of nodes in $G$. We conjecture that the problem is hard to approximate.  
\item We show that a related problem, find the minimum size subset $S\subseteq V$ such that $\depth(G-S) \leq d$, is also $\NPhard$. In fact, under the Unique Games Conjecture there is no $(2-\epsilon)$-approximation algorithm.
\end{itemize}
\end{abstract}
\section{Introduction}\seclab{intro}
Given a directed acyclic graph (DAG) $G=(V,E)$ the goal of the (parallel) black pebbling game is to start with pebbles on some source nodes of $G$ and ultimately place pebbles on all sink nodes (not necessarily simultaneously). The game is played in rounds and we use $P_i \subseteq V$ to denote the set of currently pebbled nodes on round $i$. Initially all nodes are unpebbled, $P_0 = \emptyset$, and in each round $i \ge 1$ we may only include $v \in P_i$ if all of $v$'s parents were pebbled in the previous configuration ($\parents(v) \subseteq P_{i-1}$) or if $v$ was already pebbled in the last round ($v \in P_{i-1}$). In the sequential pebbling game we can place at most one new pebble on the graph in any round (i.e., $\left|P_i \backslash P_{i-1} \right| \leq 1)$, but in the parallel pebbling game no such restriction applies. 

Let $\pPeb_G$ (resp. $\Peb_G$) denote the set of all valid parallel (resp. sequential) pebblings of $G$. 
We define the cumulative cost (respectively space-time cost) of a pebbling $P=(P_1,\ldots,P_t) \in \pPeb_G$ to be $\cc(P)=|P_1|+\ldots+|P_t|$ (resp. $\ST(P) = t \times \max_{1 \le i \le t} \left| P_i\right|$), that is, the sum of the number of pebbles on the graph during every round.
The \emph{parallel} cumulative pebbling cost of $G$, denoted $\pcc(G)$ (resp. $\sst(G)= \min_{P \in \Peb_G} \ST(G)$), is the cumulative cost of the best legal pebbling of $G$. Formally,   \[ \pcc(G) = \min_{P \in \pPeb_G} \cc(P) \ , \mbox{~and} \qquad \qquad
\sst(G)=\min_{P\in\Peb_G}\ST(P)\ . \] 
 In this paper, we consider the computational complexity of  $\pcc(G)$, showing that the value is $\NPhard$ to compute. We also demonstrate that the natural linear programming relaxation for approximating $\pcc(G)$ has a large integrality gap and therefore any approximation algorithm likely requires more powerful techniques.
 
\subsection{Motivation}
The pebbling cost of a DAG $G$ is closely related to the cryptanalysis of data-independent memory hard functions (iMHF)~\cite{AS15}, a particularly useful primitive for password hashing~\cite{PHC,Argon2}. In particular, an efficient algorithm for (approximately) computing $\pcc(G)$ would enable us to automate the cryptanalysis of candidate iMHFs. The question is particularly timely as the Internet Research Task Force considers standardizing Argon2i~\cite{Argon2}, the winner of the password hashing competition~\cite{PHC}, despite recent attacks~\cite{CBS16,AlwenB16a,EC:AlwBloPie17} on the construction. Despite recent progress~\cite{AlwenB16b,EC:AlwBloPie17,TCC:BloZho17} the precise security of Argon2i and alternative constructions is poorly understood.
  
An iMHF is defined by a DAG $G$ (modeling data-dependencies) on $n$ nodes $V=\{1,\ldots,n\}$ and a compression function $H$ (usually modeled as a random oracle in theoretical analysis)\footnote{Because the data-dependencies in an iMHF are specified by a static graph, the induced memory access pattern does not depend on the secret input (e.g., password). This makes iMHFs resistant to side-channel attacks. Data-dependent memory hard functions (MHFs) like \scrypt~\cite{Per09} are potentially easier to construct, but they are potentially vulnerable to cache-timing attacks.  } . The label $\ell_1$ of the first node in the graph $G$ is simply the hash $H(x)$ of the input $x$. A vertex $ i > 1$ with parents $i_1<i_2<\cdots <i_\delta$ has label $\ell_i(x)=H(i,\ell_{i_1}(x),\ldots,\ell_{i_\delta}(x))$. The output value is the label $\ell_n$ of the last node in $G$. It is easy to see that any legal pebbling of $G$ corresponds to an algorithm computing the corresponding iMHF. Placing a new pebble on node $i$ corresponds to computing the label $\ell_i$ and keeping (resp. discarding) a pebble on node $i$ corresponds to storing the label in memory (resp. freeing memory). Alwen and Serbinenko~\cite{AS15} proved that in the parallel random oracle model (pROM) of computation, {\em any} algorithm evaluating such an iMHF could be reduced to a pebbling strategy with (approximately) the same cumulative memory cost.
 
It should be noted that {\em any} graph $G$ on $n$ nodes has a sequential pebbling strategy $P \in \Peb_G$ that finishes in $n$ rounds and has cost $\cc(P) \leq \ST(P) \leq n^2$.  Ideally, a good iMHF construction provides the guarantee that the amortized cost of computing the iMHF remains high (i.e., $\tilde{\Omega}\left( n^2 \right)$) even if the adversary evaluates many instances (e.g.,different  password guesses) of the iMHF. Unfortunately, neither large $\sst(G)$ nor large $\min_{P \in \pPeb_G} \ST(P)$, are sufficient to guarantee that $\pcc(G)$ is large~\cite{AS15}. More recently Alwen and Blocki~\cite{AlwenB16a} showed that Argon2i~\cite{Argon2}, the winner of the recently completed password hashing competition~\cite{PHC}, has much lower than desired amortized space-time complexity. In particular, $\pcc(G) \leq \tilde{O}\left(n^{1.75} \right)$. 

In the context of iMHFs, it is important to study $\pcc(G)$, the cumulative pebbling cost of a graph $G$, in addition to $\sst(G)$. Traditionally, pebbling strategies have been analyzed using space-time complexity or simply space complexity. While sequential space-time complexity may be a good model for the cost of computing a single-instance of an iMHF on a standard single-core machine (i.e., the costs incurred by the honest party during password authentication), it does not model the amortized costs of a (parallel) offline adversary who obtains a password hash value and would like to evaluate the hash function on {\em many} different inputs (e.g., password guesses) to crack the user's password~\cite{AS15,AlwenB16a}.  Unlike $\sst(G)$, $\pcc(G)$ models the {\em amortized} cost of evaluating a data-independent memory hard function on many instances~\cite{AS15,AlwenB16a}.
 
An efficient algorithm to (approximately) compute $\pcc(G)$ would be an incredible asset when developing and evaluating iMHFs. 
For example, the Argon2i designers argued that the Alwen-Blocki attack~\cite{AlwenB16a} was not particularly effective for practical values of $n$ (e.g., $n \leq 2^{20}$) because the constant overhead was too high \cite{Argon2}. 
However, they could not rule out the possibility that more efficient attacks might exist\footnote{Indeed, Alwen and Blocki~\cite{AlwenB16b} subsequently introduced heuristics to improve their attack and demonstrated that their attacks were effective even for smaller (practical) values of $n$ by simulating their attack against real Argon2i instances.}. 
As it stands, there is a huge gap between the best known upper/lower bounds on $\pcc(G)$ for Argon2i and for the new DRSample graph~\cite{CCS:AlwBloHar17}, since in all practical cases the ratio between the upper bound and the lower bound is at least $10^5$. 
An efficient algorithm to (approximately) compute $\pcc(G)$ would allow us to immediately resolve such debates by automatically  generating upper/lower bounds on the cost of computing the iMHF for each running time parameters ($n$) that one might select in practice. Alwen \etal~\cite{EC:AlwBloPie17} showed how to construct graphs $G$ with $\pcc(G) = \Omega\left(\frac{n^2}{\log n}  \right)$. 
This construction is essentially optimal in theory as results of Alwen and Blocki~\cite{AlwenB16a} imply that any constant indegree graph has $\pcc(G) = O\left(\frac{n^2 \log \log n}{\log n} \right)$. 
However, the exact constants one could obtain through a theoretical analysis are most-likely small. A proof that $\pcc(G) \geq \frac{10^{-6}\times n^2}{\log n}$ would be an underwhelming security guarantee in practice, where we may have $n \approx 10^6$. An efficient algorithm to compute $\pcc(G)$ would allow us to immediately determine whether these new constructions provide meaningful security guarantees in practice.

\subsection{Results} \seclab{results}
We provide a number of computational complexity results. 
Our primary contribution is a proof that the decision problem ``is $\pcc(G) \leq k$ for a positive integer $k \leq \frac{n(n+1)}{2}$'' is $\NPcomplete$.\footnote{\label{note1}Note that for any $G$ with $n$ nodes we have $\pcc(G) \leq  1+2+\ldots + n = \frac{n(n+1)}{2}$ since we can always pebble $G$ in topological order in $n$ steps if we never remove pebbles.  } In fact, our result holds even if the DAG $G$ has constant $\indeg$.\footnote{For practical reasons most iMHF candidates are based on a DAG $G$ with constant indegree.}  We also provide evidence that $\pcc(G)$ is hard to approximate. Thus, it is unlikely that it will be possible to automate the cryptanalysis process for iMHF candidates. In particular, we define a natural integer program to compute $\pcc(G)$ and consider its linear programming relaxation. We then show that the integrality gap is at least $\Omega\left(\frac{n}{\log n} \right)$ leading us to conjecture that it is hard to approximate $\pcc(G)$ within constant factors. We also give an example of a DAG $G$ on $n$ nodes with the property that {\em any} optimal pebbling (minimizing $\pcc$)  requires more than $n$ pebbling rounds. 
 
The computational complexity of several graph pebbling problems has been explored previously in various settings~\cite{gilbert1980pebbling,hertel2010pspace}. 
However, minimizing cumulative cost of a pebbling is a very different objective than minimizing the space-time cost or space. 
For example, consider a pebbling where the maximum number of pebbles used is significantly greater than the average number of pebbles used. 
Thus, we need fundamentally new ideas to construct appropriate gadgets for our reduction.\footnote{See additional discussion in \secref{relatedPebblingComplexity}.}
We first introduce a natural problem that arises from solving systems of linear equations, which we call \texttt{Bounded 2-Linear Covering} (\BTwoLC) and show that it is $\NPcomplete$. 
We then show that we can encode a $\BTwoLC$ instance as a graph pebbling problem thus proving that the decision version of cummulative graph pebbling is $\NPhard$.

In \secref{Reducible} we also investigate the computational complexity of determining how ``depth-reducible'' a DAG $G$ is showing that the problem is $\NPcomplete$ even if $G$ has constant indegree. A DAG $G$ is $(e,d)$-reducible if there exists a subset $S\subseteq V$ of size $|S| \leq e$ such that $\depth(G-S) < d$. That is, after removing nodes in the set $S$ from $G$, any remaining directed path has length less than $d$. If $G$ is not $(e,d)$-reducible, we say that it is $(e,d)$-depth robust. It is known that a graph has high cumulative cost (e.g., $\tilde{\Omega}\left(n^2 \right)$) if and only if the graph is highly depth robust (e.g., $e,d = \tilde{\Omega}\left( n \right)$)~\cite{AlwenB16a,EC:AlwBloPie17}. Our reduction from Vertex Cover preserves approximation hardness.\footnote{Note that when $d=0$ testing whether a graph $G$ is $(e,d)$ reducible is {\em equivalent} to asking whether $G$ has a vertex cover of size $e$. Our reduction establishes hardness for $d \gg 1$.}  Thus, assuming that $\mathtt{P} \ne \NP$ it is hard to $1.3$-approximate $e$, the minimum size of a set $S \subseteq V$ such that $\depth(G-S) < d$ ~\cite{dinur2005hardness}. Under the Unique Games Conjecture~\cite{khot2002power}, it is hard to $(2-\epsilon)$-approximate $e$ for any fixed $\epsilon>0$~\cite{khot2008vertex}. In fact, we show that the linear programming relaxation to the natural integer program to compute $e$ has an integrality gap of $\Omega(n/\log n)$ leading us to conjecture that it is hard to approximate $e$.


\section{Preliminaries}\seclab{prelim}
Given a directed acyclic graph (DAG) $G=(V,E)$ and a node $v \in V$ we use $\parents(v) = \{u~:~(u,v) \in E \} $ to denote the set of nodes $u$ with directed edges into node $v$ and we use $\indeg(v) = \left|\parents(v)\right|$ to denote the number of directed edges into node $v$. We use $\indeg(G) = \max_{v \in V} \indeg(v)$ to denote the maximum indegree of any node in $G$. For convenience, we use $\indeg$ instead of $\indeg(G)$ when $G$ is clear from context. We say that a node $v \in V$ with $\indeg(v) = 0$ is a source node and a node with no outgoing edges is a sink node. We use $\mathtt{sinks}(G)$ (resp. $\mathtt{sources}(G)$) to denote the set of all sink nodes (resp. source nodes) in $G$. We will use $n=|V|$ to denote the number of nodes in a graph, and for convenience we will assume that the nodes $V=\{1,2,3,\ldots, n\}$ are given in topological order (i.e., $1 \leq j < i \leq n$ implies that $(i,j) \notin E$).  We use $\depth(G)$ to denote the length of the longest directed path in $G$. Given a positive integer $k \geq 1$ we will use $[k]= \{1,2,\ldots, k\}$ to denote the set of all integers $1$ to $k$ (inclusive).

\begin{definition}
Given a DAG $G = (V,E)$ on $n$ nodes a legal pebbling of $G$ is a sequence of sets $P=\big(P_0,\ldots,P_t\big)$ such that:
\begin{enumerate}
\item
$P_0 = \emptyset$
\item
$\forall i>0$, $v \in P_{i}\backslash P_{i-1}$ we have $\parents(v) \subseteq P_{i-1}$
\item
$\forall v \in \mathtt{sinks}(G)$ $\exists 0 < j \leq t$ such that $v \in P_j$
\end{enumerate}
The cumulative cost of the pebbling $P$ is $\cc(P) = \sum_{i=1}^t \left| P_i \right|$, and the space-time cost is $\ST(P) = t \times \max_{0 < j \leq t} \left| P_i \right|$.
\end{definition}

The first condition states that we start with no pebbles on the graph. The second condition states that we can only add a new pebble on node $v$ during round $i$ if we already had pebbles on all of $v$'s parents during round $i-1$. Finally, the last condition states that every sink node must have been pebbled during {\em some} round. 

We use $\pPeb_G$ to denote the set of all legal pebblings, and we use $\Peb_G \subset \pPeb_G$ to denote the set of all sequential pebblings with the additional requirement that $\left| P_i\backslash P_{i-1}\right| \leq 1$ (i.e., we place at most one new pebble on the graph during ever round $i$). We use $\pcc(G) = \min_{P \in \pPeb_G} \cc(P)$ to denote the cumulative cost of the best legal pebbling. 
 
 \begin{definition}
 We say that a directed acyclic graph (DAG) $G=(V,E)$ is $(e,d)$-depth robust if $\forall S \subseteq V$ of size $|S| \leq e$ we have $\depth(G-S) \geq d$. If $G$ contains a set $S \subseteq V$ of size $|S| \leq e$ such that $\depth(G-S) \leq d$ then we say that $G$ is $(e,d)$-reducible. 
 \end{definition}
 
 \subsection*{Decision Problems}

 The decision problem $\minCC$ is defined as follows: \\
 {\noindent \bf Input:} a DAG $G$ on $n$ nodes and an integer $k < n(n+1)/2$. \footnote{See footnote \ref{note1}.}\newline
 {\noindent \bf Output:} {\em Yes}, if $\pcc(G) \leq k$; otherwise {\em No}.
\newline\noindent

 Given a constant $\d \ge 1$ we use $\minCC_\d$ to denote the above decision problem with the additional constraint that $\indeg(G) \leq \d$. It is clear that $\minCC \in \NP$ and $\minCC_\d\in \NP$ since it is easy to verify that a candidate pebbling $P$ is legal and that $\cc(P) \leq k$.  
One of our primary results is to show that the decision problems $\minCC$ and $\minCC_2$ are $\NPcomplete$. In fact, these results hold even if we require that the DAG $G$ has a single sink node. 
 
 The decision problem $\REDUCIBLE_d$ is defined as follows: \\
  {\noindent \bf Input:} a DAG $G$ on $n$ nodes and positive integers $e,d \leq n$. \newline
 {\noindent \bf Output:} {\em Yes}, if $G$ is $(e,d)$-reducible; otherwise {\em No}. 

We show that the decision problem $\REDUCIBLE_d$ is $\NPcomplete$ for {\em all} $d > 0$ by a reduction from Cubic Vertex Cover, defined below. Note that when $d=0$ $\REDUCIBLE_d$ {\em is} Vertex Cover. We use $\REDUCIBLE_{d,\d}$ to denote the decision problem with the additional constraint that $\indeg(G) \leq \d$.

 The decision problem $\VC$ (resp. $\cubicVC$) is defined as follows: \\
  {\noindent \bf Input:} a graph $G$ on $n$ vertices ($\cubicVC$: each with degree $3$) and a positive integer $k \leq\frac{n}{2}$. \newline
 {\noindent \bf Output:} {\em Yes}, if $G$ has a vertex cover of size at most $k$; otherwise {\em No}.
 
 To show that $\minCC$ is $\NPcomplete$ we introduce a new decision problem $\BTwoLC$. We will show that the decision problem $\BTwoLC$ is $\NPcomplete$ and we will give a reduction from $\BTwoLC$ to $\minCC$.

 The decision problem  \textsf{Bounded 2-Linear Covering} (\BTwoLC) is defined as follows: \\
{\noindent \bf Input:} an integer $n$, $k$ positive integers $0 \leq c_1,\ldots,c_k$, an integer $m \leq k$ and $k$ equations of the form $x_{\alpha_i}+c_i=x_{\beta_i}$, where $\alpha_i,\beta_i\in[n]$ and $i \in [k]$. We require that $\sum_{i=1}^k c_i \leq p(n)$ for some fixed polynomial $n$. \newline
 {\noindent \bf Output:} {\em Yes}, if we can find $mn$ integers  $x_{y,z} \geq 0$ (for each $1\le y\le m$ and $1\le z\le n$) such that for each $i \in [k]$ there exists $1 \leq y \leq m$ such that  $x_{y,\alpha_i}+c_i=x_{y,\beta_i}$ (that is the assignment $x_1,\ldots,x_n = x_{y,1},\ldots,x_{y,n}$ satisfies the $i$\th equation); otherwise {\em No}.
\section{Related Work} \seclab{related}
The sequential black pebbling game was introduced by Hewitt and Paterson~\cite{HP70}, and by Cook~\cite{Coo73}. It has been particularly useful in exploring space/time trade-offs for various problems like matrix multiplication~\cite{Tom78}, fast fourier transformations~\cite{SS78,Tom78}, integer multiplication~\cite{SS79a} and many others~\cite{Cha73,SS79b}. In cryptography it has been used to construct/analyze proofs of space \cite{C:DFKP15,RenD16}, proofs of work~\cite{C:DwoNaoWee05,ITCS:MahMorVad13} and memory-bound functions~\cite{DGN03} (functions that incur many expensive cache-misses~\cite{NDSS:AbadiBW03}). More recently, the black pebbling game has been used to analyze memory hard functions~e.g., \cite{AS15,AlwenB16a,EC:AlwBloPie17,TCC:AlwTac17}.

\subsection{Password Hashing and Memory Hard Functions}
Users often select low-entropy passwords which are vulnerable to offline attacks if an adversary obtains the cryptographic hash of the user's password.  Thus, it is desirable for a password hashing algorithm to involve a function $f(.)$ which is moderately expensive to compute. The goal is to ensure that, even if an adversary obtains the value $(username, f(pwd,salt), salt)$ (where $salt$ is some randomly chosen value), it is prohibitively expensive to evaluate $f(.,salt)$ for millions (billions) of different password guesses. {\sf PBKDF2} (Password Based Key Derivation Function 2) \cite{kaliski2000pkcs} is a popular moderately hard function which iterates the underlying cryptographic hash function many times (e.g., $2^{10}$). Unfortunately,  {\sf PBKDF2} is insufficient to protect against an adversary who can build customized hardware to evaluate the underlying hash function. The cost computing a hash function $H$ 
like SHA256 or MD5 on an Application Specific Integrated Circuit (ASIC) is dramatically smaller than the cost of computing $H$ on traditional hardware \cite{bitcoinBook}. 

\cite{NDSS:AbadiBW03}, observing that cache-misses are more egalitarian than computation, proposed the use of ``memory-bound'' functions for password hashing --- a function which maximizes the number of expensive cache-misses. 
Percival~\cite{Per09} observed that memory costs tend to be stable across different architectures and proposed the use of memory-hard functions (MHFs) for password hashing. 
Presently, there seems to be a consensus that memory hard functions are the `right tool' for constructing moderately expensive functions. 
Indeed, all entrants in the password hashing competition claimed some form of memory hardness~\cite{PHC}. 
As the name suggests, the cost of computing a memory hard function is primarily memory related (storing/retrieving data values). 
Thus, the cost of computing the function cannot be significantly reduced by constructing an ASIC. 
Percival~\cite{Per09} introduced a candidate memory hard function called \scrypt, but \scrypt is potentially vulnerable to side-channel attacks as its computation yields a memory access pattern that is data-dependent (i.e., depends on the secret input/password).  
Due to the recently completed password hashing competition~\cite{PHC} we have many candidate data-independent memory hard functions such as Catena~\cite{forler2013catena} and the winning contestant Argon2i-A~\cite{Argon2-PHC}.\footnote{The specification of Argon2i has changed several times. We use Argon2i-A to refer to the version of Argon2i from the password hashing competition, and we use Argon2i-B to refer to the version that is currently being considered for standardization by the Cryptography Form Research Group (CFRG) of the IRTF\cite{Argon2-IRTF}. }

\subsubsection{iMHFs and Graph Pebbling}
All known candidate iMHFs can be described using a DAG $G$ and a hash function $H$. Graph pebbling is a particularly useful as a tool to analyze the security of an iMHF~\cite{AS15,CBS16,forler2013catena}. A pebbling of $G$ naturally corresponds to an algorithm to compute the iMHF. Alwen and Serbinenko~\cite{AS15} showed that in the pROM model of computation, {\em any} algorithm to compute the iMHF corresponds to a pebbling of $G$.

\subsubsection{Measuring Pebbling Costs} 
In the past, MHF analysis has focused on space-time complexity~\cite{Per09,forler2013catena,AC:BonCorSch16}. For example, the designers of Catena~\cite{forler2013catena} showed that their DAG $G$ had high sequential space-time pebbling cost $\sst(G)$ and Boneh \etal~\cite{AC:BonCorSch16} showed that Argon2i-A and their own iMHF candidate iBH (``balloon hash'') have (sequential) space-time cost $\Omega\big(n^2\big)$. Alwen and Serbinenko~\cite{AS15} observed that these guarantees are insufficient for two reasons: (1) the adversary may be parallel, and (2) the adversary might amortize his costs over multiple iMHF instances (e.g., multiple password guesses). Indeed, there are now multiple known attacks on Catena~\cite{BK15,AS15,AlwenB16a}. Alwen and Blocki~\cite{AlwenB16a,AlwenB16b} gave attacks on Argon2i-A, Argon2i-B, iBH, and Catena with lower than desired amortized space-time cost --- $\pcc(G) \leq O\big(n^{1.8}\big)$ for Argon2i-B, $\pcc(G) \leq \tilde{O}\big(n^{1.75}\big)$ for Argon2i-A and iBH and $\pcc(G) \leq O\big(n^{5/3}\big)$ for Catena. This motivates the need to study cumulative cost $\pcc$ instead of space-time cost since amortized space-time complexity approaches $\pcc$ as the number of iMHF instances being computed increases.
  
  Alwen \etal~\cite{EC:AlwBloPie17} recently constructed a constant indegree graph $G$ with $\pcc(G) =  \Omega\big(\frac{n^2}{\log n}\big)$. From a theoretical standpoint, this is essentially optimal as any constant $\indeg$ DAG has $\pcc = O\big( \frac{n^2 \log \log n}{\log n}\big)$~\cite{AlwenB16a}, but from a practical standpoint the critically important constants terms in the lower bound are not well understood. 

Ren and Devedas~\cite{TCC:RenDev17} recently proposed an alternative metric MHFs called bandwidth hardness. The key distinction between bandwidth hardness and cumulative pebbling cost is that bandwidth hardness attempts to approximate {\em energy costs}, while cumulative pebbling cost attempts to approximate amortized capital costs (i.e., the cost of all of the DRAM chips divided by the number of MHF instances that can be computed before the DRAM chip fails). In this paper we focus on the cumulative pebbling cost metric as we expect amortized capital costs to dominate for sufficiently large $n$. In particular, bandwidth costs scale linearly with the running time $n$ (at best), while cumulative pebbling costs can scale quadratically with $n$.

\subsection{Computational Complexity of Pebbling} \seclab{relatedPebblingComplexity}
The computational complexity of various graph pebbling has been explored previously in different settings~\cite{gilbert1980pebbling,hertel2010pspace}. Gilbert \etal~\cite{gilbert1980pebbling} focused on space-complexity of the black-pebbling game. Here, the goal is to find a  pebbling which minimizes the total number of pebbles on the graph at any point in time (intuitively this corresponds to minimizing the maximum space required during computation of the associated function). Gilbert \etal~\cite{gilbert1980pebbling} showed that this problem is PSPACE complete by reducing from the truly quantified boolean formula (TQBF) problem. 

The optimal (space-minimizing) pebbling of the graphs from the reduction of Gilbert \etal~\cite{gilbert1980pebbling} often require exponential time. By contrast, observe that $\minCC \in NP$ because any DAG $G$ with $n$ nodes this algorithm has a pebbling $P$ with $\cc(P) \leq \ST(P) \leq n^2$. Thus, if we are minimizing $\cc$ or $\ST$ cost, the optimal pebbling of $G$ will trivially never require more than $n^2$ steps. Thus, we need different tools to analyze the computational complexity of the problem of finding a pebbling with low cumulative cost. 

In \appref{SpaceTime}, we show that the optimal pebbling from~\cite{gilbert1980pebbling} does take polynomial time if the TQBF formula only uses existential quantifiers (i.e., if we reduce from 3SAT). Thus, the reduction of Gilbert \etal~\cite{gilbert1980pebbling} can also be extended to show that it is $\NPcomplete$ to check whether a DAG $G$ admits a pebbling $P$ with $\ST(P) \leq k$ for some parameter $k$. The reduction, which simply appends a long chain to the original graph, exploits the fact that if we increase space-usage even temporarily we will dramatically increase $\ST$ cost. However, this reduction does not extend to cumulative cost because the penalty for temporarily placing large number of pebbles can be quite small as we do not keep these pebbles on the graph for a long time. 
\section{NP-Hardness of \minCC}
In this section we prove that $\minCC$ is $\NPcomplete$ by reduction from $\BTwoLC$. 
Showing that $\minCC \in \NP$ is straightforward so we will focus on proving that the decision problem is $\NPhard$. We first provide some intuition about the reduction. 

Recall that a \BTwoLC instance consists of $n$ variables $x_1,\ldots,x_n$, and $k$ equations of the form $x_{\alpha_i}+c_i=x_{\beta_i}$, where $\alpha_i,\beta_i\in[n]$, $i\in[k]$, and each $c_i \leq p(n)$ is a positive integer bounded by some polynomial in $n$. The goal is to determine whether there exist $m$ different variable assignments such that each equation is satisfied by {\em at least one} of the $m$ assignments. Formally, the goal is to decide if there exists a set of $m < k$ variable assignments: $x_{y,z} \geq 0$ for each $1\le y\le m$ and $1\le z\le n$ so that for each $i \in [k]$ there exists $y \in [m]$ such that  $x_{y,\alpha_i}+c_i=x_{y,\beta_i}$ --- that is the $i$\th equation $x_{\alpha_i}+c_i=x_{\beta_i}$ is satisfied by the $y$\th variable  assignment  $x_{y,1},\ldots, x_{y,n}$. 
For example, if $k=2$ and the equations are $x_1 + 1 = x_2$ and $x_2 + 2 = x_3$, then $m=1$ suffices to satisfy all the equations. 
On the other hand, if $x_1 + 1 = x_2$ and $x_1 + 2 = x_2$, then we require $m\ge2$ since the equations are no longer independent. 
Observe that for $m=1$, $\BTwoLC$ seeks a single satisfying assignment, whereas for $m\ge k$, each equation can be satisfied by a separate assignment of the variables (specifically, the $i$\th assignment is all zeroes except $x_{\beta_i}=c_i$).

Suppose we are given an instance of \BTwoLC. 
We shall construct a $\minCC$ instance $G_{\BTwoLC}$ in such a way that the optimal pebbling of $G_{\BTwoLC}$ has ``low'' cost if the instance of \BTwoLC is satisfiable and otherwise, has ``high'' cost. The graph $\BTwoLC$ will be constructed from three different types of gadgets: $\tau$ gadgets $C_i^1,\ldots,C_i^\tau$ for each variable $x_i$, a gadget $E_i$ for each equation and a ``$m$-assignments'' gadget $M_i$ for each variable $x_i$. Here $\tau$ is a parameter we shall set to create a gap between the pebbling costs of satisfiable and unsatisfiable instances of \BTwoLC. Each gadget is described in more detail below.

\paragraph{Variable Gadgets} Our first gadget is a chain of length $c=\sum c_i$ so that each node is connected to the previous node, and can only be pebbled if there exists a pebble on the previous node in the previous step, such as in Figure~\ref{fig:chain}. For each variable $x_i$ in our \BTwoLC instance we will add $\tau$ copies of our chain gadget $C_i^1,\ldots,C_i^\tau$. 
Formally, for each $j \in [\tau]$ the chain gadget $C_i^j$ consists of $c$ vertices $v_i^{j,1},\ldots,v_i^{j,c}$ with directed edges  $\left(v_i^{j,z},v_i^{j,z+1}\right)$ for each $z < c$. 
We will later add a gadget to ensure that we must walk a pebble down each of these chains $m$ different times and that in any optimal pebbling $P \in \pPeb_{G_{\BTwoLC}}$ (with $\cc(P) = \pcc\left(G_{\BTwoLC}\right)$) the walks on each chain gadget $C_i^1,\ldots,C_i^\tau$ are synchronized e.g., for each pebbling round $y$ and for each $z \leq c$ we have $v_i^{j,z} \in P_y \leftrightarrow \{v_i^{1,z},\ldots v_i^{\tau,z} \} \subseteq P_y $. 
Intuitively, each {\em time} at which we begin walking a pebble down these chains will correspond to an assignment of the \BTwoLC variable $x_i$. Hence, it suffices to have $c=\sum c_i$ nodes in the chain.

\begin{figure}[H]
\centering
\begin{tikzpicture}[scale=1.7]
\node at (1.5cm,0){$C_i^j:$};
\draw (2cm,0) circle (0.2cm);
\node at (2cm,0){$v_i^{j,1}$};
\draw (3cm,0) circle (0.2cm);
\node at (3cm,0){$v_i^{j,2}$};
\draw (4cm,0) circle (0.2cm);
\node at (4cm,0){$v_i^{j,3}$};
\node at (5cm,0){$\ldots$};
\draw (6cm,0) circle (0.2cm);
\node at (6cm,0){$v_i^{j,c}$};
\draw[->, black] (2.3cm,0cm) -- (2.7cm,0cm);
\draw[->, black] (3.3cm,0cm) -- (3.7cm,0cm);
\draw[->, black] (4.3cm,0cm) -- (4.7cm,0cm);
\draw[->, black] (5.3cm,0cm) -- (5.7cm,0cm);
\end{tikzpicture}
\caption{Example variable gadget $C_i^j$ of length $c=\sum c_i$. $G_{\BTwoLC}$ replicates this gadget $\tau$ times: $C_i^1,\ldots,C_i^\tau$. Each of the $\tau$ copies behaves the same.}\label{fig:chain}
\end{figure}
 

\paragraph{Equation Gadget} 
For the $i$\th equation $x_{\alpha_i}+c_i=x_{\beta_i}$, the gadget $E_i$ is a chain of length $c-c_i$. 
For each $j \in [\tau]$ we connect the equation gadget $E_i$ to each of the variable gadgets $C_{\alpha_i}^j$ and $C_{\beta_i}^j$ as follows: the $a$\th node $e_j^a$ in chain $E_j$ has incoming edges from vertices $v_{\alpha_i}^{l,a}$ and $v_{\beta_i}^{l,a+c_i}$ for all $1\le l\le\tau$, as demonstrated in Figure~\ref{fig:eqn}.
To pebble the equation gadget, the corresponding variable gadgets must be pebbled synchronously, distance $c_i$ apart.

\begin{figure}[H]
\centering
\begin{tikzpicture}[scale=1.7]
\node at (1.5cm,0){$C_{1}^j:$};
\draw (2cm,0) circle (0.2cm);
\node at (2cm,0){$v_{1}^{j,1}$};
\draw (3cm,0) circle (0.2cm);
\node at (3cm,0){$v_{1}^{j,2}$};
\draw (4cm,0) circle (0.2cm);
\node at (4cm,0){$v_{1}^{j,3}$};
\draw (5cm,0) circle (0.2cm);
\node at (5cm,0){$\ldots$};
\draw (6cm,0) circle (0.2cm);
\node at (6cm,0){$v_{1}^{j,c}$};
\draw[->, black] (2.3cm,0cm) -- (2.7cm,0cm);
\draw[->, black] (3.3cm,0cm) -- (3.7cm,0cm);
\draw[->, black] (4.3cm,0cm) -- (4.7cm,0cm);
\draw[->, black] (5.3cm,0cm) -- (5.7cm,0cm);

\node at (1.5cm,0.5cm){$C_{3}^j:$};
\draw (2cm,0.5cm) circle (0.2cm);
\node at (2cm,0.5cm){$v_{3}^{j,1}$};
\draw (3cm,0.5cm) circle (0.2cm);
\node at (3cm,0.5cm){$v_{3}^{j,2}$};
\draw (4cm,0.5cm) circle (0.2cm);
\node at (4cm,0.5cm){$v_{3}^{j,3}$};
\draw (5cm,0.5cm) circle (0.2cm);
\node at (5cm,0.5cm){$\ldots$};
\draw (6cm,0.5cm) circle (0.2cm);
\node at (6cm,0.5cm){$v_{3}^{j,c}$};
\draw[->, black] (2.3cm,0.5cm) -- (2.7cm,0.5cm);
\draw[->, black] (3.3cm,0.5cm) -- (3.7cm,0.5cm);
\draw[->, black] (4.3cm,0.5cm) -- (4.7cm,0.5cm);
\draw[->, black] (5.3cm,0.5cm) -- (5.7cm,0.5cm);

\node at (1cm,1.2cm){Gadget $E_i$};
\draw (3cm,1.2cm) circle (0.23cm);
\node at (3cm,1.2cm){$e_i^1$};
\draw (4cm,1.2cm) circle (0.23cm);
\node at (4cm,1.2cm){$\ldots$};
\draw (5cm,1.2cm) circle (0.23cm);
\node at (5cm,1.2cm){$e_i^{c-2}$};
\draw[->, black] (3.3cm,1.2cm) -- (3.7cm,1.2cm);
\draw[->, black] (4.3cm,1.2cm) -- (4.7cm,1.2cm);
\draw[->, black] (2.3cm,0cm) -- (2.75cm,1.15cm);
\draw[->, black] (3.7cm,0.5cm) -- (3.25cm,1.15cm);
\draw[->, black] (3.3cm,0cm) -- (3.75cm,1.15cm);
\draw[->, black] (4.7cm,0.5cm) -- (4.25cm,1.15cm);
\draw[->, black] (4.3cm,0cm) -- (4.75cm,1.15cm);
\draw[->, black] (5.7cm,0.5cm) -- (5.25cm,1.15cm);
\end{tikzpicture}
\caption{The gadget $E_i$ for equation $x_3+2=x_1$. The example shows how $E_i$ is connected to the variable gadgets $C_1^j$ and $C_3^j$ for each $j \in [\tau]$.}\label{fig:eqn}
\end{figure}

Intuitively, if the equation $x_{\alpha}+c_i=x_{\beta}$ is satisfied by the $j$\th assignment, then on the $j$\th time we walk pebbles down the chain $x_{\alpha}$ and $x_{\beta}$, the pebbles on each chain will be synchronized (i.e., when we have a pebble on $v_{\alpha}^{l,a}$, the $a$\th link in the chain representing $x_{\alpha}$ we will have a pebble on the node $v_{\beta}^{l,a+c_i}$ during the same round) so that we can pebble all of the nodes in the equation gadget, such as in Figure \ref{fig:eqn:pebbling}. On the other hand, if the pebbles on each chain are not synchronized appropriately, we cannot pebble the equation gadget. Finally, we create a single sink node linked from each of the $k$ equation chains, which can only be pebbled if all equation nodes are pebbled. 

\begin{figure}[!htb]
\begin{subfigure}
\centering
\begin{tikzpicture}[scale=1]
\node at (1.2cm,0){$C_{1}^j:$};
\draw (2cm,0) circle (0.2cm);
\draw (3cm,0) circle (0.2cm);
\draw (4cm,0) circle (0.2cm);
\draw (5cm,0) circle (0.2cm);
\draw (6cm,0) circle (0.2cm);
\draw[->, black] (2.3cm,0cm) -- (2.7cm,0cm);
\draw[->, black] (3.3cm,0cm) -- (3.7cm,0cm);
\draw[->, black] (4.3cm,0cm) -- (4.7cm,0cm);
\draw[->, black] (5.3cm,0cm) -- (5.7cm,0cm);

\node at (1.2cm,0.5cm){$C_{3}^j:$};
\filldraw[shading=radial,inner color=white, outer color=gray!75, opacity=0.2] (2cm,0.5cm) circle (0.2cm);
\draw (2cm,0.5cm) circle (0.2cm);
\draw (3cm,0.5cm) circle (0.2cm);
\draw(4cm,0.5cm) circle (0.2cm);
\draw (5cm,0.5cm) circle (0.2cm);
\draw (6cm,0.5cm) circle (0.2cm);
\draw[->, black] (2.3cm,0.5cm) -- (2.7cm,0.5cm);
\draw[->, black] (3.3cm,0.5cm) -- (3.7cm,0.5cm);
\draw[->, black] (4.3cm,0.5cm) -- (4.7cm,0.5cm);
\draw[->, black] (5.3cm,0.5cm) -- (5.7cm,0.5cm);

\node at (1cm,1.2cm){Time step $1$};
\draw (3cm,1.2cm) circle (0.23cm);
\draw (4cm,1.2cm) circle (0.23cm);
\draw (5cm,1.2cm) circle (0.23cm);
\draw[->, black] (3.3cm,1.2cm) -- (3.7cm,1.2cm);
\draw[->, black] (4.3cm,1.2cm) -- (4.7cm,1.2cm);
\draw[->, black] (2.3cm,0cm) -- (2.75cm,1.15cm);
\draw[->, black] (3.7cm,0.5cm) -- (3.25cm,1.15cm);
\draw[->, black] (3.3cm,0cm) -- (3.75cm,1.15cm);
\draw[->, black] (4.7cm,0.5cm) -- (4.25cm,1.15cm);
\draw[->, black] (4.3cm,0cm) -- (4.75cm,1.15cm);
\draw[->, black] (5.7cm,0.5cm) -- (5.25cm,1.15cm);
\end{tikzpicture}
\end{subfigure}
\begin{subfigure}
\centering
\begin{tikzpicture}[scale=1]
\node at (1.2cm,0){$C_{1}^j:$};
\draw (2cm,0) circle (0.2cm);
\draw (3cm,0) circle (0.2cm);
\draw (4cm,0) circle (0.2cm);
\draw (5cm,0) circle (0.2cm);
\draw (6cm,0) circle (0.2cm);
\draw[->, black] (2.3cm,0cm) -- (2.7cm,0cm);
\draw[->, black] (3.3cm,0cm) -- (3.7cm,0cm);
\draw[->, black] (4.3cm,0cm) -- (4.7cm,0cm);
\draw[->, black] (5.3cm,0cm) -- (5.7cm,0cm);

\node at (1.2cm,0.5cm){$C_{3}^j:$};
\draw (2cm,0.5cm) circle (0.2cm);
\filldraw[shading=radial,inner color=white, outer color=gray!75, opacity=0.2](3cm,0.5cm) circle (0.2cm);
\draw (3cm,0.5cm) circle (0.2cm);
\draw (4cm,0.5cm) circle (0.2cm);
\draw (5cm,0.5cm) circle (0.2cm);
\draw (6cm,0.5cm) circle (0.2cm);
\draw[->, black] (2.3cm,0.5cm) -- (2.7cm,0.5cm);
\draw[->, black] (3.3cm,0.5cm) -- (3.7cm,0.5cm);
\draw[->, black] (4.3cm,0.5cm) -- (4.7cm,0.5cm);
\draw[->, black] (5.3cm,0.5cm) -- (5.7cm,0.5cm);

\node at (1cm,1.2cm){Time step $2$};
\draw (3cm,1.2cm) circle (0.23cm);
\draw (4cm,1.2cm) circle (0.23cm);
\draw (5cm,1.2cm) circle (0.23cm);
\draw[->, black] (3.3cm,1.2cm) -- (3.7cm,1.2cm);
\draw[->, black] (4.3cm,1.2cm) -- (4.7cm,1.2cm);
\draw[->, black] (2.3cm,0cm) -- (2.75cm,1.15cm);
\draw[->, black] (3.7cm,0.5cm) -- (3.25cm,1.15cm);
\draw[->, black] (3.3cm,0cm) -- (3.75cm,1.15cm);
\draw[->, black] (4.7cm,0.5cm) -- (4.25cm,1.15cm);
\draw[->, black] (4.3cm,0cm) -- (4.75cm,1.15cm);
\draw[->, black] (5.7cm,0.5cm) -- (5.25cm,1.15cm);
\end{tikzpicture}
\end{subfigure}

\begin{subfigure}
\centering
\begin{tikzpicture}[scale=1]
\node at (1.2cm,0){$C_{1}^j:$};
\filldraw[shading=radial,inner color=white, outer color=gray!75, opacity=0.2](2cm,0) circle (0.2cm);
\draw (2cm,0) circle (0.2cm);
\draw (3cm,0) circle (0.2cm);
\draw (4cm,0) circle (0.2cm);
\draw (5cm,0) circle (0.2cm);
\draw (6cm,0) circle (0.2cm);
\draw[->, black] (2.3cm,0cm) -- (2.7cm,0cm);
\draw[->, black] (3.3cm,0cm) -- (3.7cm,0cm);
\draw[->, black] (4.3cm,0cm) -- (4.7cm,0cm);
\draw[->, black] (5.3cm,0cm) -- (5.7cm,0cm);

\node at (1.2cm,0.5cm){$C_{3}^j:$};
\draw (2cm,0.5cm) circle (0.2cm);
\draw (3cm,0.5cm) circle (0.2cm);
\filldraw[shading=radial,inner color=white, outer color=gray!75, opacity=0.2](4cm,0.5cm) circle (0.2cm);
\draw (4cm,0.5cm) circle (0.2cm);
\draw (5cm,0.5cm) circle (0.2cm);
\draw (6cm,0.5cm) circle (0.2cm);
\draw[->, black] (2.3cm,0.5cm) -- (2.7cm,0.5cm);
\draw[->, black] (3.3cm,0.5cm) -- (3.7cm,0.5cm);
\draw[->, black] (4.3cm,0.5cm) -- (4.7cm,0.5cm);
\draw[->, black] (5.3cm,0.5cm) -- (5.7cm,0.5cm);

\node at (1cm,1.2cm){Time step $3$};
\draw (3cm,1.2cm) circle (0.23cm);
\draw (4cm,1.2cm) circle (0.23cm);
\draw (5cm,1.2cm) circle (0.23cm);
\draw[->, black] (3.3cm,1.2cm) -- (3.7cm,1.2cm);
\draw[->, black] (4.3cm,1.2cm) -- (4.7cm,1.2cm);
\draw[->, black] (2.3cm,0cm) -- (2.75cm,1.15cm);
\draw[->, black] (3.7cm,0.5cm) -- (3.25cm,1.15cm);
\draw[->, black] (3.3cm,0cm) -- (3.75cm,1.15cm);
\draw[->, black] (4.7cm,0.5cm) -- (4.25cm,1.15cm);
\draw[->, black] (4.3cm,0cm) -- (4.75cm,1.15cm);
\draw[->, black] (5.7cm,0.5cm) -- (5.25cm,1.15cm);
\end{tikzpicture}
\end{subfigure}
\begin{subfigure}
\centering
\begin{tikzpicture}[scale=1]
\node at (1.2cm,0){$C_{1}^j:$};
\draw (2cm,0) circle (0.2cm);
\filldraw[shading=radial,inner color=white, outer color=gray!75, opacity=0.2] (3cm,0) circle (0.2cm);
\draw(3cm,0) circle (0.2cm);
\draw (4cm,0) circle (0.2cm);
\draw (5cm,0) circle (0.2cm);
\draw (6cm,0) circle (0.2cm);
\draw[->, black] (2.3cm,0cm) -- (2.7cm,0cm);
\draw[->, black] (3.3cm,0cm) -- (3.7cm,0cm);
\draw[->, black] (4.3cm,0cm) -- (4.7cm,0cm);
\draw[->, black] (5.3cm,0cm) -- (5.7cm,0cm);

\node at (1.2cm,0.5cm){$C_{3}^j:$};
\draw (2cm,0.5cm) circle (0.2cm);
\draw (3cm,0.5cm) circle (0.2cm);
\draw (4cm,0.5cm) circle (0.2cm);
\filldraw[shading=radial,inner color=white, outer color=gray!75, opacity=0.2](5cm,0.5cm) circle (0.2cm);
\draw(5cm,0.5cm) circle (0.2cm);
\draw (6cm,0.5cm) circle (0.2cm);
\draw[->, black] (2.3cm,0.5cm) -- (2.7cm,0.5cm);
\draw[->, black] (3.3cm,0.5cm) -- (3.7cm,0.5cm);
\draw[->, black] (4.3cm,0.5cm) -- (4.7cm,0.5cm);
\draw[->, black] (5.3cm,0.5cm) -- (5.7cm,0.5cm);

\node at (1cm,1.2cm){Time step $4$};
\filldraw[shading=radial,inner color=white, outer color=gray!75, opacity=0.2](3cm,1.2cm) circle (0.23cm);
\draw(3cm,1.2cm) circle (0.23cm);
\draw (4cm,1.2cm) circle (0.23cm);
\draw (5cm,1.2cm) circle (0.23cm);
\draw[->, black] (3.3cm,1.2cm) -- (3.7cm,1.2cm);
\draw[->, black] (4.3cm,1.2cm) -- (4.7cm,1.2cm);
\draw[->, black] (2.3cm,0cm) -- (2.75cm,1.15cm);
\draw[->, black] (3.7cm,0.5cm) -- (3.25cm,1.15cm);
\draw[->, black] (3.3cm,0cm) -- (3.75cm,1.15cm);
\draw[->, black] (4.7cm,0.5cm) -- (4.25cm,1.15cm);
\draw[->, black] (4.3cm,0cm) -- (4.75cm,1.15cm);
\draw[->, black] (5.7cm,0.5cm) -- (5.25cm,1.15cm);
\end{tikzpicture}
\end{subfigure}

\begin{subfigure}
\centering
\begin{tikzpicture}[scale=1]
\node at (1.2cm,0){$C_{1}^j:$};
\draw (2cm,0) circle (0.2cm);
\draw (3cm,0) circle (0.2cm);
\filldraw[shading=radial,inner color=white, outer color=gray!75, opacity=0.2](4cm,0) circle (0.2cm);
\draw(4cm,0) circle (0.2cm);
\draw (5cm,0) circle (0.2cm);
\draw (6cm,0) circle (0.2cm);
\draw[->, black] (2.3cm,0cm) -- (2.7cm,0cm);
\draw[->, black] (3.3cm,0cm) -- (3.7cm,0cm);
\draw[->, black] (4.3cm,0cm) -- (4.7cm,0cm);
\draw[->, black] (5.3cm,0cm) -- (5.7cm,0cm);

\node at (1.2cm,0.5cm){$C_{3}^j:$};
\draw (2cm,0.5cm) circle (0.2cm);
\draw (3cm,0.5cm) circle (0.2cm);
\draw (4cm,0.5cm) circle (0.2cm);
\draw (5cm,0.5cm) circle (0.2cm);
\filldraw[shading=radial,inner color=white, outer color=gray!75, opacity=0.2](6cm,0.5cm) circle (0.2cm);
\draw(6cm,0.5cm) circle (0.2cm);
\draw[->, black] (2.3cm,0.5cm) -- (2.7cm,0.5cm);
\draw[->, black] (3.3cm,0.5cm) -- (3.7cm,0.5cm);
\draw[->, black] (4.3cm,0.5cm) -- (4.7cm,0.5cm);
\draw[->, black] (5.3cm,0.5cm) -- (5.7cm,0.5cm);

\node at (1cm,1.2cm){Time step $5$};
\draw (3cm,1.2cm) circle (0.23cm);
\filldraw[shading=radial,inner color=white, outer color=gray!75, opacity=0.2](4cm,1.2cm) circle (0.23cm);
\draw(4cm,1.2cm) circle (0.23cm);
\draw (5cm,1.2cm) circle (0.23cm);
\draw[->, black] (3.3cm,1.2cm) -- (3.7cm,1.2cm);
\draw[->, black] (4.3cm,1.2cm) -- (4.7cm,1.2cm);
\draw[->, black] (2.3cm,0cm) -- (2.75cm,1.15cm);
\draw[->, black] (3.7cm,0.5cm) -- (3.25cm,1.15cm);
\draw[->, black] (3.3cm,0cm) -- (3.75cm,1.15cm);
\draw[->, black] (4.7cm,0.5cm) -- (4.25cm,1.15cm);
\draw[->, black] (4.3cm,0cm) -- (4.75cm,1.15cm);
\draw[->, black] (5.7cm,0.5cm) -- (5.25cm,1.15cm);
\end{tikzpicture}
\end{subfigure}
\begin{subfigure}
\centering
\begin{tikzpicture}[scale=1]
\node at (1.2cm,0){$C_{1}^j:$};
\draw (2cm,0) circle (0.2cm);
\draw (3cm,0) circle (0.2cm);
\draw (4cm,0) circle (0.2cm);
\filldraw[shading=radial,inner color=white, outer color=gray!75, opacity=0.2](5cm,0) circle (0.2cm);
\draw (5cm,0) circle (0.2cm);
\draw (6cm,0) circle (0.2cm);
\draw[->, black] (2.3cm,0cm) -- (2.7cm,0cm);
\draw[->, black] (3.3cm,0cm) -- (3.7cm,0cm);
\draw[->, black] (4.3cm,0cm) -- (4.7cm,0cm);
\draw[->, black] (5.3cm,0cm) -- (5.7cm,0cm);

\node at (1.2cm,0.5cm){$C_{3}^j:$};
\draw (2cm,0.5cm) circle (0.2cm);
\draw (3cm,0.5cm) circle (0.2cm);
\draw (4cm,0.5cm) circle (0.2cm);
\draw (5cm,0.5cm) circle (0.2cm);
\draw (6cm,0.5cm) circle (0.2cm);
\draw[->, black] (2.3cm,0.5cm) -- (2.7cm,0.5cm);
\draw[->, black] (3.3cm,0.5cm) -- (3.7cm,0.5cm);
\draw[->, black] (4.3cm,0.5cm) -- (4.7cm,0.5cm);
\draw[->, black] (5.3cm,0.5cm) -- (5.7cm,0.5cm);

\node at (1cm,1.2cm){Time step $6$};
\draw (3cm,1.2cm) circle (0.23cm);
\draw (4cm,1.2cm) circle (0.23cm);
\filldraw[shading=radial,inner color=white, outer color=gray!75, opacity=0.2](5cm,1.2cm) circle (0.23cm);
\draw(5cm,1.2cm) circle (0.23cm);
\draw[->, black] (3.3cm,1.2cm) -- (3.7cm,1.2cm);
\draw[->, black] (4.3cm,1.2cm) -- (4.7cm,1.2cm);
\draw[->, black] (2.3cm,0cm) -- (2.75cm,1.15cm);
\draw[->, black] (3.7cm,0.5cm) -- (3.25cm,1.15cm);
\draw[->, black] (3.3cm,0cm) -- (3.75cm,1.15cm);
\draw[->, black] (4.7cm,0.5cm) -- (4.25cm,1.15cm);
\draw[->, black] (4.3cm,0cm) -- (4.75cm,1.15cm);
\draw[->, black] (5.7cm,0.5cm) -- (5.25cm,1.15cm);
\end{tikzpicture}
\end{subfigure}
\caption{A pebbling of the equation gadget $x_3+2=x_1$ (at the top) using the satisfying assignment $x_3=1$ and $x_1=3$.}\label{fig:eqn:pebbling}
\end{figure}

We will use another gadget, the \emph{assignment gadget}, to ensure that in any legal pebbling, we need to ``walk'' a pebble down each chain $C_i^j$ on $m$ different times. 
Each node $v_i^{j,z}$ of a variable gadget in a satisfiable \BTwoLC instance has a pebble on it during exactly $m$ rounds. 
On the other hand, the assignment gadget ensures that for any unsatisfiable \BTwoLC instance, there exists some $i \leq n$ and $z \leq c$ such that each of the nodes $v_i^{1,z},\ldots,v_i^{\tau,z}$ are pebbled during at least $m+1$ rounds.

We will tune the parameter $\tau$ to ensure that any such pebbling is more expensive, formalized in \claimref{claim:sync} in \appref{MissingProofs}.

\paragraph{$m$ assignments gadget} Our final gadget is a path of length $cm$ so that each node is connected to the previous node. We create a path gadget $M_i$ of length $cm$ for each variable $x_i$ and connect $M_i$ to each the variable gadgets $C_i^1,\ldots C_i^\tau$ as follows: for every node $z_{i}^{p+qc}$ in the path with position $p+qc>1$, where $1\le p\le c$ and $0\le q<m-1$, we add an edge to $z_{i}^{p+qc}$ from each of the nodes $v_i^{j,p}$, $1\le j\le\tau$ (that is, the $p$\th node in all $\tau$ chains $C_i^1,\ldots C_i^\tau$ representing the variable $x_i$ ). 
We connect the final node in each of the $n$ paths to the final sink node $v_{sink}$ in our graph $G_{\BTwoLC}$.

Intuitively, to pebble $v_{sink}$ we must walk a pebble down each of the gadgets $M_i$ which in turn requires us to walk a pebble along each chain $C_i^j$, $1\le j\le\tau$, at least $m$ times.
For example, see Figure~\ref{fig:path}.

 \begin{figure}[H]
\centering
\begin{tikzpicture}[scale=1.4]
\draw (2cm,0) circle (0.2cm);
\node at (2cm,0){$z_i^1$};
\draw (3cm,0) circle (0.2cm);
\node at (3cm,0){$\ldots$};
\draw (4cm,0) circle (0.2cm);
\node at (4cm,0){$z_i^{c}$};
\draw (5cm,0) circle (0.2cm);
\node at (5cm,0){$\ldots$};
\draw (6cm,0) circle (0.2cm);
\node at (6cm,0){$\ldots$};
\draw (7cm,0) circle (0.2cm);
\node at (7cm,0){$z_i^{2c}$};
\draw (8cm,0) circle (0.2cm);
\node at (8cm,0){$\ldots$};
\draw (9cm,0) circle (0.2cm);
\node at (9cm,0){$\ldots$};
\draw (10cm,0) circle (0.2cm);
\node at (10cm,0){$z_i^{cm}$};

\draw[->, black] (2.3cm,0cm) -- (2.7cm,0cm);
\draw[->, black] (3.3cm,0cm) -- (3.7cm,0cm);
\draw[->, black] (4.3cm,0cm) -- (4.7cm,0cm);
\draw[->, black] (5.3cm,0cm) -- (5.7cm,0cm);
\draw[->, black] (6.3cm,0cm) -- (6.7cm,0cm);
\draw[->, black] (7.3cm,0cm) -- (7.7cm,0cm);
\draw[->, black] (8.3cm,0cm) -- (8.7cm,0cm);
\draw[->, black] (9.3cm,0cm) -- (9.7cm,0cm);

\node at (2.9cm,0.8cm){Chain for $C_i^j:$};
\node at (5cm,0.7cm){$v_i^{j,1}$};
\draw (5cm,0.7cm) circle (0.2cm);
\node at (6cm,0.7cm){$\ldots$};
\draw (6cm,0.7cm) circle (0.2cm);
\node at (7cm,0.7cm){$v_i^{j,c}$};
\draw (7cm,0.7cm) circle (0.2cm);
\draw[->, black] (5.3cm,0.7cm) -- (5.7cm,0.7cm);
\draw[->, black] (6.3cm,0.7cm) -- (6.7cm,0.7cm);

\node at (1.8cm,0.4cm){$M_i$:};
\draw[->, black] (4.8cm,0.7cm) -- (2.25cm,0.05cm);
\draw[->, black] (5.8cm,0.7cm) -- (3.25cm,0.06cm);
\draw[->, black] (6.8cm,0.7cm) -- (4.25cm,0.07cm);
\draw[->, black] (5cm,0.55cm) -- (5cm,0.18cm);
\draw[->, black] (6cm,0.55cm) -- (6cm,0.18cm);
\draw[->, black] (7cm,0.55cm) -- (7cm,0.18cm);
\draw[->, black] (5.2cm,0.7cm) -- (7.75cm,0.065cm);
\draw[->, black] (6.2cm,0.7cm) -- (8.75cm,0.065cm);
\draw[->, black] (7.2cm,0.7cm) -- (9.65cm,0.065cm);
\end{tikzpicture}
\caption{The gadget $M_i$  for variable $x_i$ is a path of length $cm$. The example shows  how $M_i$ is connected to  $C_i^j$ for each $j \in [\tau]$. The example shows $m=3$ passes and $c=3$.}\label{fig:path}
\end{figure}
Figure~\ref{fig:all} shows an example of a reduction in its entirety when $\tau=1$.

\begin{figure}[htb]
\centering
\begin{tikzpicture}[scale=0.9]
\node at (1.2cm,3.4cm){$M_2$:};
\node at (1.2cm,0.0cm){$M_1$:};
\filldraw[shading=radial,inner color=white, outer color=green!75, opacity=0.8](2cm,0cm) circle (0.15cm);
\draw (2cm,0) circle (0.15cm);
\draw (3cm,0) circle (0.15cm);
\draw (4cm,0) circle (0.15cm);
\draw (5cm,0) circle (0.15cm);
\draw (6cm,0) circle (0.15cm);
\draw (7cm,0) circle (0.15cm);
\draw (8cm,0) circle (0.15cm);
\filldraw[shading=radial,inner color=white, outer color=red!75, opacity=0.8](9cm,0cm) circle (0.15cm);
\draw (9cm,0) circle (0.15cm);
\draw (10cm,0) circle (0.15cm);

\draw[->, black] (2.3cm,0cm) -- (2.7cm,0cm);
\draw[->, black] (3.3cm,0cm) -- (3.7cm,0cm);
\draw[->, black] (4.3cm,0cm) -- (4.7cm,0cm);
\draw[->, black] (5.3cm,0cm) -- (5.7cm,0cm);
\draw[->, black] (6.3cm,0cm) -- (6.7cm,0cm);
\draw[->, black] (7.3cm,0cm) -- (7.7cm,0cm);
\draw[->, black] (8.3cm,0cm) -- (8.7cm,0cm);
\draw[->, black] (9.3cm,0cm) -- (9.7cm,0cm);

\node at (1.9cm,0.7cm){$C_1^1$:};
\draw (5cm,0.7cm) circle (0.15cm);
\filldraw[shading=radial,inner color=white, outer color=green!75, opacity=0.8](6cm,0.7cm) circle (0.15cm);
\draw (6cm,0.7cm) circle (0.15cm);
\filldraw[shading=radial,inner color=white, outer color=red!75, opacity=0.8](7cm,0.7cm) circle (0.15cm);
\draw (7cm,0.7cm) circle (0.15cm);
\draw[->, black] (5.3cm,0.7cm) -- (5.7cm,0.7cm);
\draw[->, black] (6.3cm,0.7cm) -- (6.7cm,0.7cm);

\draw[->, black] (4.8cm,0.7cm) -- (2.35cm,0.05cm);
\draw[->, black] (5.8cm,0.7cm) -- (3.35cm,0.05cm);
\draw[->, black] (6.8cm,0.7cm) -- (4.35cm,0.05cm);
\draw[->, black] (5.2cm,0.7cm) -- (4.65cm,0.05cm);
\draw[->, black] (6.2cm,0.7cm) -- (5.65cm,0.05cm);
\draw[->, black] (7.2cm,0.7cm) -- (6.65cm,0.05cm);
\draw[->, black] (5.2cm,0.7cm) -- (7.65cm,0.05cm);
\draw[->, black] (6.2cm,0.7cm) -- (8.65cm,0.05cm);
\draw[->, black] (7.2cm,0.7cm) -- (9.65cm,0.05cm);

\filldraw[shading=radial,inner color=white, outer color=green!75, opacity=0.8](2cm,3.4cm) circle (0.15cm);
\draw (2cm,3.4cm) circle (0.15cm);
\draw (3cm,3.4cm) circle (0.15cm);
\draw (4cm,3.4cm) circle (0.15cm);
\draw (5cm,3.4cm) circle (0.15cm);
\draw (6cm,3.4cm) circle (0.15cm);
\draw (7cm,3.4cm) circle (0.15cm);
\draw (8cm,3.4cm) circle (0.15cm);
\draw (9cm,3.4cm) circle (0.15cm);
\filldraw[shading=radial,inner color=white, outer color=red!75, opacity=0.8](10cm,3.4cm) circle (0.15cm);
\draw (10cm,3.4cm) circle (0.15cm);

\draw[->, black] (2.3cm,3.4cm) -- (2.7cm,3.4cm);
\draw[->, black] (3.3cm,3.4cm) -- (3.7cm,3.4cm);
\draw[->, black] (4.3cm,3.4cm) -- (4.7cm,3.4cm);
\draw[->, black] (5.3cm,3.4cm) -- (5.7cm,3.4cm);
\draw[->, black] (6.3cm,3.4cm) -- (6.7cm,3.4cm);
\draw[->, black] (7.3cm,3.4cm) -- (7.7cm,3.4cm);
\draw[->, black] (8.3cm,3.4cm) -- (8.7cm,3.4cm);
\draw[->, black] (9.3cm,3.4cm) -- (9.7cm,3.4cm);

\node at (1.9cm,2.7cm){$C_2^1$:};
\draw (5cm,2.7cm) circle (0.15cm);
\filldraw[shading=radial,inner color=white, outer color=green!75, opacity=0.8](6cm,2.7cm) circle (0.15cm);
\draw (6cm,2.7cm) circle (0.15cm);
\draw (7cm,2.7cm) circle (0.15cm);
\draw[->, black] (5.3cm,2.7cm) -- (5.7cm,2.7cm);
\draw[->, black] (6.3cm,2.7cm) -- (6.7cm,2.7cm);

\draw[->, black] (4.8cm,2.7cm) -- (2.35cm,3.35cm);
\draw[->, black] (5.8cm,2.7cm) -- (3.35cm,3.35cm);
\draw[->, black] (6.8cm,2.7cm) -- (4.35cm,3.35cm);
\draw[->, black] (5.2cm,2.7cm) -- (4.65cm,3.35cm);
\draw[->, black] (6.2cm,2.7cm) -- (5.65cm,3.35cm);
\draw[->, black] (7.2cm,2.7cm) -- (6.65cm,3.35cm);
\draw[->, black] (5.2cm,2.7cm) -- (7.65cm,3.35cm);
\draw[->, black] (6.2cm,2.7cm) -- (8.65cm,3.35cm);
\draw[->, black] (7.2cm,2.7cm) -- (9.65cm,3.35cm);

\node at (2cm,1.4cm){$E_1$ (equation 1: $x_1+0=x_2$):};
\filldraw[shading=radial,inner color=white, outer color=green!75, opacity=0.8](5cm,1.4cm) circle (0.15cm);
\draw (5cm,1.4cm) circle (0.15cm);
\draw (6cm,1.4cm) circle (0.15cm);
\draw (7cm,1.4cm) circle (0.15cm);
\draw[->, black] (5.3cm,1.4cm) -- (5.7cm,1.4cm);
\draw[->, black] (6.3cm,1.4cm) -- (6.7cm,1.4cm);

\draw[->, black] (5cm,0.9cm) -- (5cm,1.2cm);
\draw[->, black] (5cm,2.5cm) -- (5cm,1.6cm);
\draw[->, black] (6cm,0.9cm) -- (6cm,1.2cm);
\draw[->, black] (6cm,2.5cm) -- (6cm,1.6cm);
\draw[->, black] (7cm,0.9cm) -- (7cm,1.2cm);
\draw[->, black] (7cm,2.5cm) -- (7cm,1.6cm);

\node at (2cm,2cm){$E_2$ (equation 2: $x_2+1=x_1$):};
\draw (5.5cm,2cm) circle (0.15cm);
\filldraw[shading=radial,inner color=white, outer color=red!75, opacity=0.8] (6.5cm,2cm) circle (0.15cm);
\draw (6.5cm,2cm) circle (0.15cm);
\draw[->, black] (5.8cm,2cm) -- (6.2cm,2cm);

\draw[->, black] (5cm,0.9cm) -- (5.45cm,1.8cm);
\draw[->, black] (6cm,2.5cm) -- (5.55cm,2.2cm);
\draw[->, black] (6cm,0.9cm) -- (6.45cm,1.8cm);
\draw[->, black] (7cm,2.5cm) -- (6.55cm,2.2cm);

\filldraw[shading=radial,inner color=white, outer color=gray!75, opacity=0.8] (9cm,1.7cm) circle (0.15cm);
\draw (9cm,1.7cm) circle (0.15cm);
\node at (9.8cm,1.7cm){Sink};

\draw[->, black] (10cm,0.2cm) -- (9cm,1.5cm);
\draw[->, black] (10cm,3.2cm) -- (9cm,1.9cm);
\draw[->, black] (6.7cm,2cm) -- (8.8cm,1.8cm);
\draw[->, black] (7.2cm,1.4cm) -- (8.8cm,1.6cm);
\end{tikzpicture}
\caption{An example of a complete reduction $G_{\BTwoLC}$, again $m=3$ and $c=3$. The green nodes represent the pebbled vertices at time step $2$ while the red nodes represent the pebbled vertices at time step $10$.}\label{fig:all}
\end{figure}

\newcommand{\LemmaExistSolution}{If the \BTwoLC instance has a valid solution, then $\pcc\big(G_{\BTwoLC}\big)\le\tau cmn + 2cmn + 2ckm + 1$.}
\begin{lemma}
\lemmlab{exists:solution}
\LemmaExistSolution
\end{lemma}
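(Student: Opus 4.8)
\medskip\noindent{\bf Proof plan.}
The plan is to construct, from any valid assignment $\{x_{y,z}\}_{y\in[m],\,z\in[n]}$ of the \BTwoLC instance, an explicit legal pebbling $P$ of $G_{\BTwoLC}$ with $\cc(P)\le \tau cmn+2cmn+2ckm+1$; since $\pcc(G_{\BTwoLC})\le\cc(P)$, this proves the lemma. First I would normalize the solution so that $0\le x_{y,z}\le c$ for all $y,z$: the equations are invariant under shifting all variables of a single assignment by a common constant, so inside each connected component of the graph on $[n]$ whose edges are the equations satisfied by assignment $y$ we may make the smallest value $0$, and since a spanning tree of such a component has total edge-weight at most $\sum_i c_i=c$, every value in it is then at most $c$ (variables appearing in no equation of assignment $y$ are set to $0$). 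I would also fix, for each equation $i'$, a designated pass $y(i')\in[m]$ with $x_{y(i'),\alpha_{i'}}+c_{i'}=x_{y(i'),\beta_{i'}}$, which exists since the assignment is valid.

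The pebbling runs in $m$ passes, with pass $y$ occupying a block $R_y$ of at most $2c$ consecutive rounds. During pass $y$ I would: (i) for every variable $i$, walk one pebble at unit speed down each chain $C_i^1,\dots,C_i^\tau$ in lockstep, so that $v_i^{j,p}$ carries a pebble during round $x_{y,i}+p$ of $R_y$ (well-defined since $x_{y,i}+p\le 2c$); (ii) for every variable $i$, walk one pebble down the $y$-th length-$c$ segment of $M_i$, placing the pebble on $z_i^{p+(y-1)c}$ one round after $v_i^{j,p}$ --- this is legal since that node's path-predecessor and all of $v_i^{1,p},\dots,v_i^{\tau,p}$ are pebbled in the previous round --- and hold the last pebble of this segment on $M_i$ until the segment-$(y+1)$ walk starts; (iii) for every equation $i'$ with $y(i')=y$, walk one pebble down $E_{i'}$, placing the pebble on $e_{i'}^a$ one round after the two chain nodes of $C_{\alpha_{i'}}$ and $C_{\beta_{i'}}$ feeding it are pebbled. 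Sources ($z_i^1$, the $v_i^{j,1}$, the $e_{i'}^1$) are placed exactly when first needed, and at the final round every $z_i^{cm}$ and every $e_{i'}^{c-c_{i'}}$ carries a pebble, so the last step places a pebble on $v_{sink}$.

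The step I expect to be the main obstacle is verifying legality in the parallel model: the \emph{single} offset $x_{y,i}$ for the pass-$y$ walk of the chains $C_i^{\cdot}$ must be compatible at once with the walk of $M_i$ (each path node's chain-parents must be on the graph the round before it --- this is automatic from (i)--(ii)) \emph{and} with the walk of every equation gadget $E_{i'}$ with $y(i')=y$, in which the two chain-parents of $e_{i'}^a$ --- which sit on $C_{\alpha_{i'}}$ and $C_{\beta_{i'}}$ exactly $c_{i'}$ positions apart --- must carry pebbles in the \emph{same} round. The latter holds precisely because the pass-$y(i')$ walks of $C_{\alpha_{i'}}$ and $C_{\beta_{i'}}$ differ by $|x_{y(i'),\alpha_{i'}}-x_{y(i'),\beta_{i'}}|=c_{i'}$ rounds; this is the only place the chosen assignment is used. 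One must also check that the $M_i$ walk stays legal across pass boundaries (holding one pebble on $M_i$ in between), that the equation and path segments fit inside the width-$2c$ blocks up to an $O(1)$ overhang, and that, with a careful placement of the blocks, the whole pebbling completes within $2cm$ rounds.

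Granting legality, the cost bound is routine bookkeeping matching the four summands. At every round at most one pebble lies on each variable chain $C_i^j$, and it does so for exactly $c$ rounds in each of the $m$ passes, so the $\tau n$ chains contribute exactly $\tau cmn$. Each of the $n$ path gadgets $M_i$ carries exactly one pebble on each of the at most $2cm$ rounds, contributing at most $2cmn$ in all; each of the $k$ equation gadgets $E_{i'}$ carries at most one pebble per round, contributing at most $2ckm$ in all; and $v_{sink}$ contributes $1$. Summing, $\pcc(G_{\BTwoLC})\le\cc(P)\le\tau cmn+2cmn+2ckm+1$.
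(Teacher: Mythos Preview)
Your approach is essentially the same as the paper's: both construct an explicit pebbling from a valid \BTwoLC solution by running $m$ passes, in each pass walking a single pebble down every variable chain $C_i^j$ with a start time determined by the assigned value $x_{y,i}$, pebbling the appropriate segments of $M_i$ and the equation gadgets $E_{i'}$ (for those equations satisfied by assignment $y$) in lockstep, and then accounting for cost via the four summands $\tau cmn$, $2cmn$, $2ckm$, $1$. Your explicit normalization of the solution to $[0,c]$ is a nice addition the paper leaves implicit.

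One small point to verify carefully: with your timing ``$v_i^{j,p}$ carries a pebble at round $x_{y,i}+p$,'' the chain with the \emph{larger} assigned value starts \emph{later} and is therefore \emph{behind}. According to the text description of $E_{i'}$, the parents of $e_{i'}^a$ are $v_{\alpha_{i'}}^{l,a}$ and $v_{\beta_{i'}}^{l,a+c_{i'}}$, so you need the $\beta_{i'}$-chain (the one with the larger value, since $x_{\beta_{i'}}=x_{\alpha_{i'}}+c_{i'}$) to be $c_{i'}$ positions \emph{ahead}. Your use of the absolute value $|x_{y,\alpha_{i'}}-x_{y,\beta_{i'}}|$ hides this direction check; as written, the offsets have the wrong sign. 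The fix is trivial---schedule chain $i$ to start at round $C-x_{y,i}$ for $C=\max_z x_{y,z}$ (which is exactly what the paper does, starting the largest-valued variable first)---but it is worth stating explicitly rather than relying on the absolute value.
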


\newcommand{\LemmaNoSolution}{If  the \BTwoLC instance does not have a valid solution, then $\pcc\big(G_{\BTwoLC}\big)\ge\tau cmn + \tau$.}
\begin{lemma}
\lemmlab{no:solution}
\LemmaNoSolution
\end{lemma}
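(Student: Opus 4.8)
\textbf{Proof plan for \lemmref{no:solution}.}
The plan is to assume the \BTwoLC instance is unsatisfiable and show directly that every legal pebbling $P \in \pPeb_{G_{\BTwoLC}}$ satisfies $\cc(P) \ge \tau cmn + \tau$. The basic accounting device is the identity $\cc(P) = \sum_{v} N_P(v)$, where $N_P(v) = \big|\{\, i : v \in P_i \,\}\big|$ counts the rounds on which $v$ carries a pebble; this lets me lower bound the cost contributed by each family of gadgets separately and then add. Since the unique sink must eventually be pebbled, every chain node, every node of every $M_i$, and every node of every equation gadget $E_j$ must be pebbled at least once, and, more importantly, the dependency structure forces the chain nodes to be pebbled many times.

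First I would establish the baseline term $\tau cmn$. Walking a pebble to the sink forces a pebble to be walked to the end of each path $M_i$, and block $q$ of $M_i$ can only be pebbled if, for every $p \le c$, all of $v_i^{1,p},\dots,v_i^{\tau,p}$ hold pebbles on the round before $z_i^{p+qc}$ is pebbled. Since the nodes of $M_i$ are pebbled in increasing order of position, this means each of the $\tau$ chains $C_i^1,\dots,C_i^\tau$ is traversed from its first node to its $c$-th node at least $m$ times, and for a fixed $i$ these traversals occur in $m$ disjoint time windows. A single traversal of a length-$c$ chain costs at least $c$ pebble-rounds on that chain, so the chains of variable $i$ contribute at least $\tau cm$ and, summing over $i$, $\sum_{i,j,z} N_P\big(v_i^{j,z}\big) \ge \tau cmn$. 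For a pebbling whose $\tau$ chain-copies for some variable are not traversed in lockstep, or whose per-variable traversals are not aligned into $m$ common windows, I would instead invoke \claimref{claim:sync}, which shows that such a ``desynchronized'' pebbling already has $\cc(P) \ge \tau cmn + \tau$, so I may assume the clean structure above.

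The extra $+\tau$ is where unsatisfiability enters. In a pebbling with the clean structure the $m$ common windows of width $c$ determine, for each variable $i$ and each window $y \in [m]$, a start offset within the window; call it $x_{y,i}$. Pebbling $E_j$ requires, for each $a \in \{1,\dots,c-c_j\}$, that on the round before $e_j^{a}$ is pebbled the $\tau$ chains of $x_{\alpha_j}$ sit at position $a$ while the $\tau$ chains of $x_{\beta_j}$ sit at position $a+c_j$; since $c_j < c$ this can only be arranged inside a single window, and it happens in window $y$ exactly when $x_{y,\alpha_j}+c_j = x_{y,\beta_j}$. Hence a clean pebbling in which every $E_j$ is pebbled off the $m$ mandatory traversals is exactly a satisfying assignment of the \BTwoLC instance; as none exists, in every legal pebbling there is an equation gadget $E_{j^\star}$ satisfied by no window, and pebbling $E_{j^\star}$ forces an extra traversal segment across the $\ge \tau$ chains feeding $e_{j^\star}^{1}$ — equivalently, at least one extra round on which $\ge \tau$ pebbles sit on those chains beyond what the $m$ mandatory traversals of that variable use. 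Those $\ge \tau$ pebble-rounds are disjoint from the baseline count, giving $\cc(P) \ge \tau cmn + \tau$.

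\textbf{Expected main obstacle.} The technical heart is the second half: ruling out ``clever'' pebblings that pebble the violated gadget $E_{j^\star}$ without paying $\tau$ extra — e.g. by using partially overlapping traversals of a chain, by holding a contiguous block of chain pebbles stationary while sweeping another pebble past it, or by double-counting rounds I want to assign to the baseline. Making this charging argument airtight is precisely the role of \claimref{claim:sync}: it must pin down that any pebbling cheaper than the threshold consists of $m$ cleanly synchronized traversals per variable aligned into $m$ common windows, with essentially no slack, so that the $\tau$ pebble-rounds needed to realize the $E_{j^\star}$ synchronization are genuinely new. A related subtlety is that the ``same window $y$ for both $\alpha_j$ and $\beta_j$'' condition of \BTwoLC must be forced by the geometry — an offset $c_j < c$ fits inside one width-$c$ window but cannot bridge two — which again rides on the window structure guaranteed by \claimref{claim:sync}. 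Finally, once $\tau$ is chosen larger than $2cmn + 2ckm + 1$, this lemma and \lemmref{exists:solution} have disjoint cost ranges, so $\pcc\big(G_{\BTwoLC}\big) \le \tau cmn + 2cmn + 2ckm + 1$ if and only if the \BTwoLC instance is satisfiable, completing the reduction showing $\minCC$ is $\NPhard$.
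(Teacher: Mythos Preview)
Your plan is essentially the paper's argument, but you over-attribute to \claimref{claim:sync}. That claim states only that in an \emph{optimal} pebbling the $\tau$ chain-copies for each variable are in lockstep (if $v_i^{j,z}\in P_y$ then $v_i^{j',z}\in P_y$ for all $j'$); it says nothing about ``$m$ common windows'' aligned across different variables, nor does it directly yield $\cc(P)\ge \tau cmn+\tau$ for a desynchronized pebbling. The paper bypasses your window apparatus with a simpler dichotomy: call $P$ \emph{cheating} if some chain node $v_i^{j,z}$ is pebbled in at least $m+1$ rounds. Your $M_i$-gadget argument already shows every chain node is pebbled in $\ge m$ rounds, so a non-cheating pebbling pebbles each chain node in \emph{exactly} $m$ rounds, and the paper then argues (as you do) that such a pebbling encodes a valid \BTwoLC solution --- a contradiction. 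For a cheating pebbling, lockstep (applied only after assuming $P$ is optimal) forces all $\tau$ copies $v_i^{1,z},\dots,v_i^{\tau,z}$ to be pebbled in $\ge m+1$ rounds while every other chain node is still pebbled in $\ge m$ rounds, so the chain-node contribution alone is at least $\tau(m+1)+\tau m(cn-1)=\tau cmn+\tau$. This cleanly sidesteps the ``disjoint from baseline'' charging you flagged as the main obstacle: the extra $\tau$ is just one additional round on the $\tau$ lockstepped copies of a single node, with no need to globally align traversals into common windows.
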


 We outline the key intuition behind \lemmref{exists:solution} and \lemmref{no:solution} and refer to the appendix for the formal proofs. Intuitively, any solution to \BTwoLC corresponds to $m$ walks across the $\tau n$ chains $C_i^j$, $1\le i\le n$, $1\le j\le\tau$ of length $c$. If the \BTwoLC instance is satisfiable then we can synchronize each of these walks so that we can pebble every equation chain $E_j$ and path $M_j$ along the way. Thus, the total cost is $\tau cmn$ plus the cost to pebble the $k$ equation chains $E_j$ ($\leq 2ckm$), the cost to pebble the $n$ paths $M_j$ ($\leq 2cmn$) plus the cost to pebble the sink node $1$. 
 
We then prove a structural property about the optimal pebbling $P=\left(P_0,\ldots,P_t\right) \in \pPeb_{G_{\BTwoLC}}$. In particular, \claimref{claim:sync} from the appendix states that if $P=\left(P_0,\ldots,P_t\right)$ is optimal (i.e., $\cc(P)=\pcc\left(G_{\BTwoLC}\right)$) then during each pebbling round $y \leq t$ the pebbles on each of the chains $C_i^1,\ldots,C_i^\tau$ are synchronized. Formally, for every $y \leq t$, $i \leq n$ and $z \leq c$ we either have (1) $\left\{v_i^{1,z},\ldots,v_i^{\tau,z} \right\} \subseteq P_y$, or (2) $\left\{v_i^{1,z},\ldots,v_i^{\tau,z} \right\} \bigcap P_y = \emptyset$ --- otherwise we could reduce our pebbling cost by discarding these unnecessary pebbles.
 
 If the \BTwoLC instance is not satisfied then we must adopt a ``cheating'' pebbling strategy $P$, which does not correspond to a \BTwoLC solution. We say that $P$ is a ``cheating'' pebbling if some node $v_i^{j,z} \in C_i^j$ is pebbled during at least $m+1$ rounds. We can use \claimref{claim:sync} to show that the cost of {\em any} ``cheating'' pebbling is at least $\cc(P) \geq \tau \left(mc+1\right)$. In particular, $P$ must incur cost at least $\tau mc (n-1)$ to walk a pebble down each of the chains $C_{i'}^j$ with $i' \neq i$ and $1\le j\le\tau$. By \claimref{claim:sync}, any cheating pebbling $P$ incurs costs at least $\tau(mc+1)$ on each of the chains    $C_i^1,\ldots,C_i^\tau$. Thus, the cumulative cost is at least $\tau cmn + \tau$.
   
\begin{theorem} \thmlab{mainResult}
\minCC is \NPcomplete.
\end{theorem}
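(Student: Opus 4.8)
The plan is to assemble the pieces that are already in place: membership in \NP{} together with the gap reduction from \BTwoLC{} given by the two lemmas above. First I would recall that $\minCC \in \NP$, which is immediate since a certificate is a legal pebbling $P = (P_0,\ldots,P_t)$ with $t \le n$ and $\cc(P) \le k$ — legality is checkable in polynomial time and, by footnote~\ref{note1}, we may assume $t \le n$ so the certificate has polynomial size. The substance is the reduction, so the bulk of the proof is verifying that the construction $G_{\BTwoLC}$ is polynomial-time computable and that the two cost bounds separate satisfiable from unsatisfiable instances for an appropriate choice of $\tau$.

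Concretely, I would proceed as follows. (1) Observe that $G_{\BTwoLC}$ has $O\big(\tau c n + ck + cmn\big)$ nodes, where $c = \sum_i c_i \le p(n)$ is polynomially bounded by the \BTwoLC{} promise and $k,m \le k$; hence for any $\tau$ bounded by a polynomial the graph has polynomial size and is constructible in polynomial time, and moreover $\indeg(G_{\BTwoLC})$ is a constant (each equation-gadget node has in-degree $2\tau+1$ and each $M_i$ node in-degree $\tau+1$ — these are not constant yet, which I flag below). (2) Choose $\tau$ large enough that the ``yes'' upper bound of \lemmref{exists:solution}, namely $\tau cmn + 2cmn + 2ckm + 1$, is strictly smaller than the ``no'' lower bound of \lemmref{no:solution}, namely $\tau cmn + \tau$; since the difference is $\tau - 2cmn - 2ckm - 1$, it suffices to take $\tau > 2cmn + 2ckm + 1$, e.g.\ $\tau = 2cmn + 2ckm + 2$, which is polynomial in the input size. (3) Set the target $k^* := \tau cmn + 2cmn + 2ckm + 1$; then the \BTwoLC{} instance is satisfiable $\iff \pcc(G_{\BTwoLC}) \le k^*$, so the reduction maps \BTwoLC{} instances to \minCC{} instances preserving the answer. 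Since \BTwoLC{} is \NPcomplete{} (stated earlier), this shows \minCC{} is \NPhard, and with membership in \NP{} we conclude \NPcompleteness.

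The main obstacle, and the step I would spend the most care on, is not any single inequality but making sure the two lemmas genuinely interlock through \claimref{claim:sync}: the ``no'' direction only works because any legal pebbling that does \emph{not} come from a \BTwoLC{} solution must, by the synchronization structure, ``cheat'' on some chain family $C_i^1,\ldots,C_i^\tau$ and thereby pay an extra $\tau$ beyond the unavoidable $\tau cmn$. I would double-check that the definition of ``cheating'' (some $v_i^{j,z}$ pebbled during $\ge m+1$ rounds) is forced whenever \BTwoLC{} is unsatisfiable — this is where the $M_i$ and $E_i$ gadgets must be shown to compel $m$ synchronized walks that jointly satisfy every equation — and that \claimref{claim:sync} indeed applies to \emph{every} optimal pebbling, not just to ``nice'' ones. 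A secondary point to address is the claim that the reduction works even for $\indeg \le 2$: the in-degrees $2\tau+1$ and $\tau+1$ above are super-constant, so one must replace each high-in-degree node by a balanced binary in-tree of auxiliary nodes (padding lengths so the synchronization distances $c_i$ are preserved), argue this only changes node counts and costs by polynomial factors, and re-tune $\tau$ accordingly; I would present this as a lemma reducing $\minCC$ to $\minCC_2$. Finally I would note the single-sink property is already built in via $v_{sink}$, so no extra work is needed there.
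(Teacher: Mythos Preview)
Your approach is essentially identical to the paper's: reduce from \BTwoLC, invoke \lemmref{exists:solution} and \lemmref{no:solution}, and set $\tau = 2cmn + 2ckm + 2$ so the two bounds separate; the paper's proof is exactly this three-line argument.

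Two small remarks. First, your \NP-membership justification asserts a witness with $t \le n$ via footnote~\ref{note1}, but that footnote only bounds the \emph{cost} of the trivial pebbling, not the number of rounds of an optimal one; in fact \thmref{timecounterexample} exhibits a graph whose cost-optimal pebbling requires strictly more than $n$ rounds. The correct bound is $t \le k \le n(n+1)/2$ (any round with a pebble contributes at least $1$ to $\cc$, and empty rounds can be elided), which still keeps the certificate polynomial. Second, the material on replacing high-indegree nodes by binary trees and re-tuning $\tau$ is not part of \thmref{mainResult}; it is exactly the content of the separate \thmref{minCCd}, so you should leave it out of this proof.
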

\begin{proof}
It suffices to show that there is a polynomial time reduction from \BTwoLC to \minCC since \BTwoLC is \NPcomplete (see \thmref{reduce:3part}). Given an instance $\mathcal{P}$ of \BTwoLC, we create the corresponding graph $G$ as described above. This is clearly achieved in polynomial time. 
By \lemmref{exists:solution}, if $\mathcal{P}$ has a valid solution, then $\pcc(G)\le\tau cmn + 2cmn + 2ckm + 1$.
On the other hand, by \lemmref{no:solution}, if $\mathcal{P}$ does not have a valid solution, then $\pcc(G) \ge\tau cmn + \tau$.
Therefore, setting $\tau > 2cmn + 2ckm + 1$ (such as $\tau = 2cmn + 2ckm + 2$) allows one to solve \BTwoLC given an algorithm which outputs $\pcc(G)$.
\end{proof}

\newcommand{\thmminCCd}{$\minCC_\delta$ is \NPcomplete for each $\delta \geq 2$.}
\begin{theorem} \thmlab{minCCd}
\thmminCCd
\end{theorem}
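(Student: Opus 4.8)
The plan is to reduce $\BTwoLC$ (which is $\NPcomplete$ --- see \thmref{reduce:3part}) to $\minCC_2$. Since any DAG with $\indeg \le 2$ is a legal input to $\minCC_\delta$ with the same answer for every $\delta \ge 2$, this one reduction gives $\NPhard$ness for all $\delta \ge 2$, and membership in $\NP$ is immediate exactly as for $\minCC$. I start from the graph $G_{\BTwoLC}$ of \thmref{mainResult}, whose only high-indegree vertices are the equation nodes $e_i^a$ ($\indeg \approx 2\tau$), the assignment-gadget nodes $z_i^{p+qc}$ ($\indeg \approx \tau$), and the sink $v_{sink}$ ($\indeg \approx n+k$). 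Each such vertex $v$ with parents $u_1,\dots,u_d$ is replaced by a directed path $v^{(1)} \to \cdots \to v^{(d)}$ in which $v^{(1)}$ inherits the edge from $v$'s chain/path predecessor, each $v^{(j)}$ with $j \ge 2$ picks up the single extra parent $u_j$, and all out-edges of $v$ now leave $v^{(d)}$ (for $v_{sink}$, a balanced binary ``conjunction tree'' over its predecessors does the same job). The resulting DAG $G'_{\BTwoLC}$ has $\indeg \le 2$, a single sink, and $\mathrm{poly}(n)$ vertices, using $\sum_i c_i \le p(n)$, $m \le k$, and a parameter $\tau = \mathrm{poly}(n)$ fixed at the end.

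Because $v^{(j)}$ can be pebbled only one round after $u_j$, traversing the expansion of an equation or assignment node forces the $\tau$ copies $C_i^1,\dots,C_i^\tau$ of a variable chain to be walked \emph{staggered} --- one step apart --- rather than simultaneously, so that $u_1,\dots,u_d$ appear on $d$ consecutive rounds; accordingly the $\BTwoLC$ equation $x_{\alpha}+c_i=x_{\beta}$ is re-encoded as a constraint on the relative launch offsets of the $\alpha$- and $\beta$-chains during the pass meant to satisfy it. The key point that keeps the reduction from collapsing is that, although expanding these gadgets multiplies their vertex count by $\Theta(\tau)$, \emph{every} legal pebbling --- whether or not the instance is satisfiable --- must traverse all of them on its way to the unique sink, so the $\Theta(\tau \cdot \mathrm{poly}(n))$ cost they contribute is paid identically in both cases and cancels in the gap; only the genuine ``cheating'' penalty of \lemmref{no:solution} survives.

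Concretely I re-prove analogues of \lemmref{exists:solution} and \lemmref{no:solution}. Let $\mathrm{Main}$ be the sum, over the variable-chain vertices, the expanded equation-gadget vertices, the expanded assignment-gadget vertices, and the conjunction-tree vertices, of the minimum number of rounds each must carry a pebble --- $m$ for a chain vertex (forced, as before, by the assignment gadget) and $1$ otherwise --- so $\mathrm{Main} = \Theta(\tau cmn)$ and $\pcc(G'_{\BTwoLC}) \ge \mathrm{Main}$ always. If the instance is unsatisfiable then, applying the counting argument of \lemmref{no:solution} to the variable-chain vertices (which are disjoint from the expanded gadgets), some position class $\{v_i^{1,z},\dots,v_i^{\tau,z}\}$ must be pebbled $m+1$ rather than $m$ times, giving $\pcc(G'_{\BTwoLC}) \ge \mathrm{Main} + \tau$. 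If the instance is satisfiable, a pipelined canonical pebbling --- staggered chain walks with minimal gaps between passes, and each $E_j$-gadget traversed during a pass that satisfies equation $j$ --- pebbles everything with no waste apart from $W = \mathrm{poly}(n)$ extra vertex-rounds (holding one ``current position'' per assignment gadget between passes, one terminal vertex per equation gadget until the conjunction tree is collapsed, and the sink), independent of $\tau$; hence $\pcc(G'_{\BTwoLC}) \le \mathrm{Main} + W$. Taking $\tau = W+1$ and setting the threshold $k := \mathrm{Main} + W$ in the $\minCC_2$ instance then separates satisfiable from unsatisfiable instances, completing the reduction.

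I expect the main obstacle to be re-establishing the structural guarantee analogous to the synchronization claim behind \thmref{mainResult}: I must show that an optimal pebbling of $G'_{\BTwoLC}$ cannot gain by stretching chain passes, breaking the stagger, or leaving stray pebbles on the expanded gadgets, and --- the delicate part --- that any deviation not corresponding to a $\BTwoLC$ solution still costs at least $\tau$ extra on the variable chains, with constants controlled tightly enough that the waste term $W$ remains a fixed polynomial (in particular, not itself multiplied by $\tau$), which is exactly what lets a polynomial $\tau$ suffice. Wiring the equation-gadget expansion so that it becomes traversable precisely when the chain offsets satisfy the linear equation is a secondary but fiddly bookkeeping step.
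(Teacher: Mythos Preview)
Your path-expansion gadget has a timing problem that breaks the gap argument, and this is exactly where the paper's choice of a \emph{binary tree} rather than a path matters.

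Consider a single equation-gadget node $e_l^a$ with its $2\tau+1$ parents. In your expansion it becomes a directed path of length $2\tau+1$, so walking from $(e_l^a)^{(1)}$ to $(e_l^a)^{(2\tau+1)}$ takes $2\tau$ rounds before you can even enter $(e_l^{a+1})^{(1)}$. Hence the round at which $v_\alpha^{1,a}$ is required and the round at which $v_\alpha^{1,a+1}$ is required differ by $2\tau+1$, not by $1$. The same holds for the assignment-gadget expansion (stride $\tau+1$). Consequently, in the honest pebbling each chain position must be \emph{held} for $\Theta(\tau)$ rounds per pass, so a chain vertex is pebbled $\Theta(m\tau)$ rounds, not $m$ rounds. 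Your stated $\mathrm{Main}$ (with ``$m$ for a chain vertex'') is therefore only a lower bound, and the honest upper bound is $\mathrm{Main}+\Theta(\tau^2 cmn)$, not $\mathrm{Main}+W$ with $W$ independent of $\tau$. Since your cheating penalty is only $+\tau$, the separation disappears: both satisfiable and unsatisfiable instances sit far above $\mathrm{Main}+\tau$, and there is no threshold $k$ that distinguishes them.

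The paper replaces the $2\tau$ incoming chain-edges at each high-indegree node by a balanced binary \emph{tree} (and similarly for the $M_i$ nodes and the sink). The point is that all $2\tau$ chain parents feed the \emph{leaves simultaneously}; the tree has depth $O(\log\tau)$, and the trees for consecutive positions $a$ and $a+1$ can be pipelined with a one-round offset. Thus the variable chains are still walked at unit speed, each chain vertex is pebbled exactly $m$ rounds in the canonical pebbling, and the original synchronization claim (\claimref{claim:sync}) survives essentially verbatim---the $\tau$ copies remain genuinely synchronous, not staggered. That is what keeps the waste term polynomial in $c,m,n$ and independent of $\tau$, so a polynomial choice of $\tau$ reopens the gap. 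Your instinct to use a binary tree at the sink was right; it needs to be used at every high-indegree node.
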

Note that the only possible nodes in $G_{\BTwoLC}$ with indegree greater than two are the nodes in the equation gadgets $E_1,\ldots,E_m$, and the final sink node. 
The equation gadgets can have indegree up to $2\tau+1$, while the final sink node has indegree $n+m$. To show that $\minCC_\delta$ is \NPcomplete when $\delta=2$ we can replace the incoming edges to each of these nodes with a binary tree, so that all vertices have indegree at most two. 
By changing $\tau$ appropriately, we can still distinguish between instances of $\BTwoLC$ using the output of $\minCC_\delta$.
We refer to the appendix for a sketch of the proof of \thmref{minCCd}.

\newcommand{\thmreduceThreePart}{\BTwoLC is \NPcomplete.}
\begin{theorem} \thmlab{reduce:3part}
\thmreduceThreePart
\end{theorem}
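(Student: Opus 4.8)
The plan is to prove $\BTwoLC \in \NP$ and then to establish \NPhardness by a polynomial-time reduction from $\ThreePartition$, which is \emph{strongly} \NPcomplete. The strong \NPcompleteness is essential here: $\BTwoLC$ demands $\sum_i c_i \le p(n)$, so we may only reduce from a number problem whose instances can be taken to have magnitudes polynomial in the instance size, and $\ThreePartition$ is the canonical such problem.

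For membership in $\NP$, a solution is a list of $mn$ integers $x_{y,z}$, and we may assume it is polynomially bounded: once we fix which of the $k$ equations each row $y$ is responsible for, the equations charged to row $y$ form a system of difference constraints on $x_{y,1},\dots,x_{y,n}$; within each connected component of the associated constraint graph the values are determined up to a global shift, so after shifting each component to have minimum $0$ we get $0 \le x_{y,z} \le \sum_i c_i \le p(n)$. Thus a short certificate consists of the assignment of equations to rows together with the (poly-size) integer values, and checking it — that each row's system is consistent and that each equation is indeed satisfied by its assigned row — is polynomial time, e.g.\ by a Bellman--Ford / union--find-with-offsets test for an inconsistent cycle.

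For hardness, given a $\ThreePartition$ instance with positive integers $a_1,\dots,a_{3t}$ and target $B$, so that $\sum_i a_i = tB$, $B/4 < a_i < B/2$ (hence any subset of the $a_i$ summing to $B$ has \emph{exactly} three elements), and $B \le \mathrm{poly}(t)$, I would construct a $\BTwoLC$ instance with $m = t$ rows, one per intended bin. The variables encode a ``running sum'' per bin: a short auxiliary chain whose consecutive differences are forced to be $1$ supplies constants, and running-sum variables together with a closing equation of shape ``$q_0 + B = q_3$'' force any row that traverses a bin's chain to accumulate exactly $B$; the $3t$ item equations (and/or copies thereof for the three intra-bin positions) are set up so that ``row $y$ satisfies item $i$'s equation'' means ``item $i$ is placed into bin $y$'', contributing $a_i$ to bin $y$'s running sum, and so that row $y$'s difference system is consistent iff the items routed to bin $y$ sum to exactly $B$. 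In the forward direction, a valid $3$-partition into triples $T_1,\dots,T_t$ gives the obvious cover (row $\ell$ handles $T_\ell$, using the intra-triple order to assign positions $1,2,3$). In the reverse direction, any cover of all equations by $t$ consistent rows must, because each active bin's closing equation forces its total to be $B$ while the range condition $B/4 < a_i < B/2$ forces exactly three item contributions per bin, distribute the $3t$ item equations three per row; since there are exactly $3t$ items and $t$ rows each absorbing exactly three, a counting/pigeonhole argument upgrades the cover into an exact partition into triples each summing to $B$.

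The main obstacle is the gap between the \emph{cover} semantics of $\BTwoLC$ (each equation satisfied by \emph{at least one} row) and the \emph{exact-assignment} semantics of $\ThreePartition$ (each item used exactly once, each bin totalling exactly $B$). Concretely, I must design the closing-equation and position gadgets rigidly enough that (i) no single consistent row can satisfy item equations belonging to two different bins or ``double-count'' an item, and (ii) every one of the $t$ rows is genuinely forced to reach $B$, rather than leaving some bin's closing equation to be discharged vacuously by another row; only then does the pigeonhole step go through. A secondary, routine point is to verify that all additive constants $c_i$ in the constructed instance remain bounded by a fixed polynomial in the number of $\BTwoLC$ variables, which is immediate from the strong \NPcompleteness of $\ThreePartition$. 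Combined with $\BTwoLC\in\NP$, this yields that $\BTwoLC$ is \NPcomplete, as needed to feed into \thmref{mainResult}.
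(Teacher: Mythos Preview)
Your overall plan matches the paper: prove $\BTwoLC\in\NP$ and reduce from $\ThreePartition$. Your $\NP$ argument is in fact more careful than the paper's (which leaves membership implicit); the observation that each row's constraints form a difference system whose solutions can be shifted into $[0,\sum_i c_i]$ is exactly right.

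The gap is in the hardness gadget. You correctly identify the central obstacle---that $\BTwoLC$ has \emph{cover} semantics, so a single closing equation ``$q_0+B=q_3$'' (or $t$ copies with the same constant $B$) can be discharged by \emph{one} row, leaving the remaining $t-1$ rows unconstrained---but your proposal does not actually resolve it. The paper's solution is not a per-bin closing equation with constant $B$; instead it uses a single chain $x_1,\dots,x_{3n+1}$ shared by all rows, together with \emph{$n$ pairwise distinct} closing equations
\[
x_1 + \tfrac{T}{n} + 3(i-1)(n-2)T \;=\; x_{3n+1}\qquad (i=1,\dots,n),
\]
so that each of the $n$ rows is forced to satisfy a different one. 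To make each such target reachable, the paper also adds, for every item position $j$ and every $r\in\{0,1,\dots,n-2\}$, a ``padding'' equation $x_j + rT = x_{j+1}$ alongside the ``use item $j$'' equation $x_j + s_j = x_{j+1}$. The reverse direction then goes by a modular argument: since every closing constant is $\equiv T/n \pmod{T}$ and every padding constant is $\equiv 0\pmod{T}$, each row's total must be $\equiv T/n\pmod{T}$, and the range condition $T/(4n)<s_j<T/(2n)$ forces \emph{exactly three} item equations per row with sum exactly $T/n$; a counting argument then yields a genuine $3$-partition.

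In short, your running-sum intuition is correct, but the specific mechanism you sketch (a uniform closing constant $B$ plus unspecified ``position gadgets'') does not by itself force all $t$ rows to participate, and you have not supplied the replacement idea. The paper's distinct-constants-plus-padding trick, together with the $\bmod\ T$ argument, is precisely the missing piece.
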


To show that $\BTwoLC$ is $\NPcomplete$ we will reduce from the problem \ThreePartition, which is known to be \NPcomplete. The decision problem \ThreePartition is defined as follows: \\
  {\noindent \bf Input:} A multi-set $S$ of $m=3n$ positive integers $x_1,\ldots, x_m \geq 1$ such that (1) we have $\frac{T}{4n} < x_i < \frac{T}{2n}$ for each $1 \leq i \leq m$, where $T=x_1 + \ldots + x_m$, and (2) we require that $T \leq p(n)$ for a fixed polynomial $p$.\footnote{We may assume $\frac{T}{4n} < x_i < \frac{T}{2n}$ by taking any set of positive integers and adding a large fixed constant to all terms, as described in \cite{Demaine6890}.} \newline
  {\noindent \bf Output:} {\em Yes,} if there is a partition of $[m]$ into $n$ subsets $S_1,\ldots, S_n$ such that $\sum_{j \in S_i} x_j  = \frac{T}{n}$ for each $1 \leq i \leq n$; otherwise {\em No}.

  \begin{fact}\cite{garey1975complexity,Demaine6890} \ThreePartition is \NPcomplete.\footnote{The   \ThreePartition problem is called P[3,1] in \cite{garey1975complexity}.}
  \end{fact}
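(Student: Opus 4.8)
The plan is to establish both membership in $\NP$ and, more importantly, the \emph{strong} form of $\NPhard$ness. Membership is immediate: the witness is the partition $S_1,\dots,S_n$ itself, and one verifies in polynomial time that it covers $[m]$ and that each $\sum_{j\in S_i}x_j$ equals $T/n$. I would first record the standard observation that the promise $\frac{T}{4n}<x_i<\frac{T}{2n}$ forces every part of a valid solution to contain exactly three elements (two sum to less than $T/n$, four to more), which is both why the problem is named $\ThreePartition$ and why a candidate partition into triples is enough to certify a ``yes'' instance. I also want to stress that the goal is the \emph{strong} form of $\NPcomplete$ness: the reduction must output instances with $T$ bounded by a fixed polynomial in $n$ (equivalently, the integers fit in unary in polynomial time), since that is exactly the form of $\ThreePartition$ used in the rest of the paper.

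For hardness I would reduce from $3$-Dimensional Matching (3DM), which is well known to be $\NPcomplete$: given disjoint $W,X,Y$ with $|W|=|X|=|Y|=q$ and a family $\mathcal{M}\subseteq W\times X\times Y$ with $|\mathcal{M}|=m\ge q$ in which every element occurs at least once, decide whether $\mathcal{M}$ contains a perfect matching. The construction fixes a base $r=\mathrm{poly}(m)$ (e.g. $r=\Theta(m)$) chosen large enough that summing any admissible triple of the integers we build produces no carries in base $r$; this ``no-carry'' invariant is what will make a digit-by-digit analysis of valid $3$-partitions legitimate. For each triple $m_\ell\in\mathcal{M}$ I would create one ``triple integer'', and for each element of $W\cup X\cup Y$ a constant number of ``filler/enforcer'' integers, with base-$r$ digit patterns arranged so that (i) each element $a$ owns a dedicated $O(1)$-size digit block that, in any valid $3$-partition, can only be completed by the triple integer of a triple covering $a$ together with matching fillers, and (ii) every triple integer not used to cover its element is instead absorbed by a complementary block of fillers. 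Setting the common target $B$ to the ``all blocks filled'' value then makes the two problems equivalent, and since only $O(1)$ digit blocks and a polynomial base are used, every integer---and hence $T$---is $\mathrm{poly}(m)$, yielding strong hardness; the promise $\frac{T}{4n}<x_i<\frac{T}{2n}$ is finally restored by adding a large common constant to all integers, exactly as in~\cite{Demaine6890}.

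The correctness argument runs in both directions. Given a perfect matching $\mathcal{M}'$, I would form a $3$-partition by grouping each triple integer of $\mathcal{M}'$ with the fillers completing its element blocks and each remaining triple integer with its complementary fillers, so that every part sums to $B$. Conversely, from any valid $3$-partition the no-carry invariant lets one read off, block by block, that each element block is completed by exactly one covering triple integer, and cross-checking the $W$-, $X$- and $Y$-blocks of the chosen triple integers forces these choices to be mutually consistent, i.e. to form a perfect matching. I expect the main obstacle to be the gadget design itself: one must simultaneously (a) keep every integer polynomially bounded, which rules out the naive encoding that dedicates one digit position to each of the $3q$ elements---that would inflate magnitudes to $r^{\Theta(q)}$ and destroy strong $\NPcomplete$ness---so each element's ``exactly once'' constraint has to be \emph{localized} into an $O(1)$-size block, and (b) certify that the no-carry invariant truly holds and that the case analysis of which triples of integers can sum to $B$ is exhaustive. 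The detailed construction is the one carried out in~\cite{garey1975complexity}, with a streamlined exposition in~\cite{Demaine6890}, so I would cite those for the routine verifications.
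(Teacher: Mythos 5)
The paper does not prove this statement at all: it is stated as a \textbf{Fact} and discharged entirely by the citations to \cite{garey1975complexity} and \cite{Demaine6890}, so there is no in-paper argument to compare against. Your sketch is a faithful outline of the standard proof from those references: you correctly identify that what is needed is \emph{strong} $\NPcomplete$ness (all integers, hence $T$, polynomially bounded), that the bounds $\frac{T}{4n}<x_i<\frac{T}{2n}$ force every part to have exactly three elements and are restored at the end by adding a large common constant, and that the reduction from 3-Dimensional Matching works via base-$r$ digit encodings with a no-carry invariant. Two caveats: first, the classical Garey--Johnson route actually passes through 4-PARTITION as an intermediate problem rather than going from 3DM to 3-PARTITION directly, so your ``direct'' gadget would need some care (though your observation that element indices must be encoded as coefficients in $O(1)$ digit positions rather than one position per element is exactly the right point, and is where naive attempts fail). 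Second, your argument is a proof \emph{sketch} whose crucial gadget construction and case analysis are themselves deferred to the very references being cited, so as a self-contained proof it is incomplete --- but since the paper treats this as an imported black-box fact, that level of detail is consistent with (indeed more informative than) the paper's own treatment.
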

We refer to the appendix for the proof of \thmref{reduce:3part}, where we show that there is a polynomial time reduction from \ThreePartition to \BTwoLC.
\section{NP-Hardness of $\REDUCIBLE_d$}
\seclab{Reducible}
The attacks of Alwen and Blocki~\cite{AlwenB16a,AlwenB16b} exploited the fact that the Argon2i-A, Argon2i-B, iBH and Catena DAGs are not depth-robust. In general, Alwen and Blocki~\cite{AlwenB16a} showed that any $(e,d)$-reducible DAG $G$ can be pebbled with cumulative cost $O\big(ne+n\sqrt{nd}\big)$. Thus, depth-robustness is a necessary condition for a secure iMHF. Recently, Alwen \etal~\cite{EC:AlwBloPie17} showed that depth-robustness is sufficient for a secure iMHF. In particular, they showed that an $(e,d)$-depth reducible graph has $\pcc(G) \geq ed$.\footnote{Alwen \etal~\cite{EC:AlwBloPie17} also gave tighter upper and lower bounds on $\pcc(G)$ for the Argon2i-A, iBH and Catena iMHFs. For example, $\pcc(G) = \Omega\left(n^{1.66} \right)$ and $\pcc(G) = O\left(n^{1.71} \right)$ for a random Argon2i-A DAG $G$ (with high probability). Blocki and Zhou~\cite{TCC:BloZho17} recently tightened the upper and lower bounds on Argon2i-B showing that $\pcc(G)=O\left(n^{1.767} \right)$ and $\pcc(G)=\tilde{\Omega}\left(n^{1.75} \right)$. } Thus, to cryptanalyze a candidate iMHF it would be useful to have an algorithm to test for depth-robustness of an input graph $G$. However, we stress that (constant-factor) hardness of $\REDUCIBLE_d$ does not directly imply that $\minCC$ is $\NPhard$. To the best of our knowledge no one has explored the computational complexity of testing whether a given DAG $G$ is $(e,d)$-depth robust. 

We have many constructions of depth-robust graphs~\cite{EGS75,PR80,Sch82,Sch83,ITCS:MahMorVad13}, but the constant terms in these constructions are typically not well understood. For example,  Erd\"os, Graham and Szemer\'edi~\cite{EGS75} constructed an $\big( \Omega(n),\Omega(n)\big)$-depth robust graph with $\indeg = O\big(\log n\big)$. Alwen \etal~\cite{EC:AlwBloPie17} showed how to transform an $n$ node $(e,d)$-depth robust graph with maximum indegree $\indeg$ to a $(e,d\times \indeg)$-depth robust graph with maximum $\indeg=2$ on $n\times\indeg$ nodes. Applying this transform to the  Erd\"os, Graham and Szemer\'edi~\cite{EGS75} construction yields a constant-indegree graph on $n$ nodes such that $G$ is $(\Omega(n/\log(n)),\Omega(n)\big)$-depth robust --- implying that $\pcc(G) =  \Omega\big(\frac{n^2}{\log n}\big)$. From a theoretical standpoint, this is essentially optimal as any constant $\indeg$ DAG has $\pcc = O\big( \frac{n^2 \log \log n}{\log n}\big)$~\cite{AlwenB16a}. From a practical standpoint it is important to understand the exact values of $e$ and $d$ for specific parameters $n$ in each construction. 

\subsection{Results}
We first produce a reduction from Vertex Cover which preserves approximation hardness. Let $\MINREDUCIBLE_d$ denote the problem of finding a minimum size $S \subseteq V$ such that $\depth(G-S)\leq d$. Our reduction shows that, for each $0 \leq d \leq n^{1-\epsilon}$, it is $\NPhard$ to $1.3$-approximate $\MINREDUCIBLE_d$ since it is $\NPhard$ $1.3$-approximate Vertex Cover~\cite{dinur2005hardness}. Similarly, it is hard to $(2-\epsilon)$-approximate $\MINREDUCIBLE_d$ for any fixed $\epsilon>0$~\cite{khot2008vertex}, under the Unique Games Conjecture~\cite{khot2002power}. We also produce a reduction from Cubic Vertex Cover to show $\REDUCIBLE_d$ is $\NPcomplete$ even when the input graph has bounded indegree. 

The techniques we use are similar to those of Bresar et al.~\cite{BresarKKS11} who considered the problem of finding a minimum size $d$-path cover in undirected graphs (i.e., find a small set $S\subseteq V$ of nodes such that every undirected path in $G-S$ has size at most $d$). However, we stress that if $G$ is a DAG, $\hat{G}$ is the corresponding undirected graph and $S\subseteq V$ is given such that $\depth(G-S) \leq d$ that this does not ensure that $\hat{G}-S$ contains no {\em undirected path} of length $d$. Thus, our reduction specifically address the needs for directed graphs and bounded indegree. 

\newcommand{\reducibleNPC}{$\REDUCIBLE_d$ is $\NPcomplete$ and it is $\NPhard$ to $1.3$-approximate $\MINREDUCIBLE_d$. Under the Unique Games Conjecture, it is hard to $(2-\epsilon)$-approximate $\MINREDUCIBLE_d$.}
\begin{theorem}
\thmlab{reducible:npc}
\reducibleNPC
\end{theorem}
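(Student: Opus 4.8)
The plan is to give a reduction from Vertex Cover to $\MINREDUCIBLE_d$ that is \emph{exact} — the optimum value is preserved on the nose — so that all known inapproximability for Vertex Cover transfers. Fix $d$ and let $(G_0=(V_0,E_0),k)$ be a Vertex Cover instance with $|V_0|=n_0$. I build a DAG $H$: for each $v\in V_0$ create a single node $x_v$, fix the topological order $x_1<\cdots<x_{n_0}$, and for each edge $\{u,v\}\in E_0$ with $u<v$ attach a fresh \emph{edge gadget} $Q_{uv}$, namely a directed path of length $d+1$ (so $d+2$ nodes) whose nodes in order are $\lceil d/2\rceil$ private nodes, then $x_u$, then $x_v$, then $\lfloor d/2\rfloor$ private nodes. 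Every non-$x$ node of a gadget is private to that gadget, so the $x_v$'s are the only shared vertices. One checks that $H$ is a DAG with $n:=|V(H)|=n_0+d\,|E_0|=O(n_0^2 d)$ nodes, so the reduction runs in polynomial time whenever $d$ is polynomially bounded — which is exactly the regime $0\le d\le n^{1-\epsilon}$ (and $d=0$ recovers Vertex Cover verbatim). I also note $\REDUCIBLE_d\in\NP$ since $\depth(G-S)$ is computable in polynomial time.

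The core claim is: $G_0$ has a vertex cover of size $\le k$ iff $H$ is $(k,d)$-reducible, and moreover from any $S$ with $\depth(H-S)\le d$ one can efficiently recover a vertex cover $C$ of $G_0$ with $|C|\le |S|$. For the extraction direction: each gadget $Q_{uv}$ contains a directed path of $d+1$ edges, so $\depth(H-S)\le d$ forces $S$ to delete some node of $Q_{uv}$; if that node is $x_u$ or $x_v$ put the corresponding original vertex into $C$, otherwise the node is private to the edge $\{u,v\}$ and we put $u$ into $C$. Then $C$ is a vertex cover, and since each private node lies in a unique gadget, the map (edge $\mapsto$ chosen deleted node) shows $|C|\le |S|$. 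Together with the forward direction below this gives that $\MINREDUCIBLE_d(H)$ equals the minimum vertex cover size of $G_0$, so a $\rho$-approximation for $\MINREDUCIBLE_d$ yields a $\rho$-approximation for Vertex Cover; the $1.3$-inapproximability of \cite{dinur2005hardness} and the $(2-\epsilon)$-inapproximability under the Unique Games Conjecture \cite{khot2002power,khot2008vertex} transfer immediately, and $\NPcompleteness$ of the decision problem $\REDUCIBLE_d$ follows the same way.

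The forward direction — a vertex cover $C$ of $G_0$ yields $S=\{x_v:v\in C\}$ with $\depth(H-S)\le d$ — is the main obstacle, because gadgets concatenate through shared $x$-vertices: a path $v_1v_2\cdots v_r$ in $G_0$ induces a directed path of $H$ running through $x_{v_1}\to x_{v_2}\to\cdots\to x_{v_r}$ with long private prefixes and suffixes, far longer than $d$. I would handle this with a structural lemma: along any directed path of $H$, the $x$-vertices encountered, say $x_{i_1},\dots,x_{i_r}$, satisfy $i_1<\cdots<i_r$ and form a path in $G_0$ — indeed, to pass from one $x$-vertex to the next a path must use the single direct edge $x_{i_t}\to x_{i_{t+1}}$ of a gadget, since the private chains are dead ends terminating at sinks. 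Consequently, once we delete a vertex cover $C$, no two surviving $x$-vertices can be consecutive in that sequence, so every surviving directed path contains at most one surviving $x$-vertex; and by the gadget's precise shape, the incoming part of such a path (lying inside one private prefix) has at most $\lceil d/2\rceil$ edges and the outgoing part (inside one private suffix) has at most $\lfloor d/2\rfloor$ edges, hence total length at most $d$. This is exactly why the gadget splits its $(d+1)$-edge path in half around the edge $x_u\to x_v$ rather than placing $x_u,x_v$ at the two endpoints: with the endpoints one obtains residual paths of length roughly $2d-2$ that a legal vertex cover need not destroy.

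Finally, to also obtain $\NPcompleteness$ of $\REDUCIBLE_d$ for graphs of bounded indegree, I would run the identical construction starting from Cubic Vertex Cover: the only nodes of $H$ whose indegree exceeds $2$ are the $x_v$'s, each of indegree $\deg_{G_0}(v)=3$, and replacing the three in-edges of $x_v$ by a constant-size in-directed binary tree lengthens paths into $x_v$ by only $O(1)$, which can be absorbed by slightly re-tuning the private-prefix/suffix lengths; this yields $\indeg(H)\le 2$ while preserving the correspondence.
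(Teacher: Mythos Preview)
Your proof is correct, and the reduction you give is genuinely different from the paper's. The paper also reduces from Vertex Cover and also preserves the optimum exactly, but its gadget is \emph{per vertex} rather than \emph{per edge}: after orienting $G_0$ along a fixed topological order $1,\dots,n_0$, it attaches to each original node $i$ a private in-chain of length $i-1$ and a private out-chain of length $n_0-i$, and then asks about $(k,n_0)$-reducibility of the resulting DAG. Your construction instead attaches, for each edge $\{u,v\}$, a private path of length $d+1$ with $x_u,x_v$ placed adjacently in the middle; all prefix/suffix pieces have fixed length $\lceil d/2\rceil$ and $\lfloor d/2\rfloor$ and are private to a single edge.

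What the two approaches buy: the paper's position-dependent chain lengths let a single chain per vertex simultaneously serve every incident edge, so the construction is a bit leaner, but the analysis has to track how the lengths telescope along a surviving original edge. Your per-edge gadgets make the structural lemma and the extraction map $|C|\le|S|$ essentially one-line arguments (private nodes are owned by a unique edge, and the only way for a directed path to move between $x$-vertices is via a gadget's single $x_u\to x_v$ edge). Your construction also decouples $d$ from $n_0$, so you directly get hardness across the full polynomial range of $d$, whereas the paper's reduction pins $d$ to the number of original vertices. Your final paragraph on bounded indegree via Cubic Vertex Cover and binary in-trees matches the paper's strategy for \thmref{reducible:indeg:npc}; as in the paper, the details of re-tuning the chain lengths after inserting the trees are routine but should be spelled out if you want a self-contained proof.
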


\newcommand{\reducibleIndegNPC}{Even for $\delta=O(1)$, $\REDUCIBLE_{d,\delta}$ is $\NPcomplete$.}
\begin{theorem}
\thmlab{reducible:indeg:npc}
\reducibleIndegNPC
\end{theorem}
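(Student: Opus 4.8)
The plan is to reuse the gadget blueprint behind \thmref{reducible:npc} --- which maps a graph $H$ to a DAG $G$ on which size-$e$ sets $S$ with $\depth(G-S)\le d$ are in size-preserving correspondence with vertex covers of $H$ --- but to run the reduction from \cubicVC, the restriction of \VC to $3$-regular graphs, which is \NPcomplete~\cite{garey1975complexity}. Membership $\REDUCIBLE_{d,\d}\in\NP$ is immediate: given $S$ we check $|S|\le e$ and compute $\depth(G-S)$ in polynomial time by dynamic programming along a topological order. The substantive part is showing the reduced DAG can be taken to have indegree $O(1)$.

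First I would inspect the construction of \thmref{reducible:npc}: every vertex $v\in V(H)$ becomes a vertex gadget, every edge $\{u,v\}$ becomes a path-like edge gadget wired into the gadgets of $u$ and $v$, and the only nodes whose indegree is not already constant are the ``junction'' nodes of a vertex gadget, whose indegree is $O(\deg_H(v))$, together with at most $O(1)$ global auxiliary nodes. Taking $H$ to be $3$-regular immediately bounds the vertex-gadget junctions by a constant. Any remaining bounded collection of large-fan-in nodes I would fix by replacing each incoming star with a balanced binary in-tree of fresh indegree-$2$ nodes; this stretches every directed path through such a node by a fixed additive $O(\log n)$ term that is identical for every $S$, so it is absorbed by using target depth $d'=d+O(\log n)$ in the reduced instance. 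If one wants $\d=2$ everywhere, one can instead apply the indegree-reduction transform of Alwen \etal~\cite{EC:AlwBloPie17} to all of $G$: it replaces each node $w$ by a path on $\indeg(w)=O(1)$ nodes, multiplies $\depth$ by a $\Theta(1)$ factor, and preserves the correspondence ``delete $w$'' $\leftrightarrow$ ``delete a node of $w$'s representative path'', turning the question ``is $G$ $(e,d)$-reducible?'' into ``is $G'$ $(e,\Theta(d))$-reducible?'' on an indegree-$2$ DAG.

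Finally I would re-check both directions on the modified DAG $G'$ with threshold $d'$. Forward: a vertex cover $C$ of $H$ with $|C|\le k$ yields the deletion set consisting of the representative(s) of the distinguished ``selector'' node of each vertex gadget with $v\in C$, and the same path-by-path case analysis as in \thmref{reducible:npc} --- now carrying the fixed additive/multiplicative slack of the indegree layers --- gives $\depth(G'-S)\le d'$. Converse: any $S$ with $|S|\le k$ and $\depth(G'-S)\le d'$ is first \emph{normalized}, swapping any deletion that lies inside an edge gadget or inside an in-tree layer for a deletion at the corresponding vertex-gadget selector node (never increasing $|S|$ and never increasing $\depth$), after which $S$ projects onto a vertex cover of $H$ of size $\le k$; this normalization also preserves approximation factors, so inapproximability of \cubicVC carries over as well, although \thmref{reducible:indeg:npc} as stated only asserts the exact decision version. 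The main obstacle is exactly this normalization after the indegree surgery: I must choose the edge-gadget lengths so that, even after the $O(\log n)$ (resp.\ $\Theta(1)$) stretching, no constant-size ``cheap cut'' inside an in-tree or a subdivided path can lower the depth below $d'$ without being replaceable by a selector deletion, and I must double-check that $3$-regularity of $H$ breaks neither direction of the original argument.
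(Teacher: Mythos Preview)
Your core plan --- reduce from \cubicVC\ and reuse the gadget of \thmref{reducible:npc}, observing that bounded degree in $H$ forces bounded indegree in the reduced DAG --- is exactly the paper's approach. But you are anticipating far more machinery than the construction actually contains. In the gadget of \thmref{reducible:npc}, each edge of $H$ becomes a \emph{single} directed edge (there are no ``path-like edge gadgets''), there are no global auxiliary nodes, and the only non-path nodes are the original vertices $i\in V(H)$, whose indegree in $G$ is at most $\deg_H(i)+1\le 4$ (the in-neighbors under the chosen topological orientation plus one edge from the attached incoming chain); every chain node has indegree at most $1$. So starting from a cubic $H$ already yields $\indeg(G)\le 4$ with no binary-tree surgery, no Alwen--Blocki--Pietrzak indegree transform, and no change to the depth threshold $d$. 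The converse direction likewise needs no normalization step: the ``untouched'' argument in the proof of \thmref{reducible:npc} (each deleted node can touch the chains of at most one $u\in U$, so an uncovered edge of $H$ survives with both endpoints' chains intact) goes through verbatim.
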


We defer the proofs of \thmref{reducible:npc} and \thmref{reducible:indeg:npc} to \appref{MissingProofs}. 
We leave open the question of efficient approximation algorithms for $\MINREDUCIBLE_d$. Lee \cite{Lee17} recently proposed a FPT $O\left(\log~d\right)$-approximation algorithm for the related problem $d$-path cover problem running in time $2^{O(d^3\log d)}n^{O(1)}$. However, it is not clear whether the techniques could be adapted to handle directed graphs and in most interesting cryptographic applications we have $d = \Omega\left(\sqrt{n}\right)$ so the algorithm would not be tractable. 

\section{LP Relaxation has Large Integrality Gap} \seclab{integralitygap}
In this section, we show that the natural LP relaxation for the integer program of DAG pebbling has a large integrality gap. 
We show similar results for the natural LP relaxation for the integer program of $\REDUCIBLE_d$ in \appref{reducibleGap}. 

Let $G=(V,E)$ be a DAG with maximum indegree $\delta$ and with $V=\{1,\ldots,n\}$, where $1,2,3,\ldots,n$ is topological ordering of $V$. We start with an integer program for DAG pebbling, in Figure~\ref{fig:IP}.
\begin{figure}[htb]
\vspace{-0.1in}
\begin{tabular}{ll}
{\small $\quad\min \sum_{v \in V} \sum_{t=0}^{n^2} x_v^t$. s.t.} &
{\small (3) $\forall v\in\sinks(G)$, $\sum_{t=0}^{n^2} x_v^t \geq 1$. }\\
{\small (1) $x_v^t \in \{0,1\}$~~ $\forall 1 \leq v \leq n$ and $0\leq t \leq n^2$.}  &
{\small (4) $\forall v \in V \backslash \mathsf{sources}(G)$, $0 \leq t \leq n^2-1$}  \\
{\small (2) $x_v^0 = 0$ ~~~~~~ $\forall 1 \leq v \leq n$.}  &
{\small $\quad\quad x_v^{t+1} \leq x_v^{t} + \frac{\sum_{v' \in\parents{(v)}} x_{v'}^t}{\left|\parents{(v)} \right|}\,.$}
\end{tabular}
\caption{Integer Program for Pebbling.} \label{fig:IP}
\vspace{-0.1in}
\end{figure}
Intuitively, $x_v^t = 1$ if we have a pebble on node $v$ during round $t$. Thus, Constraint 3 says that we must have a pebble on the final node at some point. Constraint 2 says that we do not start with any pebbles on $G$ and Constraint 4 enforces the validity of the pebbling. That is, if $v$ has parents we can only have a pebble on $v$ in round $t+1$ if either (1) $v$ already had a pebble during round $t$, or (2) all of $v$'s parents had pebbles in round $t$. It is clear that the above Integer Program yields the optimal pebbling solution. 

We would like to convert our Integer Program to a Linear Program. The natural relaxation is simply to allow $0 \leq x_v^t \le 1$. However, we show that this LP has a large integrality GAP $\tilde{\Omega}\left(\frac{n}{\log n}\right)$ even for DAGs $G$ with constant indegree. In particular, there exist DAGs with constant indegree $\delta$ for which the optimal pebbling has $\pcc(G) = \tilde{\Omega}\left(\frac{n^2}{\log n}\right)$~\cite{EC:AlwBloPie17}, but we will provide a fractional solution to the LP relaxation with cost $O\big(n\big)$. 


\newcommand{\thmIPgap}{Let $G$ be a DAG. Then there is a fractional solution to our LP Relaxation (of the Integer Program in Figure \ref{fig:IP}) with cost at most $4n$.}
\begin{theorem} \thmlab{IPgap}
\thmIPgap
\end{theorem}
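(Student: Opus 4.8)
The plan is to exhibit, for an arbitrary DAG $G$ on $n$ nodes, an explicit feasible point of the LP relaxation (all $x_v^t\in[0,1]$) whose objective $\sum_{v}\sum_{t=0}^{n^2}x_v^t$ is $O(n)$. The solution will be a ``fractional blanket'': place the \emph{same} tiny value $c\approx 1/n^2$ on every node $v$, but only from a round determined by the depth of $v$ onward. The reason this can be feasible even though $G$ may have large indegree is the \emph{averaging} in Constraint~(4): if at some round $t$ \emph{all} of $v$'s parents carry value $c$, then $\frac{\sum_{v'\in\parents(v)}x_{v'}^t}{|\parents(v)|}=c$, so $v$ is permitted to go from $0$ to $c$ in round $t+1$ at no extra ``penalty''. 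We will stagger the blanket so that this happens for every node simultaneously.

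Concretely, for each $v$ let $d(v)$ be the number of edges on a longest path from a source to $v$; then $0\le d(v)\le n-1$ and, crucially, every parent $v'$ of $v$ has $d(v')\le d(v)-1$. Assuming $n\ge 2$ (the case $n=1$ is immediate: set $x_1^1=1$), set $c=2/n^2$ and define $x_v^t=0$ for $0\le t\le d(v)$ and $x_v^t=c$ for $d(v)<t\le n^2$. I would then verify feasibility. Constraints~(1) and~(2) are immediate ($0\le c\le 1/2$ and $d(v)\ge0$). For Constraint~(4) with a non-source $v$ of depth $d=d(v)$: the cases $t+1\le d$ (left side $=0\le$ right side $\ge 0$) and $t\ge d+1$ (left side $=c=x_v^t\le$ right side) are trivial, and the only real case is $t=d$, where $x_v^{d+1}=c$ must be at most $0+\frac{\sum_{v'\in\parents(v)}x_{v'}^{d}}{|\parents(v)|}$; since every parent has depth $\le d-1<d\le n^2$ it is already ``on'' at round $d$, so that average equals exactly $c$ and the inequality holds with equality. (Sources carry $x^t=c$ for $t\ge1$; it matters here that Constraint~(4) is imposed only on non-sources, since for a source it would force $x^1\le x^0=0$.) For each sink $s$ we get $\sum_t x_s^t=c(n^2-d(s))\ge \tfrac{2}{n^2}(n^2-n+1)\ge1$, so Constraint~(3) holds. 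Finally the objective is $\sum_v c\,(n^2-d(v))\le c\cdot n\cdot n^2=2n\le 4n$.

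The main obstacle is really just finding the right construction; once one writes down the depth-staggered blanket, the only step with content is the feasibility check at the ``turn-on'' round $t=d(v)$, and that step uses nothing beyond the interplay between averaging over parents in Constraint~(4) and strict monotonicity of $d(\cdot)$ along edges. This is precisely the slack unavailable to an integral pebbling, where putting a pebble on $v$ requires \emph{whole} pebbles on \emph{all} parents; and it is why the $\tilde\Omega(n/\log n)$ integrality gap appears as soon as this $O(n)$ fractional bound is paired with the existence of constant-indegree DAGs $G$ with $\pcc(G)=\tilde\Omega\!\left(n^2/\log n\right)$~\cite{EC:AlwBloPie17}. If a cleaner constant is preferred over chasing the exact value $4n$, one can instead keep each node ``on'' for only $2n$ rounds after $d(v)$ with $c=1/n$; the analysis is identical and still gives cost $O(n)$.
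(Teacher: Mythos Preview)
Your argument is correct, and it takes a genuinely different route from the paper's own proof. The paper proceeds in two phases: first, using the topological order $1,\ldots,n$, it sets $x_i^t=\tfrac{1}{n}$ for all $i\le t$ and $t\le n$ (so each node ``turns on'' at value $\tfrac{1}{n}$ once all its predecessors are on), and then in a second phase of $\lceil\log n\rceil$ rounds it \emph{doubles} every value via $x_v^{t+1}=x_v^t+\tfrac{1}{|\parents(v)|}\sum_{v'}x_{v'}^t$ until every $x_v^t$ reaches~$1$. The three phases contribute at most $n$, $2n$, and $n$ to the objective, respectively, giving the bound $4n$.

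Your construction is more elementary: you never try to reach value~$1$, and instead rely on the fact that Constraint~(3) only asks for the \emph{sum} over $n^2$ rounds to be at least~$1$. A constant blanket of height $c=\tfrac{2}{n^2}$, switched on at the depth of each node, suffices; feasibility at the single ``turn-on'' round uses exactly the same averaging observation the paper exploits. This buys you a cleaner bound of $2n$ and avoids the doubling phase entirely. The paper's approach, on the other hand, actually drives every variable to~$1$, which is why it also adapts (as the paper later sketches) to variants where one insists the pebbling finish in roughly $n$ rounds rather than $n^2$; your blanket construction would not survive that restriction, since it relies essentially on having $\Theta(n^2)$ rounds available.
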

\begin{proof}
In particular, for all time steps $t\le n$, we set $x_i^t=\frac{1}{n}$ for all $i\le t$, and $x_i^t=0$ for $i>t$. 
For time steps $n<t\le n+\ceil{\log n}$, we set $x_v^t=\min\left(1,\frac{2^{t-n}}{n}\right)$ for all $v\in V$. 
We first argue that this is a feasible solution. 
Trivially, Constraints 1 (with the LP relaxation) and 2 are satisfied. 
Moreover, for $t=n+\ceil{\log n}$, $x_v^t=1$ for all $v\in V$, so Constraint 3 is satisfied.
Note that for $1<t\le n$, if $x_v^t=\frac{1}{n}$, then $x_u^{t-1}=\frac{1}{n}$ for all $u<v$, so certainly $\sum_{v' \in\parents(v)} \frac{x_{v'}^{t-1}}{\left|\parents(v) \right|}\ge\frac{1}{n}$. 
Furthermore, $x_v^{t-1}=\min\left(1,\frac{2^{t-1-n}}{n}\right)$ for $n<t\le n+\ceil{\log n}$ implies that setting
\[x_v^t=x_v^{t-1}+\sum_{v' \in\parents(v)} \frac{x_{v'}^t}{\left|\parents(v) \right|}=\frac{2^{t-1-n}}{n}+\frac{2^{t-1-n}}{n}=\frac{2^{t-n}}{n}\]
is valid. 
Therefore, Constraint 4 is satisfied.

Finally, we claim that $\sum_{v \in V} \sum_{t \leq n+\ceil{\log n}} x_{v}^t \leq 4n$. To see this, note that for every round $t\le n$, we have $x_{v}^t \leq \frac{1}{n}$ for all $v\in V$. Thus, \[\sum_{v \in V} \sum_{t \leq n} x_{v}^t \leq \sum_{v \in V} \sum_{t \leq n}\frac{1}{n}\le\sum_{v\in V}1=n.\]
For time steps $n<t<n+\ceil{\log n}$, note that $x_v^t\le\frac{2^{t-n}}{n}$ for all $v\in V$. Thus,
\[\sum_{v \in V} \sum_{n<t<n+\ceil{\log n}} x_{v}^t \leq \sum_{n<t<n+\ceil{\log n}}2^{t-n}\le 2n.\]
Finally, for time step $t=n+\ceil{\log n}$, $x_v^t=1$ for all $v\in V$. Consequently,
\[\sum_{v \in V} \sum_{t=n+\ceil{\log n}} x_{v}^t=n.\] 
Therefore,
\[\sum_{v \in V} \sum_{t \leq n+\ceil{\log n}} x_{v}^t \leq 4n.\]
\end{proof}
%

One tempting way to ``fix'' the linear program is to require that the pebbling take at most $n$ steps since the fractional assignment used to establish the integrality gap takes $2n$ steps. There are two issues with this approach: (1) It is not true in general that the optimal pebbling of $G$ takes at most $n$ steps. See \appref{PebblingTimeExample} for a counter example and discussion. (2) There exists a family of DAGs with constant indegree for which we can give a fractional assignment that takes exactly $n$ steps and costs $O(n \log n)$. Thus, the integrality gap is still $\tilde{\Omega}(n)$. Briefly, in this assignment we set $x_i^i = 1$ for $i \leq n$ and for $i < t$ we set $x^t_i = \max\left\{ \frac{1}{n}, \{2^{-\mathbf{dist}(i,t+j)-j+2 } : j \geq 1 \} \right\}$, where $\mathbf{dist}(x,y)$ is the length of the shortest path from $x$ to $y$. In particular, if $j=1$ (we want to place a `whole' pebble on node $j+t$ in the next round by setting $x^{j+t}_{j+t} = 1$ ) and node $i$ is a parent of node $t+j$ then we have $\mathbf{dist}(i,t+j)=1$ so we will have $x^{j+t}_{i} = 1$ (e.g., a whole pebble on node $i$ during the previous round). 

\vspace{-0.12in}
\section{Conclusions}
\seclab{open}
\vspace{-0.06in}
We initiate the study of the computational complexity of cumulative cost minimizing pebbling in the parallel black pebbling model. 
This problem is motivated by the urgent need to develop and analyze secure data-independent memory hard functions for password hashing. 
We show that it is NP-Hard to find a parallel black pebbling minimizing cumulative cost, and we provide evidence that the problem is hard to approximate. 
Thus, it seems unlikely that we will be able to develop tools to automate the task of a cryptanalyst to obtain strong upper/lower bounds on the security of a candidate iMHF. 
However, we cannot absolutely rule out the possibility that an efficient approximation algorithm exists. 
In fact, our results only establish worst case hardness of graph pebbling. We cannot rule out the existance of efficient algorithms to find optimal pebblings for practical iMHF proposals such as Argon2i~\cite{Argon2} and DRSample~\cite{CCS:AlwBloHar17}. 
The primary remaining challenge is to either give an efficient $\alpha$-approximation algorithm to find a pebbling $P \in \pPeb$ with $\cc(P) \leq \alpha \pcc(G)$ or show that $\pcc(G)$ is hard to approximate. 
We believe that the problem offers many interesting theoretical challenges and a solution could have very practical consequences for secure password hashing. 
It is our hope that this work encourages others in the TCS community to explore these questions.
\section*{Acknowledgements}
We would like to thank Ioana Bercea and anonymous reviewers for helpful comments that improved the presentation of the paper.
The work was supported by the National Science Foundation under NSF Awards $\#1649515$ and $\#1704587$. The opinions expressed in this paper are those of the authors and do not necessarily reflect those of the National Science Foundation.
\def\shortbib{0}
\vspace{-0.1in}
\bibliographystyle{alpha}
\bibliography{bounded-parallel-mhf,password,abbrev3,crypto}
\appendix
\section{Missing Proofs} \applab{MissingProofs}

\begin{reminderlemma}{\lemmref{exists:solution}}
\LemmaExistSolution
\end{reminderlemma}
\newline
\begin{proofof}{\lemmref{exists:solution}}
Suppose the given instance of \BTwoLC has a valid solution, $\{x_{i,j}\}$. 
Recall that a pebble must pass $m$ times through each of the $\tau$ chains of length $c$ representing each variable. 
For a set $k$, let $x_{k,j_1}\le x_{k,j_2}\le\ldots\le x_{k,j_n}$. 
We start the $k$\th pass through the $\tau$ chains by placing a pebble on $C_{j_n}^1,\ldots,C_{j_n}^{\tau}$, the $\tau$ chains representing variable $j_n$. 
At each subsequent time step, we move the existing pebbles to the next node along the chain. 
When the pebbles on chains $C_{j_n}^1,\ldots,C_{j_n}^\tau$ reach the $(x_{k,j_n}-x_{k,j_{n-1}}+1)$\th nodes, we place a pebble on $C_{j_{n-1}}^1,\ldots,C_{j_{n-1}}^\tau$, the $\tau$ chains representing variable $j_{n-1}$. 
We continue this process by moving existing pebbles to the next node along each chain for each subsequent time step. 

When pebbles on chains $C_{j_l}^1,\ldots,C_{j_l}^\tau$ reach the $(x_{k,j_l}-x_{k,j_{l-1}}+1)$\th nodes, we place a pebble on $C_{j_{l-1}}^1,\ldots,C_{j_{l-1}}^\tau$, the $\tau$ chains representing variable $j_{l-1}$. 
Thus, the gadgets $E_1,\ldots,E_m$ which are satisfied by $x_{k,1},\ldots,x_{k,m+1}$ can be pebbled during this pass, since by construction, we offset the positions of the pebbles by the appropriate distances. 
When a pebble reaches the end of its chain, we remove the pebble. 
Hence, we see that each chain has $c$ time steps with pebbles, and each of those steps needs only one pebble. 
There are $n$ variables, $\tau$ chains representing each variable, and $m$ passes through each chain.
Across all variable chains, the cumulative complexity is $\tau cmn$ since there are $n$ variables and $\tau$ chains representing each variable. 

Similarly, making $m$ walks through the paths  $C_{j}^1,\ldots,C_{j}^\tau$ allows us to pebble the path $M_j$ ($j \leq n$). These paths each have length $cm$ and we keep at most one pebble on $M_j$ at any point in time. There may be a delay of up to $c$ time steps between consecutive walks through the paths  $C_{j}^1,\ldots,C_{j}^\tau$ during which we cannot progress our pebble through the path $M_j$. However, there are at most $2cm$ total steps in which we have a pebble on $M_j$. Thus, the cumulative complexity across all paths $M_1,...,M_n$ is at most $2cmn$.

Likewise, since each equation gadget $E_j$ is represented by a chain, one pebble at each time step suffices for each of these paths. We may have to keep a pebble on the gadget $E_j$ while we make $m$ walks through the paths, but we keep this pebble on $E_j$ for at most $2cm$ steps (each walk takes $c$ steps and the delay between consecutive walks is at most $c$ steps).   
Since there are $k$ equations, and there is a chain for each equation, the cumulative complexity across all equation chains is at most $2ckm$.

There is one final sink node, so the cumulative complexity for an instance of \BTwoLC with a valid solution is at most $\tau cmn + 2cmn + 2ckm + 1$.
\end{proofof}

\claimref{claim:sync} will be useful in our proof of \lemmref{no:solution}.
\begin{Claim}
\claimlab{claim:sync}
Any pebbling strategy $P= \left(P_0,\ldots,P_t\right) \in \pPeb_{G_{\BTwoLC}}$ with $\pcc(P) = \pcc(G)$ must satisfy the following property: for all $i \in [n], j \in [\tau]$ and all pebbling rounds $y \in [t]$ and $z \in [c]$ we have $v_i^{j,z} \in P_y \leftrightarrow  \{v_i^{j',z}: ~j' \in [\tau]\} \subseteq P_y$. In particular, whenever we have a pebble on node $v_i^{j,z}$ (the $z$'th node in the $j$'th path gadget $C_i^j$ for variable $x_i$) we also have pebbles on each of the nodes $v_i^{1,z},\ldots,v_i^{\tau,z}$. 
\end{Claim}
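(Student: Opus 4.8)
I would prove this by an exchange argument. Assume, for contradiction, that $P=(P_0,\ldots,P_t)$ is an optimal pebbling (so $\cc(P)=\pcc(G_{\BTwoLC})$) in which synchronization fails: there is a variable $x_i$ whose $\tau$ chain copies $C_i^1,\ldots,C_i^\tau$ are not all pebbled in lock-step, i.e.\ some $v_i^{j,z}\in P_y$ while $v_i^{j',z}\notin P_y$. For each copy $l\in[\tau]$ write $a_i^l=\sum_{y=1}^{t}\bigl|P_y\cap\{v_i^{l,1},\dots,v_i^{l,c}\}\bigr|$ for the cost $P$ expends on that copy, and fix $j^\star$ minimizing $a_i^l$. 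Define $P'$ to agree with $P$ everywhere except on the chain nodes of $C_i^1,\ldots,C_i^\tau$, where every copy is forced to imitate copy $j^\star$: put $v_i^{l,z}\in P'_y$ exactly when $v_i^{j^\star,z}\in P_y$.

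The first step is to check that $P'$ is a legal pebbling. For a chain node $v_i^{l,z}$ with $z\ge 2$, if it is freshly placed in round $y$ of $P'$ then $v_i^{j^\star,z}$ is freshly placed in round $y$ of $P$, so $v_i^{j^\star,z-1}\in P_{y-1}$ and hence $v_i^{l,z-1}\in P'_{y-1}$; nodes $v_i^{l,1}$ are sources. For every other node $P'$ agrees with $P$, so the only thing to verify is that whenever an equation node $e_{i'}^{a}$ or a path node $z_{i'}^{p+qc}$ is freshly placed in $P$ in some round $y$, all of its parents still carry pebbles in round $y-1$ of $P'$. This holds because such a gadget node has incoming edges from \emph{all} $\tau$ copies at the relevant chain position(s); since those edges are satisfied in $P$, in particular copy $j^\star$ carries a pebble there in round $y-1$, hence so do all copies in $P'$. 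The sink node and all gadgets not incident to $x_i$ are untouched, so conditions (1)--(3) of the pebbling definition carry over.

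Next, the cost. Outside the chains of $x_i$ nothing changed, and on those chains $P'$ pays $\tau\,a_i^{j^\star}\le\sum_{l=1}^{\tau}a_i^l$, the amount $P$ paid, by the choice of $j^\star$; thus $\cc(P')\le\cc(P)$, so $P'$ is again optimal and now has the copies of $x_i$ synchronized. Iterating over the finitely many variables yields an optimal, fully synchronized pebbling. To obtain the stated conclusion that \emph{every} optimal pebbling is synchronized, one must upgrade the inequality to a strict decrease whenever $P$ is genuinely desynchronized, i.e.\ rule out the borderline case $a_i^1=\dots=a_i^\tau$ with distinct per-copy trajectories. For that I would instead track a ``leading'' pebble $v_i^{j,z}\in P_y$ (with $v_i^{j',z}\notin P_y$) through its future uses: the only consumers of position $z$ of $x_i$ are the successor chain node and gadget nodes, and the gadget nodes need \emph{all} copies present, so either this pebble is never actually used --- and can simply be deleted, strictly lowering $\cc$ --- or copy $j'$ must later bring its own pebble to position $z$, in which case copy $j$ held a pebble at least one round earlier than necessary and can be rescheduled to mirror copy $j'$ from the last round at which the copies of $x_i$ agreed; either way $\cc(P)$ strictly drops, contradicting optimality.

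The routine bookkeeping is the legality check and the cost accounting; the real obstacle is the strictness argument of the last step, namely showing that a desynchronized optimal pebbling always contains a provably removable pebble even when all copies carry equal total cost. The subtlety is that a leading pebble may be kept alive for many rounds and reused far downstream, so the delete/reschedule surgery has to be propagated along the chain --- and the affected gadget rounds re-checked --- while guaranteeing that no new pebble is ever introduced elsewhere; pinning down this propagation and confirming that it yields a net saving is the crux of the proof.
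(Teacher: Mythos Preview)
Your approach diverges from the paper's exactly where you run into trouble. Rather than copying a cheapest trajectory onto all $\tau$ chains --- which, as you observe, need not strictly lower the cost --- the paper simply \emph{deletes} every unsynchronized chain pebble: set
\[
P'_y = P_y \setminus UNSYNC_y,\qquad UNSYNC_y=\bigl\{v_i^{j,z}:\{v_i^{j',z}:j'\in[\tau]\}\not\subseteq P_y\bigr\}.
\]
A desynchronized $P$ has $UNSYNC_y\neq\emptyset$ for some $y$, so $\cc(P')<\cc(P)$ is immediate; there is no equal-cost borderline to analyze and no rescheduling to propagate.

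The legality check then hinges on the same structural fact you already used: every non-chain node that has some $v_i^{j,z}$ as a parent in fact has the whole block $\{v_i^{j',z}:j'\in[\tau]\}$ among its parents. So whenever such a node is legally placed in $P$, all $\tau$ copies at that position are present in the previous round, none lie in $UNSYNC$, and the placement survives in $P'$. For a surviving chain node $v_i^{j,z}\in P'_{y+1}$ all $\tau$ copies lie in $P_{y+1}$, and the paper appeals to legality of $P$ to put the copies at position $z-1$ into $P_y$ and hence into $P'_y$.

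Your replication construction is legal and gives $\cc(P')\le\cc(P)$, so it does establish that \emph{some} optimal pebbling is synchronized --- which, incidentally, is all the downstream use in Lemma~2 actually needs. But the claim as stated says \emph{every} optimal pebbling is synchronized, and your proposed patch for strictness (track a leading pebble, delete if unused, otherwise reschedule to mirror a lagging copy) is the genuine gap: you have not shown that the rescheduling cascade along the chain terminates with a net saving, and you acknowledge as much. The deletion approach makes this entire difficulty disappear.
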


\begin{proofof}{\claimref{claim:sync}}
Let $P= \left(P_0,\ldots,P_t\right) \in \pPeb_{G_{\BTwoLC}}$ be a pebbling that does not satisfy our property. We will construct another legal pebbling $P'= \left(P_0',\ldots,P_t'\right)\in \pPeb\left(G_\BTwoLC\right)$ with $\pcc(G) \leq \cc(P') < \cc(P)$. 
For time step $y$, set
\[P_y' = P_y \setminus UNSYNC_y \]
where
\[ UNSYNC_y =\left\{ v_i^{j,z} : i\in[n],j\in[\tau],z\in[c] \text{ and } \{ v_i^{j',z} : j' \in [\tau]\}  \not\subset P_y  \right\} \ . \]
We clearly have $\left|P_y' \right| \leq \left| P_y\right|$ at each time step $y$. Furthermore, because $P$ does not satisfy our property we must have $\left| P_y \right| >  \left|P_y' \right|$ for some step $y$. Thus, \[\cc(P')=\sum|P_y'| < \sum|P_y| = \cc(P) \ .\] It remains to show that $P'$ is a legal pebbling i.e.,  $\forall y \forall v \in P'_{y+1}$ we have
$\parents(P'_{y+1}) \subseteq P_y'$. 
For a node $v \in P'_{y+1}$, we have four cases:
\begin{enumerate}
\item
For some variable $x_i$ the node $v$ is a part of one of the $\tau$ gadgets for that variable i.e.,  $v = v_i^{j,z} \in C_i^j$ for some $j \in [\tau]$. 
By construction of $P'$ we must also have $ v_i^{j',z} \in P_{y+1}'$ for each $j'\in [\tau]$. 
Since $v_i^{j',z} \in P_{y+1}' \subseteq P_{y+1}$, we must have $\parents(v_i^{j',z}) = \{ v_i^{j',z-1} \} \subset P_y$ by the legality of the original pebbling P. 
Thus, $\{ v_i^{j',z-1} : j' \in [\tau]\} \subseteq P_y'$ since $UNSYNC_y \cap  \{ v_i^{j',z-1} : j' \in [\tau]\} = \emptyset$.  
It follows that $\parents(v) \subseteq P_y'$.
\item
$v = z_i^{p+qc} \in M_i$. In this case we observe that, since $v \in P_{y+1}$, we have $\parents(v) \subseteq P_y$ by the legality of the original pebbling $P$. By construction, of $G_{B2LC}$ we have $\parents(z_i^{p+qc}) = \{v_i^{j',p} : j \in [\tau] \} \cup \{z_i^{ p+qc -1}\}$. In the construction of $P_y'$ we would not discard any of these pebbles since $\parents(z_i^{p+qc})  \cap UNSYNC_y = \emptyset$. Thus,  we have $\parents(v) \subseteq P_y'$.
\item
$v = e_i^k \in E_i$ for some equation gadget $E_i$. The proof that $\parents(v) \subseteq P_y'$ is essentially the same as in case 2.
\item
$v$ is a sink. By legality of $P$ we have $\parents(v) \subseteq P_y$. In this case we note that parents(sink) is disjoint from all of the variable gadgets $C_i^j$. Since, $P_y' = P_Y\setminus UNSYNC_y$ can only remove pebbles on the variable gadgets $C_i^j$ it follows that $\parents(v) \subseteq P_y'$.
\end{enumerate}
In each case, $\parents(v) \subseteq P_y'$ so $P'$ is a legal pebbling.
\end{proofof}

\begin{reminderlemma}{\lemmref{no:solution}}
\LemmaNoSolution
\end{reminderlemma}
\newline
\begin{proofof}{\lemmref{no:solution}}
Suppose the given instance of \BTwoLC does not have a valid solution and let $P=(P_0,\ldots,P_t) \in \pPeb_{G_{\BTwoLC}}$  be given such that $\cc(P) = \pcc\left(G_{\BTwoLC} \right)$ i.e. $P$ is optimal. We first observe that in any legal pebbling $P=(P_0,\ldots,P_t) \in \pPeb\left(G_{\BTwoLC} \right)$ we  must walk a pebble down each of the paths $M_1,\ldots,M_n$ of length $cm$. Let $t_i^z$ denote the first time step in which we place a pebble on $v_i^z$ --- the $z$'th node on path $M_i$. Clearly, $t_i^z < t_i^{z+1}$ for $z < cm$ and at time $t_i^z-1$ we must have a pebble on all nodes in $\parents\big(v_i^z\big)$. In particular, $v_i^{1,y},\ldots,v_i^{\tau,y} \in P_{t_{i}^z-1}$ where $y = z \pmod c$. Thus, for each node $v_i^{j,z}$ with $i \in [n], j \in [\tau], z \in [c]$ there are at least $m$ distinct rounds $y \in \left\{t_i^{z},t_i^{z+c},\ldots,t_i^{z+c(m-1)}\right\}$ during which $v_i^{j,z} \in P_y$. 

We say that the pebbling $P$ ``cheats'' if it does not correspond to a valid  \BTwoLC solution. Formally, $P$ is a ``cheating'' pebbling if for some node $v_i^{j,z}$ there are at least $m+1$ distinct rounds $y \in \{ y_1, \ldots , y_{m+1}\}$ during which $v_i^{j,z} \in P_y$. By Claim \claimref{claim:sync} we must have  $\left\{v_i^{1,z},\ldots, v_i^{\tau,z}\right\} \in P_y$ for each $y \in \{ y_1, \ldots , y_{m+1}\}$. Thus,  if $P$ is a cheating pebbling we have 
\begin{eqnarray*} \cc(P)  &\geq& \sum_{y \in [t]}  \sum_{j \in [\tau]} \left| P_y \cap v_i^{j,z}\right| + \sum_{y \in [t]} \sum_{j \in [\tau]} \sum_{\stackrel{i \in [n],z \in [c]~s.t}{(i',z') \neq (i,z)}}  \left| P_y \cap v_i^{j,z}\right|\\
&\geq&\tau (m+1) + \sum_{j \in [\tau]} \sum_{\stackrel{i \in [n],z \in [c]~s.t}{(i',z') \neq (i,z)}} m \\
&=& \tau (m+1)+ \left( cn\tau m - \tau m\right) \\ 
&=& \tau cmn + \tau  \ .  \end{eqnarray*}

Any non-cheating pebbling corresponds to a valid \BTwoLC solution. If an equation $x_{\alpha_i} + c_i + x_{\beta_i}$ is not satisfied by one of the assignments in the \BTwoLC  solution then there is no legal way to pebble the equation chain $E_i$ without cheating because at no point during the $m$ walks are both variables in the equation offset by the correct amount. Thus, any instance of \BTwoLC which does not have a valid solution requires that $\pcc\left(G_{\BTwoLC} \right) \geq \tau cmn + \tau$.
\end{proofof}

\begin{remindertheorem}{\thmref{minCCd}}
\thmminCCd
\end{remindertheorem}
\newline
\begin{proofof}{\thmref{minCCd}} (sketch)
We sketch the construction for $\delta=2$ due to its similarity to the general case where the indegree is not restricted and we highlight the differences between the constructions.  
Although we maintain $\tau$ chains representing each variable, we can no longer maintain the gadgets for the equation, $E_1,\ldots,E_m$, for which each vertex has indegree $2\tau+1$, one edge from its predecessor in the gadget, and $2\tau$ edges from the chains representing the variables involved in the equation. 
Instead, in place of the $2\tau$ edges, we construct a binary tree, where the bottom layer of the tree has at least $\binom{\tau}{2}$ nodes.  
Each of the $\binom{\tau}{2}$ connects to a separate instance of the chains representing the variables involved in the equation. 
Thus, if the equation involves variables $x_i$ and $x_j$, then each of the $\binom{\tau}{2}$ nodes will have an edge from one of the $\tau$ chains representing $x_i$, and an edge from one of the $\tau$ chains representing $x_j$, offset by an appropriate amount. 
This construction ensures that the root of the tree is pebbled only if all equations are satisfied, but any ``dishonest'' walk along the chains will require at least $\tau$ additional pebbles. 
Similarly, we replace the $2\tau$ edges in the paths $P_1,\ldots,P_n$ of length $cm$ with binary trees with base $\binom{\tau}{2}$. 
Finally, we replace the $m+n$ incident edges to the sink node with a binary tree with base $\binom{m+n}{2}$. 
Since each of these terms are polynomial in $c,m,n$, there exists a $\tau$ that is also polynomial in $c,m,n$ which allows us to distinguish between honest walks and dishonest walks. 
As a result, we can decide between instances of $\BTwoLC$.
\end{proofof}

\begin{remindertheorem}{\thmref{reduce:3part}}
\thmreduceThreePart
\end{remindertheorem}
\newline
\begin{proofof}{\thmref{reduce:3part}}
Given an instance $S$ of \textsf{3-PARTITION}, first sort $S$ so that $S=\{s_1,s_2,\ldots,s_m\}$ and $s_i\le s_j$ for any $i\le j$. Let $T=\sum_{i=1}^m s_m$. We then create $mn=3n^2$ equations:
\begin{gather*}
x_1+s_1=x_2,\quad x_2+s_2=x_3,\quad\ldots,\quad x_m+s_m=x_{m+1},\\
x_1+0=x_2,\quad x_2+0=x_3,\quad\ldots,\quad x_m+0=x_{m+1},\\
x_1+T=x_2,\quad x_2+T=x_3,\quad\ldots,\quad x_m+T=x_{m+1},\\
x_1+2T=x_2,\quad x_2+2T=x_3,\quad\ldots,\quad x_m+2T=x_{m+1},\\
x_1+(n-2)T=x_2,\quad x_2+(n-2)T=x_3,\quad\ldots,\quad  x_m+(n-2)T=x_{m+1},
\end{gather*}
Finally, we create the additional $n$ equations:
\begin{equation}
\label{eq:constraints}
x_1+\frac{T}{n}+3(i-1)(n-2)T=x_{m+1},
\end{equation}
for $i\in[n]$. This gives a total of $3n^2+n$ equations so that the reduction is clearly achieved in polynomial time.

Recall that \textsf{B2LC} is true if and only if there exist $\{a_{i,j}\}$, $i\in[n]$, $j\in[m]$ so that each equation $x_i+c_{i,j}=x_j$ is satisfied by  the assigning $x_i=a_{i,k}$ and $x_j=a_{j,k}$ for some $k$. We show that there exists a solution to \textsf{3-PARTITION} if and only if there exists a solution to \textsf{B2LC}.

Suppose there exists a partition of $S$ into $n$ triplets $S_1,S_2,\ldots,S_n$ so that the sum of the integers in each triplet is the same, and equals $\frac{T}{n}$.
We set $a_{1,1}=0$ and for each $s_i$ that appears in $S_1$, we choose to satisfy the equation, $a_{1,i}+s_i=a_{1,i+1}$. 
Otherwise, if $s_i$ does not appear in $S_1$, we choose to satisfy the equation $a_{1,i}+0=a_{1,i+1}$.

Suppose that $a_{i,j}$ are defined for all $i<k$.
Then we set $a_{k,1}=0$ and for each $s_i$ that appears in $S_k$, we choose to satisfy the equation, $a_{k,i}+s_i=a_{k,i+1}$.
If $s_i$ appears in $S_i$ for some $i<k$, then we choose to satisfy the equation $a_{k,i}+(i-2)T=a_{k,i+1}$.
Otherwise, if $s_i$ does not appear in $S_1,\ldots,S_k$, we choose to satisfy the equation $a_{k,i}+(i-1)T=a_{k,i+1}$.

Thus, to get $a_{i,m+1}$ from $a_{i,1}$, we add in three elements whose sum is $\frac{T}{n}$. 
That is, we pick three unused elements of $S$, say $s_t,s_u,s_v$, and we choose to satisfy the equations
$x_t+s_t=x_{t+1}$, $x_u+s_u=x_{u+1}$, and $x_v+s_v=x_{v+1}$. 
We then add in $3(i-1)$ instances of $(i-2)T$, for each of the elements which appear in $S_1,\ldots,S_{i-1}$. 
Finally, we add in $(3n-3i)$ instances of $(i-1)T$.
Hence, it follows that 
\begin{align*}
a_{i,1}+\frac{T}{n}+3(i-1)(i-2)T+(3n-3i)(i-1)T&=a_{i,m+1}\\
a_{i,1}+\frac{T}{n}+(3in-3n-6i+6)T&=a_{i,m+1}\\
a_{i,1}+\frac{T}{n}+3(i-1)(n-2)T&=a_{i,m+1},\\
\end{align*}
so that all equations in \ref{eq:constraints} are satisfied. Thus, a solution for \textsf{3-PARTITION} yields a solution for \textsf{B2LC}.

Suppose there exists a solution for the above instance of \textsf{B2LC}. 
Observe that since there are $n$ instances of the equation $x_1+\frac{T}{n}+3(i-1)(n-2)T=x_{m+1}$, then the equation must hold for each $a_{i,1},a_{i,m+1}$. 
Thus, $a_{i,1}+\frac{T}{n}\equiv a_{i,m+1}\pmod{T}$ for all $i$.
Hence, to obtain $a_{1,m+1}$ from $a_{1,1}$, we must take three equations of the form $x_i+s_i=x_{i+1}$ since the summing less than three elements in $S$ is less than $\frac{T}{n}$, while the summing more than three elements in $S$ is more than $\frac{T}{n}$ but less than $T+\frac{T}{n}$ (as each element of $S$ is greater than $\frac{T}{4n}$ and less than $\frac{T}{2n}$ and the sum of all elements in $S$ is $T$). 
Say that these three equations use the terms $s_{i_1}, s_{i_2}, s_{i_3}$. 
Then we let $S_1=\{s_{i_1}, s_{i_2}, s_{i_3}\}$, which indeed sums to $\frac{T}{n}$.

Similarly, to obtain $a_{k,m+1}$ from $a_{k,1}$, we must take three equations of the form $x_i+s_i=x_{i+1}$. 
Since there are $3n$ equations of this form which must be satisfied and each of the $n$ assignments of the form $a_{i,1},\ldots,a_{i,m+1}$ (where $1\le i\le n$) uses exactly three of these equations, then each assignment of $a_{i,1},\ldots,a_{i,m+1}$ must satisfy a disjoint triplet of the $3n$ equations. 
Say that the three satisfied equations for $a_{i,1},\ldots,a_{i,m+1}$ are $s_{k_1}, s_{k_2}, s_{k_3}$. 
Then we let $S_k=\{s_{k_1}, s_{k_2}, s_{k_3}\}$, which indeed sums to $\frac{T}{n}$.

Therefore, we have a partition of $S$ into triplets, which each sum to $\frac{T}{n}$, as desired. Thus, a solution for \textsf{B2LC} yields a solution for \textsf{3-PARTITION}.
\end{proofof}

\begin{remindertheorem}{\thmref{reducible:npc}}
\reducibleNPC
\end{remindertheorem}
\newline
\begin{proofof}{\thmref{reducible:npc}}
We provide a reduction from $\VC$ to $\REDUCIBLE_d$. Given an instance $G=(V,E)$ of $\VC$, arbitrarily label the vertices $1,\ldots,n$, where $n=|V|$, and direct each edge of $E$ so that $1,\ldots,n$ is a topological ordering of the nodes. For each node $i$ we add two directed paths (where path length is the number of edges in the path): (1) a path of length $i-1$ with an edge from the last node on the path to node $i$, (2) a path of length $n-i$ with an edge from node $i$ to the first node of the path. To avoid abuse of notation, let $U$ represent the original $n$ vertices (from the given instance of $\VC$) in the modified graph.

We claim $\VC$ has a vertex cover of size at most $k$ if and only if the resulting graph is $(k,n)$-reducible. Indeed, if there exists a vertex cover of size $k$, we remove the corresponding $k$ vertices in the resulting construction. Then there are no edges connecting vertices of $U$. By construction, any path of length $n-1$ must contain at least two vertices of $U$. Hence, the resulting graph is $(k,n)$-reducible. 

On the other hand, suppose that there is no vertex cover of size $k$. Given a set $S$ of $k= |S|$ vertices we say that a node $u \in U$ is ``untouched'' by $S$ if $u \notin S$ and $S$ does not contain any vertex from the chain(s) we connected to node $u$. If there is no vertex cover of size $k$, then removing any set $S$ of  $k=|S|$ vertices from the graph leaves an edge $(u,v)$ with the following properties: (1) $u,v \in U$, (2) $u$ and $v$ are both untouched by $S$. Suppose $u$ has label $i$. Then the (untouched) directed path which ends at $u$ has length $i-1$. Similarly, there still exists some directed path of length $\geq n-i$ which begins at $v$. Thus, there exists a path of length at least $n$, so the resulting graph is not $(k,n)$-reducible.
\end{proofof}

\begin{remindertheorem}{\thmref{IPgap}}
\thmIPgap
\end{remindertheorem}
\newline
\begin{proofof}{\thmref{IPgap}}
In particular, for all time steps $t\le n$, we set $x_i^t=\frac{1}{n}$ for all $i\le t$, and $x_i^t=0$ for $i>t$. 
For time steps $n<t\le n+\ceil{\log n}$, we set $x_v^t=\min\left(1,\frac{2^{t-n}}{n}\right)$ for all $v\in V$. 
We first argue that this is a feasible solution. 
Trivially, Constraints 1 (with the LP relaxation) and 2 are satisfied. 
Moreover, for $t=n+\ceil{\log n}$, $x_v^t=1$ for all $v\in V$, so Constraint 3 is satisfied.
Note that for $1<t\le n$, if $x_v^t=\frac{1}{n}$, then $x_u^{t-1}=\frac{1}{n}$ for all $u<v$, so certainly $\sum_{v' \in\parents(v)} \frac{x_{v'}^{t-1}}{\left|\parents(v) \right|}\ge\frac{1}{n}$. 
Furthermore, $x_v^{t-1}=\min\left(1,\frac{2^{t-1-n}}{n}\right)$ for $n<t\le n+\ceil{\log n}$ implies that setting
\[x_v^t=x_v^{t-1}+\sum_{v' \in\parents(v)} \frac{x_{v'}^t}{\left|\parents(v) \right|}=\frac{2^{t-1-n}}{n}+\frac{2^{t-1-n}}{n}=\frac{2^{t-n}}{n}\]
is valid. 
Therefore, Constraint 4 is satisfied.

Finally, we claim that $\sum_{v \in V} \sum_{t \leq n+\ceil{\log n}} x_{v}^t \leq 4n$. To see this, note that for every round $t\le n$, we have $x_{v}^t \leq \frac{1}{n}$ for all $v\in V$. Thus, \[\sum_{v \in V} \sum_{t \leq n} x_{v}^t \leq \sum_{v \in V} \sum_{t \leq n}\frac{1}{n}\le\sum_{v\in V}1=n.\]
For time steps $n<t<n+\ceil{\log n}$, note that $x_v^t\le\frac{2^{t-n}}{n}$ for all $v\in V$. Thus,
\[\sum_{v \in V} \sum_{n<t<n+\ceil{\log n}} x_{v}^t \leq \sum_{n<t<n+\ceil{\log n}}2^{t-n}\le 2n.\]
Finally, for time step $t=n+\ceil{\log n}$, $x_v^t=1$ for all $v\in V$. Consequently,
\[\sum_{v \in V} \sum_{t=n+\ceil{\log n}} x_{v}^t=n.\] 
Therefore,
\[\sum_{v \in V} \sum_{t \leq n+\ceil{\log n}} x_{v}^t \leq 4n.\]
\end{proofof}

\section{Integrality Gaps for $\REDUCIBLE_d$} \applab{reducibleGap}
We now suggest a natural integer program for $\REDUCIBLE_d$ and show that the integrality gap is quite large $\tilde{\Omega}(n)$.
\begin{figure}[htb]
\begin{tabular}{ll}
{\small $\quad\min \sum_{v \in V} x_v$. s.t.} &
{\small (2) $0 \leq d_{u,v} \leq d$~~ $\forall (u,v) \in V^2$  } \\
{\small (1) $x_v \in \{0,1\}$~~ $\forall 1 \leq v \leq n$.}  &
{\small (3) $d_{w,v} \geq d_{w,u}+1-(d+1)(x_u+x_v)$~~~ $\forall w \in V, (u,v) \in E$} 
\end{tabular}
\caption{Integer Program for $\REDUCIBLE_d$.} \label{fig:IPReducible}
\end{figure}

Intuitively, setting $x_v=1$ means that we include $v \in S$ and $d_{u,v}$ represents the length of the maximum length directed path from $u$ to $v$ in the graph $G-S$. Constraint 2 requires that the longest path has length at most $d$, and Constraint 3 ensures that $d_{w,v}$ upper bounds the length of the longest path from $w$ to $v$ in $G-S$. If we have a path from $w$ to $u$ of length $k$ in $G-S$ and $u,v \notin S$ then there is a path of length $k+1$ from $w$ to $v$ in $G-S$ --- if $u \in S$ or $v \in S$ then the $-n(x_u+x_v)$ term effectively eliminates the constraint since this particular path does not exist in $G-S$.

The LP relaxation is obtained by changing Constraint 1 to $0 \leq x_v \leq 1$. To see that the LP relaxation has high integrality gap we first observe that there is a family of $\left( \Omega(n),\Omega(n)\right)$-depth robust DAGs $G_n$ with $\indeg(G_n) \leq \log n$.  By \thmref{IPReducible} the LP relaxation for $G_n$ has a solution with cost at most $n/d = \theta(1)$, but the Integer Program must have cost at least $\Omega(n)$. Thus, the integrality gap is $\Omega(n)$. Even if we require $\indeg(G_n) =2$ then we still have a family of $\left( \Omega(n/\log n),\Omega(n)\right)$-depth robust DAGs~\cite{EC:AlwBloPie17} so we get an integrality gap of $\Omega(n/\log n)$.

\begin{theorem} \thmlab{IPReducible}
For {\em any} DAG $G$ the LP relaxation (of the Integer Program in \figref{IPReducible}) has a solution with cost at most $n/d$.
\end{theorem}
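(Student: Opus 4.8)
The plan is to exhibit a single explicit fractional point that is feasible for the LP relaxation and has objective value exactly $n/d$. The candidate is the uniform assignment $x_v = 1/d$ for every $v \in V$, together with $d_{u,v} = 0$ for every pair $(u,v) \in V^2$. With this choice the objective is $\sum_{v \in V} x_v = n/d$, so the only remaining task is to verify that all three constraint families of \figref{IPReducible} are satisfied, with Constraint~(1) replaced by its relaxed form $0 \le x_v \le 1$.

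First I would dispatch the easy constraints. Since $d \ge 1$ is a positive integer, $x_v = 1/d \in (0,1]$, so the relaxed Constraint~(1) holds, and $d_{u,v} = 0$ clearly satisfies Constraint~(2), i.e.\ $0 \le d_{u,v} \le d$. The only constraint with any content is Constraint~(3): for every $w \in V$ and every edge $(u,v) \in E$ we must have $d_{w,v} \ge d_{w,u} + 1 - (d+1)(x_u + x_v)$. Substituting our values, the right-hand side equals $0 + 1 - (d+1)\cdot\frac{2}{d} = 1 - \frac{2(d+1)}{d} = -1 - \frac{2}{d}$, which is negative, while the left-hand side is $d_{w,v} = 0$. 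Hence Constraint~(3) is satisfied, the point is feasible, and it achieves cost $n/d$, which proves the theorem.

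The conceptual content of the argument --- and the reason the relaxation is so weak that the integrality gap reaches $\tilde{\Omega}(n)$ --- is that $1/d$ is exactly the fractional ``amount of deletion'' needed to push the penalty coefficient $(d+1)(x_u+x_v)$ on every edge strictly above $1$. Once that coefficient exceeds $1$, Constraint~(3) imposes no lower bound on $d_{w,v}$ beyond the trivial $d_{w,v}\ge 0$ already present in Constraint~(2), so all the path-length variables can be zeroed out regardless of the structure of $G$; in particular the depth-robustness of $G$, which forces any integral solution to pay $\Omega(n)$, is completely invisible to the LP. Consequently there is no real obstacle here: the entire proof is the substitution above. The only mild care needed is the degenerate range of the parameter, but since $d\ge 1$ is assumed integral the bound $x_v=1/d\le 1$ is safe; and if one prefers a strictly interior point one may instead take $x_v = 1/(d+1)$ with all $d_{u,v}=0$, since then the right-hand side of Constraint~(3) becomes $0 + 1 - 2 = -1 < 0$.
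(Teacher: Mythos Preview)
Your proposal is correct and is exactly the paper's proof: the paper simply sets $x_v = 1/d$ for all $v$ and $d_{u,v} = 0$ for all $(u,v)\in V^2$, without even spelling out the verification you provide. Your added discussion of why Constraint~(3) collapses once $(d+1)(x_u+x_v) > 1$ is a nice explanation of the mechanism behind the integrality gap, but the underlying argument is identical.
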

\begin{proof}
Set $x_v = \frac{1}{d}$ for all $v \in V$ and set $d_{u,v} = 0$ for all $u,v \in V^2$. 
\end{proof}

\section{On Pebbling Time and Cumulative Cost} \applab{PebblingTimeExample}

In this section we address the following question: is is necessarily case that an optimal pebbling of $G$ (minimizing cumulative cost) takes $n$ steps? For example, the pebbling attacks of Alwen and Blocki~\cite{AlwenB16a,AlwenB16b} on depth-reducible graphs such as Argon2i-A, Argon2i-B, Catena all take {\em exactly} $n$ steps. Thus, it seems natural to conjecture that the optimal pebbling of $G$ {\em always} finishes in $\depth(G)$ steps. In this section we provide a concrete example of an DAG $G$ (with $n$ nodes and $\depth(G)=n$) with the property that {\em any} optimal pebbling of $G$ {\em must} take more than $n$ steps. More formally, let $\ppeb(G,t)$ denote the set of all legal pebblings of $G$ that take at most $t$ pebbling rounds. We prove that $\min_{P \in \ppeb(G,t)} \cc(P) > \min_{P \in \ppeb(G)} \cc(P) = \pcc(G)$.

\begin{theorem} \thmlab{timecounterexample}
There exists a graph $G$ with $n=16$ nodes and $\depth(G) = n$ s.t. for some $t>0$, $\frac{\min_{P \in \ppeb(G,t)} \cc(P)}{\pcc(G)} \geq \frac{28}{27}$. 
\end{theorem}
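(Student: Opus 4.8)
The plan is to exhibit an explicit DAG $G$ on $16$ vertices with $\depth(G) = 16$, together with a legal pebbling $P^\star$ that uses $t^\star > 16$ rounds and has $\cc(P^\star) \le 27$, and then to prove that every $P \in \ppeb(G,16)$ satisfies $\cc(P) \ge 28$. The first fact gives $\pcc(G) \le 27$; combined with the second it yields $\min_{P \in \ppeb(G,16)} \cc(P) / \pcc(G) \ge 28/27$, which is the claimed statement with $t = 16$. (Note $\ppeb(G,16) \ne \emptyset$: the pebbling that never removes a pebble and places each vertex as soon as legality allows finishes in $\depth(G) = 16$ rounds.)

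For the construction I would build $G$ around a Hamiltonian backbone $u_1 \to u_2 \to \cdots \to u_{16}$, with $u_{16}$ the unique sink, and then add a small set of forward chords. The backbone forces $\depth(G) = 16$ and, more importantly, in \emph{any} $16$-round pebbling it forces $u_i \in P_i$ for every $i$: since $u_{16}$ lies at the end of a length-$15$ chain and only $16$ rounds are available, each $u_i$ must be first pebbled no later than round $i$ and no earlier than round $i$, hence exactly on round $i$. The chords should encode a \emph{keep-versus-recompute} dilemma for one internal vertex $u_a$: $u_a$ is made a parent of $u_{16}$, so a pebble on $u_a$ is needed during the final round, while the remaining chords congest the stretch of the backbone along which a pebble would have to be re-walked in order to regenerate $u_a$ late. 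One tunes the chords so that, within a $16$-round budget, re-walking costs exactly one more pebble-round than simply parking a pebble on $u_a$ from round $a$ onward, whereas a single additional round lets the re-walk be staggered out of the congested region and so beats the ``park'' strategy by one unit. That a concrete chord set realizes this with the exact constants is a finite verification, which I would carry out after pinning down the chords by a short hand search; the graph is small enough that this is mechanical, and the bound $\pcc(G) \le 27$ then amounts to writing $P^\star$ out round by round, checking legality against the chosen chords, verifying $t^\star > 16$, and summing $|P^\star_1| + \cdots + |P^\star_{t^\star}| = 27$.

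The lower bound --- every $16$-round pebbling of $G$ costs at least $28$ --- is the core of the proof. Using the rigidity $u_i \in P_i$, I would first normalize an arbitrary $P \in \ppeb(G,16)$ without increasing $\cc$: one may assume $P_{16} = \{u_{16}\}$ and, more generally, that $P$ never keeps a pebble on a vertex $v$ at a round $r$ unless $v$ is an ancestor of some pebble freshly placed at a round $> r$ (delete any such ``useless'' pebble; this preserves legality and the final configuration and cannot increase $\cc$ --- the same exchange move used in the proof of \claimref{claim:sync}). After normalization the only remaining freedom is, for each chord $(u_i, u_j)$, \emph{how} the obligation ``$u_i$ is present on the round before $u_j$ is first placed'' is met: by keeping $u_i$ from round $i$, or by walking a fresh pebble along $u_1 \to \cdots \to u_i$ on a schedule that ends just in time (sources may be (re-)pebbled on any round). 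This collapses $\ppeb(G,16)$, up to normalization, into a small explicitly enumerable family of pebblings; for each member I would lower-bound $\cc(P)$ by a charging argument that assigns every chord obligation a disjoint quota of pebble-rounds at the congested rounds, and check the total is at least $28$ in every case.

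The main obstacle is precisely this last step: ruling out that some combination of \emph{partial} re-walks, re-pebbled sources, and re-walks that share a common prefix lets a $16$-round pebbling sneak down to cost $27$. Backbone rigidity turns the problem into a finite case analysis, but the accounting is delicate, because re-walks serving different chords can overlap both in time and in the vertices they touch, so the ``extra pebble'' bookkeeping has to be made disjoint through a carefully chosen charging scheme rather than by naive summation. I expect this bookkeeping, rather than any conceptual difficulty, to be where the real work lies.
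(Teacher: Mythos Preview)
Your plan matches the paper's approach: a Hamiltonian backbone $1\to 2\to\cdots\to 16$ plus a handful of forward chords, the rigidity observation $i\in P_i$ for any $16$-round pebbling, an explicit slow pebbling for the upper bound $\pcc(G)\le 27$, and a combinatorial lower bound showing $\cc(P)\ge 28$ for every $P\in\ppeb(G,16)$.

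Where you diverge is in the lower-bound mechanism, and here the paper's argument is markedly simpler than the enumeration/charging scheme you anticipate as ``where the real work lies.'' The paper's chord set is $(i,i+9)$ for $1\le i\le 5$ and $(i,i+7)$ for $i\in\{8,9\}$. With this choice one never needs to enumerate re-walk strategies or worry about overlapping prefixes. Instead, rigidity plus the chords give \emph{per-round} cardinality lower bounds directly: to place $9+i$ at round $9+i$ one needs $i\in P_{8+i}$ (from the $(i,i+9)$ chords), and to place $15,16$ on time one needs $8\in P_{14}$, $9\in P_{15}$, which by backward induction forces $P_{14-j}\cap\{8,\ldots,8-j\}\ne\emptyset$. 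These constraints are on \emph{disjoint} vertices within each round, so they stack: $|P_9|,\ldots,|P_{13}|\ge 3$, $|P_{14}|,|P_{15}|\ge 2$, and $|P_i|\ge 1$ elsewhere, summing to $28$. The point is that a well-chosen chord set converts the lower bound into a one-line sum of per-round minima, bypassing the delicate disjoint-accounting obstacle you flag. Your normalization/charging route would also succeed, but it is working harder than necessary; the ``small explicitly enumerable family'' never needs to be formed.
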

\begin{proof}
Consider the following DAG $G$ on $16$ nodes $\{1,\ldots,16\}$ with the following directed edges (1) $(i,i+1)$ for $1 \leq i < 16$, (2) $(i,i+9)$ for $1 \leq i \leq 5$ and (3) $(i,i+7)$ for $8 \leq i \leq 9$. We first show in \claimref{timelowerbound} that there is a pebbling $P$ with cost $27$ that takes $18$ rounds. \thmref{timecounterexample} then follows from \claimref{timeupperbound} where we show that any $P \in \ppeb(G,16)$ has cumulative cost at least $28$. Intuitively, any legal pebbling must either delay before pebbling nodes $15$ and $16$ or during rounds $15-i$ (for $0 \leq i \leq 5$) we {\em must} have at least one pebble on some nodes in the set $\{9,8,...,9-i\}$.

\begin{claim} \claimlab{timelowerbound}
For the above DAG $G$ we have $\pcc(G) \leq 27$.
\end{claim} 
\begin{proof}
Consider the following pebbling: $P_0=\emptyset, P_1=\{1\}, P_2=\{2\}, P_3=\{3\}, P_4=\{4\}, P_5=\{5\}, P_6=\{6\}, P_7=\{7\}, P_\{8\}=\{8\}, P_9 = \{1,9\},  P_{10} = \{2,10\}, P_{11} = \{3,11\}, P_{12} = \{4,12\}, P_{13} = \{5,13\}, P_{14} = \{6,14\}, P_{15} = \{7,14\}, P_{16} = \{8,14\}, P_{17} = \{9,15\}, P_{18} = \{16\}$. It is easy to verify that $\cc(P) = 9+2*9=27$ since there are $9$ steps in which we have one pebble on $G$ and $9$ steps in which we have two pebbles on $G$.
\end{proof}

\begin{claim} \claimlab{timeupperbound}
For the above DAG $G$ we have $\min_{P \in \ppeb(G,16)} \cc(P) \geq 28$.  
\end{claim} 
\begin{proof}
Let $P = (P_1,\ldots,P_{16}) \in \ppeb(G,16)$ be given. Clearly, $i \in P_i$ for each round $1 \leq i \leq 16$.  To pebble nodes $15$ and $16$ on steps $15$ and $16$ we must have $9 \in P_{15}$ and $8 \in P_{14}$. By induction, this means that $P_{14-i} \cap \{8,...,8-i\} \neq \emptyset$ for each $i > 0$. To pebble node $9+i$ at time $9+i$ we require that $ i \in P_{8+i} $ for each $1 \leq i \leq 5$. These observations imply that $\left|P_9 \right| \geq 3$, $\left|P_{10} \right| \geq 3$, $\ldots$, $\left|P_{13} \right| \geq 3$. We also have $\left| P_{15} \right| \geq 2$ and $\left|P_{14}\right| \geq 2$. We also have $\left| P_i \right| \geq 1$ for all $1 \leq i \leq 16$.  The cost of rounds 1--8 and round $16$ is at least $9$. The cost of rounds 9--13 is at least $15$ and the cost of rounds $14$ and $15$ is at least $4$. Thus, $\cc(P) \geq 28$.  
\end{proof}
\end{proof}
{\noindent \bf Open Questions: }\thmref{timecounterexample} shows that $\pcc(G)$ can be smaller than $\min_{P \in \ppeb(G,t)} \cc(P)$, but how large can this gap be in general? Can we prove upper/lower bounds on the ratio: $\frac{\min_{P \in \ppeb(G,t)} \cc(P)}{ \pcc(G)}$ for any $n$ node DAG $G$? Is it true that $\frac{\min_{P \in \ppeb(G,t)} \cc(P)}{ \pcc(G)} \leq c$ for some constant $c$? If not does this hold for $n$ node DAGs $G$ with constant indegree? Is it true that the optimal pebbling of $G$ always takes at most $cn$ steps for some constant $c$?

\section{NP-Hardness of \texttt{minST}} \applab{SpaceTime}
Recall that the space-time complexity of a graph pebbling is defined as $\ST(P) = t \times \max_{1 \le i \le t} \left| P_i\right|$. 
We define $\minST$ and $\minSST$ based on whether the graph pebbling is parallel or sequential. 
Formally, the decision problem $\minST$ is defined as follows: \\
 {\noindent \bf Input:} a DAG $G$ on $n$ nodes and an integer $k < n(n+1)/2$. \newline
 {\noindent \bf Output:} {\em Yes}, if $\min_{P\in\pPeb_G}\ST(P)\leq k$; otherwise {\em No}. \newline

\noindent
The decision problem $\minSST$ is defined as follows: \\
 {\noindent \bf Input:} a DAG $G$ on $n$ nodes and an integer $k < n(n+1)/2$. \newline
 {\noindent \bf Output:} {\em Yes}, if $\sst(G) \leq k$; otherwise {\em No}. \newline

Gilbert \etal \cite{gilbert1980pebbling} provide a construction from any instance of TQBF to a DAG $G_{TQBF}$ with pebbling number $3n+3$ if and only if the instance is satisfiable. Here, the pebbling number of a DAG $G$ is $\min_{P = (P_1,\ldots,P_t) \in \pPeb} \max_{i \leq t} \left| P_i \right|$ is the number of pebbles necessary to pebble $G$. An important gadget in their construction is the so-called pyramid DAG. We use a triangle with the number $k$ inside to denote a $k$-pyramid (see \figref{pyramid} for an example of a $3$-pyramid). The key property of these DAGs is that {\em any} legal pebbling $P=(P_0,\ldots,P_t)\in \pPeb(Pyramid_k)$ of a $k$-pyramid  requires at least $\min_{i} \left| P_i\right| \geq k$ pebbles on the DAG at some point in time.
Another gadget, which appears in Figure~\ref{fig:exists}, is the existential quantifier gadget, which requires that $s_i$, $s_i-1$, and $s_i-2$ pebbles must be placed in each of the pyramids to ultimately pebble $q_i$.   \\

\noindent {\bf Remark:} We note that \cite{gilbert1980pebbling} focused on sequential pebblings $(P \in \Peb)$ in their analysis, but their analysis extends to parallel pebblings $(P \in \pPeb)$ as well.

\begin{figure}[ht]
\centering
\begin{tikzpicture}[scale=1.4]
\draw[black,fill=black]  (4.5cm,2cm) circle (0.1cm); 
\draw (4cm,1cm) circle (0.1cm);   
\draw (5cm,1cm) circle (0.1cm); 
\draw (3.5cm,0cm) circle (0.1cm); 
\draw (4.5cm,0cm) circle (0.1cm); 
\draw (5.5cm,0cm) circle (0.1cm);   

\draw[->, black] (5.5cm,0.1cm) -- (5.1cm,0.9cm);
\draw[->, black] (4.5cm,0.1cm) -- (4.9cm,0.9cm);
\draw[->, black] (4.5cm,0.1cm) -- (4.1cm,0.9cm);
\draw[->, black] (3.5cm,0.1cm) -- (3.9cm,0.9cm);

\draw[->, black] (4cm,1.1cm) -- (4.4cm,1.9cm);
\draw[->, black] (5cm,1.1cm) -- (4.6cm,1.9cm);

\node at (2cm+5cm,-1.7cm+3cm){$\equiv$};

\draw[black,fill=black] (10cm,1.5cm) circle (0.1cm);
\draw[black] (9.95cm,-1.0866cm+2.5cm) -- (9.5cm,-1.866cm+2.5cm) -- (2cm+8.5cm,-1.866cm+2.5cm) -- (2cm+8.05cm,-1.0866cm+2.5cm);
\node at (2cm+8cm,-1.7cm+2.5cm){$k$};
\end{tikzpicture}
\caption{A $3$-Pyramid.}\label{fig:pyramid}
\end{figure}

\begin{figure}[ht]
\centering
\begin{tikzpicture}[scale=1.2]
\draw (0cm,0cm) circle (0.1cm);
\draw[black] (-0.05cm,-0.0866cm) -- (-0.5cm,-0.866cm) -- (0.5cm,-0.866cm) -- (0.05cm,-0.0866cm);
\node at (0cm,-0.7cm){$s_i-1$};

\draw (1cm,0cm) circle (0.1cm);
\draw (1cm,-1cm) circle (0.1cm);
\draw (1cm,-2cm) circle (0.1cm);
\draw (1cm,1cm) circle (0.1cm);
\draw (1cm,2cm) circle (0.1cm);
\draw[->, black] (0.1cm,0cm) -- (0.9cm,0.cm);
\draw[->, black] (1cm,-1.9cm) -- (1cm,-1.1cm);
\draw[->, black] (1cm,-0.9cm) -- (1cm,-0.1cm);
\draw[->, black] (1cm,0.1cm) -- (1cm,0.9cm);
\draw[->, black] (1cm,1.1cm) -- (1cm,1.9cm);

\draw (2cm,-1cm) circle (0.1cm);
\draw[black] (1.95cm,-1.0866cm) -- (1.5cm,-1.866cm) -- (2.5cm,-1.866cm) -- (2.05cm,-1.0866cm);
\node at (2cm,-1.7cm){$s_i-2$};

\draw[black,fill=black] (2cm,0cm) circle (0.1cm);
\draw (2cm,1cm) circle (0.1cm);
\draw[->, black] (2cm,-0.9cm) -- (2cm,-0.1cm);
\draw[->, black] (2cm,0.1cm) -- (2cm,0.9cm);

\draw (3cm,0cm) circle (0.1cm);
\draw[black] (2.95cm,-0.0866cm) -- (2.5cm,-0.866cm) -- (3.5cm,-0.866cm) -- (3.05cm,-0.0866cm);
\node at (3cm,-0.7cm){$s_i$};

\draw[black,fill=black] (3cm,1cm) circle (0.1cm);
\draw[->, black] (3cm,0.1cm) -- (3cm,0.9cm);

\draw[->, black] (2.9cm,0cm) -- (2.1cm,0cm);
\draw[->, black] (1.9cm,1cm) -- (1.1cm,-1cm);
\draw[->, black] (1.9cm,0cm) -- (1.1cm,1cm);
\draw[->, black] (3cm,1.1cm) -- (1.1cm,2cm);

\node[below] at (1cm,-2.1cm){$q_{i+1}$};
\node[above] at (1cm,2.1cm){$q_i$};
\node[right] at (3.1cm,0cm){$x'_i$};
\node[right] at (3.1cm,1cm){$x_i$};
\node[above right] at (2cm,0cm){$\overline{x}'_i$};
\node[above right] at (2cm,1cm){$\overline{x}_i$};

\draw (6cm,0cm) circle (0.1cm);
\draw[black] (5.95cm,-0.0866cm) -- (5.5cm,-0.866cm) -- (6.5cm,-0.866cm) -- (6.05cm,-0.0866cm);
\node at (6cm,-0.7cm){$s_i-1$};

\draw (7cm,0cm) circle (0.1cm);
\draw (7cm,-1cm) circle (0.1cm);
\draw (7cm,-2cm) circle (0.1cm);
\draw (7cm,1cm) circle (0.1cm);
\draw (7cm,2cm) circle (0.1cm);
\draw[->, black] (6.1cm,0cm) -- (6.9cm,0.cm);
\draw[->, black] (7cm,-1.9cm) -- (7cm,-1.1cm);
\draw[->, black] (7cm,-0.9cm) -- (7cm,-0.1cm);
\draw[->, black] (7cm,0.1cm) -- (7cm,0.9cm);
\draw[->, black] (7cm,1.1cm) -- (7cm,1.9cm);

\draw (8cm,-1cm) circle (0.1cm);
\draw[black] (7.95cm,-1.0866cm) -- (7.5cm,-1.866cm) -- (8.5cm,-1.866cm) -- (8.05cm,-1.0866cm);
\node at (8cm,-1.7cm){$s_i-2$};

\draw (8cm,0cm) circle (0.1cm);
\draw[black,fill=black] (8cm,1cm) circle (0.1cm);
\draw[->, black] (8cm,-0.9cm) -- (8cm,-0.1cm);
\draw[->, black] (8cm,0.1cm) -- (8cm,0.9cm);

\draw[black,fill=black] (9cm,0cm) circle (0.1cm);
\draw[black] (8.95cm,-0.0866cm) -- (8.5cm,-0.866cm) -- (9.5cm,-0.866cm) -- (9.05cm,-0.0866cm);
\node at (9cm,-0.7cm){$s_i$};

\draw (9cm,1cm) circle (0.1cm);
\draw[->, black] (9cm,0.1cm) -- (9cm,0.9cm);

\draw[->, black] (8.9cm,0cm) -- (8.1cm,0cm);
\draw[->, black] (7.9cm,1cm) -- (7.1cm,-1cm);
\draw[->, black] (7.9cm,0cm) -- (7.1cm,1cm);
\draw[->, black] (9cm,1.1cm) -- (7.1cm,2cm);

\node[below] at (7cm,-2.1cm){$q_{i+1}$};
\node[above] at (7cm,2.1cm){$q_i$};
\node[right] at (9.1cm,0cm){$x'_i$};
\node[right] at (9.1cm,1cm){$x_i$};
\node[above right] at (8cm,0cm){$\overline{x}'_i$};
\node[above right] at (8cm,1cm){$\overline{x}_i$};

\end{tikzpicture}
\caption{An existential quantifier, with $x_i$ set to true in the left figure and $x_i$ set to false in the right figure.}\label{fig:exists}
\end{figure}

We observe that any instance of TQBF where each quantifier is an existential quantifier requires at most a quadratic number of pebbling moves. 
Specifically, we look at instances of \texttt{3-SAT}, such as in Figure~\ref{fig:graph}. 
In such a graph representing an instance of \texttt{3-SAT}, the sink node to be pebbled is $q_n$. 
By design of the construction, any true statement requires exactly three pebbles for each pyramid representing a clause. 
On the other hand, a false clause requires four pebbles, so that false statements require more pebbles. 
Thus, by providing extraneous additions to the construction which force the number of pebbling moves to be a known constant, we can extract the pebbling number, given the space-time complexity. 
For more details, see the full description in \cite{gilbert1980pebbling}.

\begin{lemma}\cite{gilbert1980pebbling}
\lemmlab{peb:number}
The quantified Boolean formula $Q_1x_1Q_2x_2\cdots Q_nx_nF_n$ is true if and only if the corresponding DAG $G_{TQBF}$ has pebbling number $3n+3$. 
\end{lemma}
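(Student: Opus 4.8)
The plan is to reconstruct the argument of Gilbert \etal~\cite{gilbert1980pebbling}, which gives this lemma, and to observe that it is insensitive to whether we pebble sequentially or in parallel (as noted in the remark above). First I would fix the shape of $G_{TQBF}$: it has a backbone chain of \emph{quantifier nodes} $q_{n+1}, q_n, \ldots, q_1$ with $q_1$ the unique sink; $q_{n+1}$ is fed from a subgraph that evaluates the quantifier-free matrix $F_n$ on the literal nodes $x_i, \overline{x}_i$ (and their primed copies); and between $q_{i+1}$ and $q_i$ sits a gadget implementing $Q_i x_i$, built from three pyramids of sizes $s_i, s_i-1, s_i-2$, where the $s_i$ are chosen strictly decreasing in $i$ so that at the round the apex of the size-$s_i$ pyramid must be pebbled, exactly $3n+3-s_i$ pebbles are already committed elsewhere. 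For an existential quantifier the gadget is precisely \figref{exists}: routing pebbles through $x_i$ rather than $\overline{x}_i$ (or vice versa) corresponds to an assignment of $x_i$, and either way pebbling $q_i$ forces $s_i, s_i-1, s_i-2$ pebbles to appear successively on the three pyramids. For a universal quantifier the gadget additionally forces $q_{i+1}$ to be pebbled twice --- once while a pebble sits on $x_i$ and once while a pebble sits on $\overline{x}_i$ --- so that both assignments of $x_i$ must be serviced while a pebble on the chain records the partial progress.

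For the direction ``formula true $\Rightarrow$ pebbling number $\le 3n+3$'', I would take a winning strategy for the existential player in the TQBF game (equivalently, Skolem functions for the existential variables) and describe an explicit pebbling that walks the quantifier chain from $q_1$ inward: at an existential gadget it commits a pebble to the literal prescribed by the strategy; at a universal gadget it services the $x_i=\mathrm{true}$ branch, parks a pebble on the chain, then services the $x_i=\mathrm{false}$ branch, reusing pebbles across branches. A bookkeeping argument then shows the number of pebbles on $G_{TQBF}$ never exceeds $3n+3$ (the ``$+3$'' being exactly the base width a $3$-pyramid needs, cf.\ \figref{pyramid}), using that every clause pyramid inside $F_n$ can be pebbled with only three pebbles because the chosen assignment satisfies that clause. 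A symmetric argument shows this is tight --- even for a true formula the innermost gadget cannot be completed with fewer than $3n+3$ pebbles --- so the pebbling number is exactly $3n+3$.

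For the converse, ``pebbling number $\le 3n+3$ $\Rightarrow$ formula true'', I would argue contrapositively: from any legal pebbling that never exceeds $3n+3$ pebbles, read off an assignment to $x_i$ according to which of $x_i, \overline{x}_i$ carries a pebble at the first round $q_i$ is pebbled, and use the universal-gadget structure to conclude that such a pebbling has in effect pebbled $q_{i+1}$ under both settings of every universally quantified $x_i$; hence it witnesses a winning strategy for the existential player, so the formula is true. If instead the formula were false, then under the forced assignment the subgraph for $F_n$ contains a clause pyramid that must be entered ``from the base'', demanding a fourth pebble at a round when $3n+2$ are already committed to outer gadgets, contradicting the budget. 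Finally, since passing from $\Peb$ to $\pPeb$ only removes the ``one new pebble per round'' restriction and never decreases $\max_i |P_i|$, and every estimate above is purely a statement about how many pebbles must coexist, the analysis of~\cite{gilbert1980pebbling} transfers verbatim to parallel pebblings.

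The main obstacle is the lower-bound half of the converse: proving that \emph{no} pebbling --- now possibly parallel --- can avoid re-pebbling $q_{i+1}$ for both truth values of a universally quantified variable while staying within $3n+3$ pebbles, and that a falsified clause genuinely costs an extra pebble. This is the heart of~\cite{gilbert1980pebbling} and is handled by a round-by-round invariant tracking which quantifier nodes carry pebbles and how many pebbles each already-entered pyramid still ``owes''; I would cite that argument rather than reproduce it, and separately verify only the routine point that the invariant is unaffected by allowing several pebbles to be placed in a single round.
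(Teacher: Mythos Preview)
The paper does not prove this lemma at all: it is stated with the citation \cite{gilbert1980pebbling} and used as a black box, with only the surrounding remark that the sequential analysis of Gilbert \etal\ extends to parallel pebblings. So there is no ``paper's own proof'' to compare against; your proposal goes well beyond what the paper does by sketching the internals of the Gilbert--Lengauer--Tarjan argument.

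As a reconstruction your outline is broadly faithful to \cite{gilbert1980pebbling}: the quantifier chain $q_{n+1},\ldots,q_1$, the three pyramids of sizes $s_i,s_i-1,s_i-2$ per quantifier, the read-off of an assignment from which literal node is pebbled, and the universal gadget forcing two visits to $q_{i+1}$ are all correct structural features. The one place where your sketch is thin is exactly where you flag it: the lower bound showing that a falsified clause forces a fourth pebble while $3n$ others are already committed is the delicate part, and you defer to the original paper for it. That is what this paper does too, so in context your proposal is appropriate --- but if you intend this as a self-contained proof rather than an annotated citation, that step would need to be written out.
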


\begin{lemma}
\lemmlab{TQBF:moves}
Suppose that we have a satisfiable TQBF formula $Q_1x_1Q_2x_2\cdots Q_nx_nF_n$ with $Q_i = \exists$ for all $i \leq n$. Then there is a legal sequential pebbling $P=(P_0,\ldots,P_t) \in \Peb\big(G_{TQBF}\big)$ of the corresponding DAG $G_{TQBF}$ from \cite{gilbert1980pebbling} with $t \leq 6n^2+33n$ pebbling moves and $\max_{i \leq t} \left| P_i \right| \leq 3n+3$.
\end{lemma}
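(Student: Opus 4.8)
The plan is to exhibit one explicit sequential pebbling of $G_{TQBF}$ that ``plays'' a fixed satisfying assignment of the matrix $F_n$, and then bound its number of rounds and its space usage separately. Since every quantifier is existential, the sentence $Q_1x_1\cdots Q_nx_nF_n$ is equivalent to $\exists x_1\cdots\exists x_n\,F_n$, which is true by hypothesis; fix a satisfying assignment $\sigma$ of the $3$-CNF $F_n$. The structural point is that the pebbling of \lemmref{peb:number} is forced to be exponentially long only because a \emph{universal} quantifier gadget makes the pebbler traverse the corresponding sub-construction once for $x_i=\mathrm{true}$ and once for $x_i=\mathrm{false}$; with only existential quantifiers each gadget is entered exactly once and committed to $\sigma(x_i)$, so the pebbling stays polynomially long and within the optimal space.

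Concretely, I would process the quantifier gadgets in the topological order of the $q$-chain, starting from the source end. Maintaining the invariant that the ``incoming'' $q$-node of the current gadget carries a pebble and that the pebbles present are exactly the ones retained from already-resolved gadgets, I resolve gadget $i$ by running precisely the sub-sequence of moves from the analysis behind \lemmref{peb:number} corresponding to the configuration of \figref{exists} with $x_i\leftarrow\sigma(x_i)$ (the left configuration if $\sigma(x_i)=\mathrm{true}$, the right one otherwise). This walks a pebble through the three pyramids attached to the gadget --- needing $s_i$, $s_i-1$, $s_i-2$ pebbles respectively --- ends with a pebble on the outgoing $q$-node of the gadget and on the literal vertex named by $\sigma(x_i)$, and then discards every other pebble placed inside gadget $i$. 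After all $n$ gadgets are resolved we hold a pebble on the final $q$-node together with one literal pebble per variable consistent with $\sigma$; we then pebble the clause pyramids one at a time, each of which, being satisfied by $\sigma$, already has a free pebble on one of its literal inputs and so can be completed within the budget (this is exactly the ``a satisfied clause pyramid needs fewer fresh pebbles'' observation recorded after \lemmref{peb:number}), and finally the sink $q_n$. Since every round places at most one new pebble, the resulting pebbling lies in $\Peb\big(G_{TQBF}\big)$.

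For the space bound, the schedule above is (a specialization to the all-existential case of) the space-optimal witness pebbling underlying \lemmref{peb:number}, whose space is $3n+3$ by construction; I would re-verify this directly by noting that at any moment the pebbles on $G_{TQBF}$ are the retained chain pebbles of the already-resolved gadgets, the $O(1)$ bookkeeping pebbles of the gadget currently being resolved, and the pebbles inside the single pyramid currently being worked on, and that these never sum to more than $3n+3$. For the time bound, the crucial consequence of ``existential only'' is that each pyramid and each clause gadget is pebbled only a bounded number of times; hence the total number of rounds is a constant times the number of vertices of $G_{TQBF}$ plus a bounded re-pebbling overhead, and a direct accounting over the $n$ existential gadgets (their $q$-chains, their bookkeeping vertices, and their pyramids) and the clause pyramids yields $t\le 6n^2+33n$ and $\max_{i\le t}|P_i|\le 3n+3$.

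The main obstacle is this last, purely combinatorial step: one has to unfold the Gilbert \etal\ construction in enough detail to count, for each pyramid pebbled with (essentially) its minimum number of pebbles, how many times each of its vertices is (re)placed, and to track exactly which pebbles from already-resolved gadgets must be retained while the current gadget is processed. Treating the gadgets as black boxes does not suffice here --- one must commit to a concrete schedule inside each gadget and inside each pyramid so that the two running totals come out to the claimed $3n+3$ and $6n^2+33n$ rather than merely $O(n)$ and a larger polynomial. Everything else (the reduction of the all-existential prefix to $3$-SAT, and the fact that the construction and its verification are polynomial time) is immediate from \lemmref{peb:number} and the description of $G_{TQBF}$.
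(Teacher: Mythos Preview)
Your high-level approach --- specialize the Gilbert \etal\ minimum-space pebbling to the fixed satisfying assignment $\sigma$ and then count moves --- is exactly what the paper does. Two points where your sketch diverges from the actual construction and from the paper's argument are worth flagging.

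First, the ordering you describe is not consistent with $G_{TQBF}$. In the Gilbert \etal\ graph the clause chain terminates at $q_{n+1}$, which is the \emph{input} to the innermost existential gadget, and the sink is $q_1$, not $q_n$. Consequently you cannot ``resolve'' any quantifier gadget (i.e., place a pebble on its outgoing $q$-node) before the clause pyramids have been pebbled, and the clause pyramids in turn cannot be pebbled until the literal pebbles from \emph{all} $n$ gadgets have been set up. The actual pebbling is genuinely recursive: at level $i$ one first pebbles the three pyramids and fixes the literal for $x_i$, then recurses to obtain $q_{i+1}$ (with the clauses handled at the base of the recursion), and only then walks the stored pebbles up to $q_i$. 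Your invariant ``the incoming $q$-node already carries a pebble'' is therefore incompatible with processing gadgets before clauses; once you reorganize along the recursive structure the invariant becomes correct.

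Second, the paper does not leave the move count as a black box: it writes down the per-level recurrence explicitly, distinguishing the $x_i=\mathrm{true}$ and $x_i=\mathrm{false}$ schedules, obtains $T(i)\le 4s_i+10+T(i+1)$ with $s_i=3n-3i+6$, and sums the resulting arithmetic progression to a closed form of order $6n^2+O(n)$. Your proposal identifies this as the ``main obstacle'' but does not carry it out; matching the stated constant requires exactly this case analysis rather than a generic ``$O(1)$ pebblings per pyramid'' argument, since the false branch re-pebbles the $(s_i-2)$-pyramid once and the true branch does not.
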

\begin{proof}
We describe the pebbling strategy of Gilbert \etal~\cite{gilbert1980pebbling}, and analyze the pebbling time of their strategy. Let $T(i)$ be the time it takes to pebble $q_i$ in the proposed construction for any instance with $i$ variables, $i$ clauses, and only existential quantifiers.

Suppose that $x_i$ is allowed to be true for the existential quantifier $Q_i=\exists$. 
Then vertex $x'_i$ is pebbled using $s_i$ moves, where $s_i=3n-3i+6$. 
Similarly, vertices $d_i$ and $f_i$ are pebbled using $s_i-1$ and $s_i-2$ moves respectively. 
Additionally, $f_i$ is moved to $\overline{x}'_i$ and then $x_i$ is moved to $x'_i$ in the following step, for a total of two more moves. 
We then pebble $q_{i+1}$ using $T(i+1)$ moves and finish by placing a pebble on $\overline{x}_i$ and moving it to $c_i$, $b_i$, $a_i$, and $q_i$, for five more moves. 
Finally, we use six more moves to pebble an additional clause.
Thus, in this case,
\[T_{true}(i)=s_i + (s_i-1) + (s_i-2) + 13 + T(i+1).\]

On the other hand, if $x_i$ is allowed to be false for the existential quantifier $Q_i=\exists$, then first we pebble $x'_i$, $d_i$, and $f_i$ sequentially, using $s_i$, $s_i-1$, and $s_i-2$ moves respectively. 
We then move the pebble from $f_i$ to $\overline{x}'_i$ and then to $\overline{x}_i$, for a total of two more moves.
We then pebble $q_{i+1}$ using $T(i+1)$ moves. 
The pebble on $q_{i+1}$ is subsequently moved to $c_i$ and then $b_i$, using two more moves. 
Picking up all pebbles except those on $b_i$ and $x'_i$, and using them to pebble $f_i$ takes $s_i-2$ more moves. 
Additionally, the pebble on $f_i$ is moved to $\overline{x}'_i$ and then $a_i$, while the pebble on $x'_i$ is moved to $x_i$ and then $q_i$, for four more moves. 
Finally, we use six more moves to pebble an additional clause.
In total,
\[T_{false}(i) = s_i + (s_i-1) + (s_i-2) + (s_i-2) + 14 + T(i+1).\]
\noindent
Therefore,
\[T(i)\le 4s_i + 10 + T(i+1),\]
where $s_i=3n-3i+6$. Thus,
\[T(i)\le 12(n-i) + 34 + T(i+1).\]
Writing $R(i)=T(n-i)$ then gives
\[R(i)\le 12i + 34 + R(i-1),\]
so that $R(n)\le\sum_{i=1}^n(12i+34)=6n^2+40n$. Hence, it takes at most $6n^2+40n$ moves to pebble the given construction for any instance of TQBF which only includes existential quantifiers.
\end{proof}

\begin{theorem} \thmlab{minSTNPComplete}
$\minST$ is $\NPcomplete$.
\end{theorem}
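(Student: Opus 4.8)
The plan is to establish membership in $\NP$ and then give a polynomial-time many-one reduction from $\texttt{3-SAT}$, reusing the construction of Gilbert~\etal~\cite{gilbert1980pebbling} together with \lemmref{peb:number} and \lemmref{TQBF:moves}. Membership is immediate: if $P=(P_1,\ldots,P_t)$ is a legal pebbling of an $n$-node DAG $G$ with $\ST(P)\le k<n(n+1)/2$, then since $G$ has a sink we have $\max_i|P_i|\ge 1$, hence $t\le k<n^2$; so a minimum-$\ST$ pebbling is a polynomial-size witness whose legality and $\ST$ cost can be checked in polynomial time, and $\minST\in\NP$.

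For hardness, observe that a $\texttt{3-SAT}$ instance $\phi$ is exactly a quantified Boolean formula $\exists x_1\cdots\exists x_n F_n$ in which every quantifier is existential, so both \lemmref{peb:number} and \lemmref{TQBF:moves} apply to the associated DAG $G_{TQBF}$ (we may assume $\phi$ has $n$ variables and $n$ clauses, a restriction under which $\texttt{3-SAT}$ is still $\NPcomplete$). By \lemmref{peb:number} the parallel pebbling number of $G_{TQBF}$ equals $3n+3$ when $\phi$ is satisfiable; since $G_{TQBF}$ always contains a $(3n+3)$-pyramid its pebbling number is always at least $3n+3$, hence at least $3n+4$ when $\phi$ is unsatisfiable. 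The catch is that $\ST=t\cdot\max_i|P_i|$ is a \emph{product}, so this gap of $1$ in the space term is worthless unless it is multiplied by a large and \emph{fixed} running time. To arrange this I would append to $G_{TQBF}$ a fresh directed path $w_1\to\cdots\to w_L$ together with the edge $(q_n,w_1)$, so that $w_L$ becomes the sink of interest in the new graph $G$, and set $q := 6n^2+33n$ (the move bound of \lemmref{TQBF:moves}), $L := (3n+3)q+1$, and $k := (3n+3)(q+L)$. All of $G$, $L$, $k$ are polynomial in $n$ and computable in polynomial time, and one checks that $k<N(N+1)/2$ for the node count $N\ge L$ of $G$ (true once $L$ exceeds a small multiple of $n$).

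The two directions are then short. If $\phi$ is satisfiable, take the sequential pebbling of $G_{TQBF}$ from \lemmref{TQBF:moves} — at most $q$ rounds, at most $3n+3$ pebbles, ending with a pebble on $q_n$ — and extend it by walking one pebble down $w_1,\ldots,w_L$, costing $L$ more rounds and never more than $2$ pebbles on the chain; the result is a legal pebbling (hence a legal parallel pebbling) of $G$ with at most $q+L$ rounds and maximum space at most $3n+3$, so $\ST\le k$. If $\phi$ is unsatisfiable, every legal parallel pebbling $P$ of $G$ must place a pebble on $q_n$ at some round (since $w_L$ depends on $q_n$), which forces $\max_i|P_i|\ge 3n+4$, and it must pebble $w_1,\ldots,w_L$ in this order, forcing $t\ge L$; hence $\ST(P)\ge(3n+4)L=(3n+3)L+L>(3n+3)L+(3n+3)q=(3n+3)(q+L)=k$. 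Thus $\phi\mapsto(G,k)$ is the desired reduction, and with membership in $\NP$ this proves the theorem. I expect the main obstacle to be precisely the product-versus-difference phenomenon just described: the reduction of \cite{gilbert1980pebbling} only certifies a unit gap in the number of pebbles required, so the real content is choosing the chain length $L$ (and hence the threshold $k$) large enough to amplify this into a strict separation of $\ST$ values while keeping $L$ polynomial; a secondary point requiring care is confirming that appending the path leaves \lemmref{peb:number} effectively intact — that the $(3n+3)$-pyramid lower bound still applies to $G$, and that the chain-walking phase in the satisfiable case genuinely uses only $2$ pebbles of space.
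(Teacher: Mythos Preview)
Your proposal is correct and follows essentially the same approach as the paper: reduce from \texttt{3-SAT} via the Gilbert~\etal\ DAG $G_{TQBF}$, append a long directed chain at the sink to force a large lower bound on the number of rounds, and use the unit gap in pebbling number from \lemmref{peb:number} together with the polynomial time bound from \lemmref{TQBF:moves} to separate the two cases in $\ST$ cost. Your treatment is in fact slightly tidier than the paper's: you choose the chain length $L=(3n+3)q+1$ so that the inequality $(3n+4)L>(3n+3)(q+L)=k$ drops out algebraically, whereas the paper picks $L$ of order $n^3$ and verifies a numerical gap between two degree-$4$ polynomials; you also make the $\NP$ membership argument explicit. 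One small imprecision: \lemmref{TQBF:moves} only guarantees that the sink is pebbled at \emph{some} round $j\le q$, not at the last round, so you should truncate the $G_{TQBF}$ pebbling at round $j$ before starting the chain walk---this does not affect your bounds.
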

\begin{proof}
We provide a reduction from \texttt{3-SAT} to \texttt{minST}. Given an instance $\mathcal{I}$ of \texttt{3-SAT} with at most $n$ clauses or variables, we create the corresponding graph from \cite{gilbert1980pebbling}. Additionally, we append a chain of length $300n^3+6n^2+40n+100$ to the graph with an edge from the sink node $q_n$ from \cite{gilbert1980pebbling} to the first node in our chain. By adding a chain of length $300n^3+6n^2+40n+100$ we can ensure that for any {\em legal} pebbling $P=(P_0,\ldots,P_t) \in \pPeb(G)$ of our graph $G$ we have $t \geq 300n^3+6n^2+40n+100$.  By \lemmref{TQBF:moves}, at most $6n^2+40n$ moves are necessary to pebble the $3-SAT$ portion of the graph, while the chain requires exactly $300n^3+6n^2+40n+100$ moves. Thus, if $\mathcal{I}$ is satisfiable then we can find a legal pebbling $P=(P_0,\ldots,P_t)$ with space $\max_{i \leq t} |P_i| \leq 3n+3$ and $t \leq 300n^3+12n^2+80n+100$ moves. First, pebble the sink $q_n$ in $t' = 6n^2+40n$ steps and $\max_{i \leq t'} |P_i| \leq 3n+3$ space by \lemmref{TQBF:moves} and then walk single pebble down the chain in $300n^3+6n^2+40n+10$ steps keeping at most one pebble on the DAG in each step. The space time cost is at most $\ST(P) \leq 900n^4+936n^3+276n^2+540n+300$. If  $\mathcal{I}$ is not satisfiable then, by \lemmref{peb:number} for {\em any legal} pebbling $P=(P_0,\ldots,P_t) \in \pPeb(G)$ of our graph we have $\max_{i \leq t} |P_i| \geq 3n+4$ and $t \geq 300n^3+6n^2+40n+100$. Thus, $\ST(P) \geq 900n^4+1218 n^3+144 n^2 +460n + 400$ we have 
\[900n^4+1218 n^3+144 n^2 +460n + 400 - \big(900n^4+936n^3+276 n^2+540n+300 \big) > 0 \ . \]
for all $n>0$.  Thus,  $\mathcal{I}$ is satisfiable if and only if there exists a legal pebbling with $\ST(P) \leq  900n^4+936n^3+276 n^2+540n+300$. Clearly, this reduction can be done in polynomial time, and so there is indeed a polynomial time reduction from \texttt{3-SAT} to \texttt{minST}.
\end{proof}

\noindent
We note that the same reduction from \textsf{TQBF} to \texttt{minST} fails, since there exist instances of TQBF where the pebbling time is  $2^{\Omega(n)}$.
However, the same relationships do hold for sequential pebbling.

The proof of \thmref{minSSTNPComplete} is the same as the proof of \thmref{minSTNPComplete} because we can exploit the fact that the pebbling of $G_{TQBF}$ from \lemmref{TQBF:moves} is sequential. 
\begin{theorem} \thmlab{minSSTNPComplete}
$\minSST$ is $\NPcomplete$.
\end{theorem}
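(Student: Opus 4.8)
The plan is to reuse, essentially verbatim, the reduction from \texttt{3-SAT} to $\minST$ from the proof of \thmref{minSTNPComplete}, checking at each step that nothing relies on parallelism. First I would argue $\minSST \in \NP$: for any DAG $G$ on $n$ nodes, pebbling $G$ in topological order without ever removing a pebble is a legal \emph{sequential} pebbling $P$ with $\ST(P) \le n \cdot n = n^2$, so an optimal sequential pebbling never needs more than $n^2$ rounds; a candidate sequential pebbling is therefore a polynomial-size certificate whose legality and space-time cost are checkable in polynomial time. For hardness, given a \texttt{3-SAT} instance $\mathcal{I}$ with at most $n$ variables and clauses I would build the same graph $G$: take the DAG $G_{TQBF}$ of Gilbert \etal~\cite{gilbert1980pebbling} and append a chain of length $300n^3+6n^2+40n+100$ with an edge from the sink $q_n$ into the head of the chain. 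As in \thmref{minSTNPComplete}, the chain forces every legal pebbling (sequential or parallel) of $G$ to use at least $300n^3+6n^2+40n+100$ rounds.

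The only place the sequential/parallel distinction could matter is the ``easy'' direction, so I would check it explicitly. The pebbling constructed in \lemmref{TQBF:moves} is \emph{already sequential} ($P \in \Peb(G_{TQBF})$), using $\max_i|P_i| \le 3n+3$ and at most $6n^2+40n$ moves; walking a single pebble down the appended chain is a one-new-pebble-per-round process and hence also sequential. Concatenating the two gives a legal $P \in \Peb_G$ with $\max_i|P_i| \le 3n+3$ and $t \le 300n^3+12n^2+80n+100$, so if $\mathcal{I}$ is satisfiable then $\sst(G) \le (3n+3)(300n^3+12n^2+80n+100) = 900n^4+936n^3+276n^2+540n+300$ --- exactly the threshold used in \thmref{minSTNPComplete}.

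For the converse, the key observation is the trivial inclusion $\Peb_G \subseteq \pPeb_G$, which gives $\sst(G) = \min_{P \in \Peb_G}\ST(P) \ge \min_{P \in \pPeb_G}\ST(P)$; hence the parallel lower bound from \thmref{minSTNPComplete} applies a fortiori. Concretely, if $\mathcal{I}$ is unsatisfiable then by \lemmref{peb:number} (whose analysis extends to parallel pebblings, as noted in the remark preceding it) every legal pebbling of $G$ has $\max_i|P_i| \ge 3n+4$, while the chain forces $t \ge 300n^3+6n^2+40n+100$, so $\sst(G) \ge (3n+4)(300n^3+6n^2+40n+100) = 900n^4+1218n^3+144n^2+460n+400$. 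Since $900n^4+1218n^3+144n^2+460n+400 > 900n^4+936n^3+276n^2+540n+300$ for every $n>0$, the value $k = 900n^4+936n^3+276n^2+540n+300$ separates the satisfiable and unsatisfiable cases, and the whole reduction is plainly polynomial time. I do not expect a genuine obstacle here: the only subtlety is to confirm that the upper-bound pebbling never uses parallelism (which it does not, since \lemmref{TQBF:moves} is stated for sequential pebblings), and the polynomial constants are inherited unchanged from \thmref{minSTNPComplete}.
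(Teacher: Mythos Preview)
Your proposal is correct and follows exactly the approach the paper takes: the paper's entire proof of \thmref{minSSTNPComplete} is the one-line observation that the argument for \thmref{minSTNPComplete} goes through unchanged because the pebbling supplied by \lemmref{TQBF:moves} is already sequential. You have spelled out the details (NP membership, the inclusion $\Peb_G \subseteq \pPeb_G$ for the lower bound) more carefully than the paper does, but the reduction, the threshold $k$, and the key observation are identical.
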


\section{Discussion of Other Techniques} \seclab{OtherTechniques}
In this section, we address a number of seemingly similar problems. 
In particular, several related graph partitioning problems actually have fundamentally different structures.

In the $d$-Vertex Separator problem, the goal is to remove the minimum number of vertices so that each remaining connected component has at most $d$ vertices. 
This problem fails to translate to a successful solution for $(e,d)$-reducibility, as a binary tree of height $d-1$ with all directed edges pointing from parents to leaves has an exponential number of vertices, but requires no removal of vertices to ensure depth less than $d$. 

In the $d$-Distance Minimum Vertex Cover problem, the goal is to find the smallest subset $S$ of vertices so that all vertices are at most distance $d$ from a vertex in $S$. 
Of course, even if all vertices are within distance $1$ from a vertex in $S$, the depth of $G-S$ can be as large as $\frac{n}{2}$. 
Consider, for example, a path of length $n$, with additional edges $(i,i+2)$ for each $i\le n-2$. 
Then even removing all even or all odd labeled vertices leaves a path of length $\frac{n}{2}$.

Recently, Lee \cite{Lee17} proposed an $O\left(\log~d\right)$-approximation algorithm to the $d$-Path Transversal problem (also called $d$-path cover by Bresar et al.~\cite{BresarKKS11}) for undirected graphs with parameterized complexity $2^{O(d^3\log d)}n^{O(1)}$. The goal is to remove the minimum number of vertices so that the resulting graph contains no (undirected) paths of length $d$. It is not clear whether or not the techniques could be extended to deal with directed graphs. Furthermore, in most cryptanalysis applications we will have $d=\Omega(\sqrt[3]{n})$ so the approximation algorithm would run in exponential time. 

\ignore{\paragraph{Reduction.} We note that $d$-Path cover/traversal problem can be reduced to $\MINREDUCIBLE_d$. Thus, $\MINREDUCIBLE_d$ is at least as hard to approximate as the $d$-path cover problem. The basic idea is to create $d$ copies $v^1,\ldots, v^d$ of each vertex $v \in V$. If the original (undirected) graph $G$ has an undirected edge $\{u,v\}$ then we add each of the directed edges $(u^i,v^{i+1})$ and $(v^i,u^{i+1})$ for each $i < d$ to the DAG $D$. It is easy to see that any $d$-path cover $S \subseteq V$ in $G$ corresponds to a depth reducing set $S'$ of the same size such that $\depth(D-S') \leq d$ (if $v \in S$ the add $v^1$ to $S'$). Similarly, any depth reducing set $S'$ for $D$ can be mapped to a $d$-path cover $|S| \leq |S'|$.}

Another possible approach to build an approximation algorithm for $\MINREDUCIBLE_d$ would be to exploit tree embeddings by transforming the DAG $G$ into a tree using the longest path metric and then find a depth-reducing set for the resulting tree. However, the longest path between two vertices in directed graphs does not qualify as a metric, and it is not immediately obvious how to address this issue. Furthermore, even if one could find a tree embedding for DAGs which approximately preserves distances under the longest path metric it is not clear how to use a depth-reducing set for the tree to produce a depth-reducing set in the original DAG. In particular, observe that a complete DAG and a path of length $n$ might both yield a simple path after performing the tree embedding under the longest path metric. However, the size of the depth-reducing sets of the two instances differ drastically. 

\subsection*{Why Doesn't the Gilbert \etal~Reduction Work for Cumulative Cost?}

The construction from \cite{gilbert1980pebbling} is designed to minimize the number of pebbles simultaneously on the graph, at the expense of larger number of necessary time steps and/or a larger average number of pebbles on the graph during each pebbling round. As a result, one can bypass several time steps by simply adding additional pebbles during some time step. For example, it may be beneficial to temporarily keep pebbles on all three nodes $x_i$, $\overline{x}_i$ and $\overline{x}_i'$ at times so that we can avoid repebbling the $s_{i}-2$-pyramid later. Also if we do not need the pebble on node $x_i$ for the next {$s_i \choose 2$} steps then it is better to discard any pebbles on $x_i'$ and $x_i$ entirely to reduce cumulative cost. Because we would need to repebble the $s_i$-pyramid later our maximum space usage will increase, but our cumulative cost would decrease.  Our reduction to space-time cost works because $\ST$ cost is highly sensitive to an increase in the number of pebbles on the graph even if this increase is temporary. Cumulative, unlike space cost or space-time cost, is not very sensitive to such temporary increases in the number of pebbles on the graph. 

\begin{figure}[ht]
\centering
\begin{tikzpicture}[scale=1.2]
\draw (0cm,0cm) circle (0.1cm);
\draw[black] (-0.05cm,-0.0866cm) -- (-0.5cm,-0.866cm) -- (0.5cm,-0.866cm) -- (0.05cm,-0.0866cm);
\node at (0cm,-0.7cm){$5$};

\draw (1cm,0cm) circle (0.1cm);
\draw (1cm,-1cm) circle (0.1cm);
\draw (1cm,1cm) circle (0.1cm);
\draw (1cm,2cm) circle (0.1cm);
\draw[->, black] (0.1cm,0cm) -- (0.9cm,0cm);
\draw[->, black] (1cm,-0.9cm) -- (1cm,-0.1cm);
\draw[->, black] (1cm,0.1cm) -- (1cm,0.9cm);
\draw[->, black] (1cm,1.1cm) -- (1cm,1.9cm);

\draw (2cm,-1cm) circle (0.1cm);
\draw[black] (1.95cm,-1.0866cm) -- (1.5cm,-1.866cm) -- (2.5cm,-1.866cm) -- (2.05cm,-1.0866cm);
\node at (2cm,-1.7cm){$4$};

\draw (2cm,0cm) circle (0.1cm);
\draw (2cm,1cm) circle (0.1cm);
\draw[->, black] (2cm,-0.9cm) -- (2cm,-0.1cm);
\draw[->, black] (2cm,0.1cm) -- (2cm,0.9cm);

\draw (3cm,0cm) circle (0.1cm);
\draw[black] (2.95cm,-0.0866cm) -- (2.5cm,-0.866cm) -- (3.5cm,-0.866cm) -- (3.05cm,-0.0866cm);
\node at (3cm,-0.7cm){$6$};

\draw (3cm,1cm) circle (0.1cm);
\draw[->, black] (3cm,0.1cm) -- (3cm,0.9cm);

\draw[->, black] (2.9cm,0cm) -- (2.1cm,0cm);
\draw[->, black] (1.9cm,1cm) -- (1.1cm,-1cm);
\draw[->, black] (1.9cm,0cm) -- (1.1cm,1cm);
\draw[->, black] (3cm,1.1cm) -- (1.1cm,2cm);

\node[left] at (0.9cm,2cm){$q_4$};
\node[right] at (3.1cm,1cm){$x_4$};
\node[above right] at (2cm,1cm){$\overline{x}_4$};

\draw (0cm,4cm) circle (0.1cm);
\draw[black] (-0.05cm,3.9134cm) -- (-0.5cm,3.134cm) -- (0.5cm,3.134cm) -- (0.05cm,3.9134cm);
\node at (0cm,3.3cm){$8$};

\draw (1cm,4cm) circle (0.1cm);
\draw (1cm,3cm) circle (0.1cm);
\draw (1cm,2cm) circle (0.1cm);
\draw (1cm,5cm) circle (0.1cm);
\draw (1cm,6cm) circle (0.1cm);
\draw[->, black] (0.1cm,4cm) -- (0.9cm,4cm);
\draw[->, black] (1cm,2.1cm) -- (1cm,2.9cm);
\draw[->, black] (1cm,3.1cm) -- (1cm,3.9cm);
\draw[->, black] (1cm,4.1cm) -- (1cm,4.9cm);
\draw[->, black] (1cm,5.1cm) -- (1cm,5.9cm);

\draw (2cm,3cm) circle (0.1cm);
\draw[black] (1.95cm,2.9134cm) -- (1.5cm,2.134cm) -- (2.5cm,2.134cm) -- (2.05cm,2.9134cm);
\node at (2cm,2.3cm){$7$};

\draw (2cm,4cm) circle (0.1cm);
\draw (2cm,5cm) circle (0.1cm);
\draw[->, black] (2cm,3.1cm) -- (2cm,3.9cm);
\draw[->, black] (2cm,4.1cm) -- (2cm,4.9cm);

\draw (3cm,4cm) circle (0.1cm);
\draw[black] (2.95cm,3.9134cm) -- (2.5cm,3.134cm) -- (3.5cm,3.134cm) -- (3.05cm,3.9134cm);
\node at (3cm,3.3cm){$9$};

\draw (3cm,5cm) circle (0.1cm);
\draw[->, black] (3cm,4.1cm) -- (3cm,4.9cm);

\draw[->, black] (2.9cm,4cm) -- (2.1cm,4cm);
\draw[->, black] (1.9cm,5cm) -- (1.1cm,3cm);
\draw[->, black] (1.9cm,4cm) -- (1.1cm,5cm);
\draw[->, black] (3cm,5.1cm) -- (1.1cm,6cm);

\node[left] at (0.9cm,6cm){$q_3$};
\node[right] at (3.1cm,5cm){$x_3$};
\node[above right] at (2cm,5cm){$\overline{x}_3$};

\draw (0cm,8cm) circle (0.1cm);
\draw[black] (-0.05cm,7.9134cm) -- (-0.5cm,7.134cm) -- (0.5cm,7.134cm) -- (0.05cm,7.9134cm);
\node at (0cm,7.3cm){$11$};

\draw (1cm,8cm) circle (0.1cm);
\draw (1cm,7cm) circle (0.1cm);
\draw (1cm,6cm) circle (0.1cm);
\draw (1cm,9cm) circle (0.1cm);
\draw (1cm,10cm) circle (0.1cm);
\draw[->, black] (0.1cm,8cm) -- (0.9cm,8cm);
\draw[->, black] (1cm,6.1cm) -- (1cm,6.9cm);
\draw[->, black] (1cm,7.1cm) -- (1cm,7.9cm);
\draw[->, black] (1cm,8.1cm) -- (1cm,8.9cm);
\draw[->, black] (1cm,9.1cm) -- (1cm,9.9cm);

\draw (2cm,7cm) circle (0.1cm);
\draw[black] (1.95cm,6.9134cm) -- (1.5cm,6.134cm) -- (2.5cm,6.134cm) -- (2.05cm,6.9134cm);
\node at (2cm,6.3cm){$10$};

\draw (2cm,8cm) circle (0.1cm);
\draw (2cm,9cm) circle (0.1cm);
\draw[->, black] (2cm,7.1cm) -- (2cm,7.9cm);
\draw[->, black] (2cm,8.1cm) -- (2cm,8.9cm);

\draw (3cm,8cm) circle (0.1cm);
\draw[black] (2.95cm,7.9134cm) -- (2.5cm,7.134cm) -- (3.5cm,7.134cm) -- (3.05cm,7.9134cm);
\node at (3cm,7.3cm){$12$};

\draw (3cm,9cm) circle (0.1cm);
\draw[->, black] (3cm,8.1cm) -- (3cm,8.9cm);

\draw[->, black] (2.9cm,8cm) -- (2.1cm,8cm);
\draw[->, black] (1.9cm,9cm) -- (1.1cm,7cm);
\draw[->, black] (1.9cm,8cm) -- (1.1cm,9cm);
\draw[->, black] (3cm,9.1cm) -- (1.1cm,10cm);

\node[left] at (0.9cm,10cm){$q_2$};
\node[right] at (3.1cm,9cm){$x_2$};
\node[above right] at (2cm,9cm){$\overline{x}_2$};

\draw (0cm,12cm) circle (0.1cm);
\draw[black] (-0.05cm,11.9134cm) -- (-0.5cm,11.134cm) -- (0.5cm,11.134cm) -- (0.05cm,11.9134cm);
\node at (0cm,11.3cm){$14$};

\draw (1cm,12cm) circle (0.1cm);
\draw (1cm,11cm) circle (0.1cm);
\draw (1cm,10cm) circle (0.1cm);
\draw (1cm,13cm) circle (0.1cm);

\filldraw[shading=radial,inner color=white, outer color=gray!75, opacity=0.2] (1cm,14cm) circle (0.1cm);
\draw[->, black] (0.1cm,12cm) -- (0.9cm,12cm);
\draw[->, black] (1cm,10.1cm) -- (1cm,10.9cm);
\draw[->, black] (1cm,11.1cm) -- (1cm,11.9cm);
\draw[->, black] (1cm,12.1cm) -- (1cm,12.9cm);
\draw[->, black] (1cm,13.1cm) -- (1cm,13.9cm);

\draw (2cm,11cm) circle (0.1cm);
\draw[black] (1.95cm,10.9134cm) -- (1.5cm,10.134cm) -- (2.5cm,10.134cm) -- (2.05cm,10.9134cm);
\node at (2cm,10.3cm){$13$};

\draw (2cm,12cm) circle (0.1cm);
\draw (2cm,13cm) circle (0.1cm);
\draw[->, black] (2cm,11.1cm) -- (2cm,11.9cm);
\draw[->, black] (2cm,12.1cm) -- (2cm,12.9cm);

\draw (3cm,12cm) circle (0.1cm);
\draw[black] (2.95cm,11.9134cm) -- (2.5cm,11.134cm) -- (3.5cm,11.134cm) -- (3.05cm,11.9134cm);
\node at (3cm,11.3cm){$15$};

\draw (3cm,13cm) circle (0.1cm);
\draw[->, black] (3cm,12.1cm) -- (3cm,12.9cm);

\draw[->, black] (2.9cm,12cm) -- (2.1cm,12cm);
\draw[->, black] (1.9cm,13cm) -- (1.1cm,11cm);
\draw[->, black] (1.9cm,12cm) -- (1.1cm,13cm);
\draw[->, black] (3cm,13.1cm) -- (1.1cm,14cm);

\node[left] at (0.8cm,14cm){$q_1$: Sink};
\node[right] at (3.1cm,13cm){$x_1$};
\node[above right] at (2cm,13cm){$\overline{x}_1$};

\draw (6cm,10cm) circle (0.1cm);
\node[right] at (6.15cm,10cm){$p_0$};
\draw[->, black] (6cm,9.9cm) -- (6cm,9.15cm);
\draw (6cm,9cm) circle (0.1cm);
\draw (6cm,8cm) circle (0.1cm);
\draw (6cm,7cm) circle (0.1cm);
\draw (7cm,8.5cm) circle (0.1cm);
\draw (7cm,7.5cm) circle (0.1cm);
\draw (8cm,8cm) circle (0.1cm);
\node[right] at (8.15cm,8cm){$p_1$};
\draw[->, black] (6.1cm,9cm) -- (6.85cm,8.6cm);
\draw[->, black] (6.1cm,8cm) -- (6.85cm,8.4cm);
\draw[->, black] (6.1cm,8cm) -- (6.85cm,7.6cm);
\draw[->, black] (6.1cm,7cm) -- (6.85cm,7.4cm);
\draw[->, black] (7.1cm,8.5cm) -- (7.85cm,8.1cm);
\draw[->, black] (7.1cm,7.5cm) -- (7.85cm,7.9cm);

\draw[->, black] (3.1cm,13cm) -- (5.85cm,9cm);
\draw[->, black] (3.1cm,13cm) -- (5.85cm,8.1cm);
\draw[->, black] (3.1cm,5cm) -- (5.85cm,7.9cm);
\draw[->, black] (3.1cm,5cm) -- (5.85cm,6.9cm);
\draw[->, black] (3.1cm,1cm) -- (6cm,6.85cm);

\draw[->, black] (8cm,7.9cm) -- (7cm,4.15cm);
\draw (7cm,4cm) circle (0.1cm);
\draw (7cm,3cm) circle (0.1cm);
\draw (7cm,2cm) circle (0.1cm);
\draw (8cm,3.5cm) circle (0.1cm);
\draw (8cm,2.5cm) circle (0.1cm);
\draw (9cm,3cm) circle (0.1cm);
\node[right] at (9.15cm,3cm){$p_2=q_5$};
\draw[->, black] (7.1cm,4cm) -- (7.85cm,3.6cm);
\draw[->, black] (7.1cm,3cm) -- (7.85cm,3.4cm);
\draw[->, black] (7.1cm,3cm) -- (7.85cm,2.6cm);
\draw[->, black] (7.1cm,2cm) -- (7.85cm,2.4cm);
\draw[->, black] (8.1cm,3.5cm) -- (8.85cm,3.1cm);
\draw[->, black] (8.1cm,2.5cm) -- (8.85cm,2.9cm);

\draw[->, black] (3.1cm,9cm) -- (6.85cm,4cm);
\draw[->, black] (3.1cm,9cm) -- (6.85cm,3.1cm);
\draw[->, black] (3.1cm,5cm) -- (6.85cm,2.9cm);
\draw[->, black] (3.1cm,5cm) -- (6.85cm,2.15cm);
\draw[->, black] (2.1cm,1cm) -- (6.85cm,1.85cm);

\draw[->, black] (9cm,2.85cm) -- (9cm,-2.5cm) -- (1cm,-2.5cm) -- (1cm,-1.15cm);
\end{tikzpicture}
\caption{Graph $G_{TQBF}$ for $\exists x_1, x_2,x_3,x_4$ s.t. $(x_1\lor x_2\lor x_4)\land(x_2\lor x_3\lor\overline{x}_4)$.}\label{fig:graph}
\end{figure}

\end{document}